\numberwithin{equation}{section}
\def\algbackskip{\hskip-\ALG@thistlm}
\def\thanks#1{\protected@xdef\@thanks{\@thanks
        \protect\footnotetext{#1}}}
\definecolor{dgreen}{RGB}{0,150,0}
\newcommand{\bluebf}[1]{\textcolor{ceruleanblue}{{\bf{#1}}}}
\let\Oldsection\section{}
\renewcommand{\section}{\FloatBarrier\Oldsection}
\let\Oldsubsection\subsection{}
\renewcommand{\subsection}{\FloatBarrier\Oldsubsection}
\let\Oldsubsubsection\subsubsection{}
\renewcommand{\subsubsection}{\FloatBarrier\Oldsubsubsection}
\newtheorem{theorem}{Theorem}[section]
\newtheorem{corollary}{Corollary}[theorem]
\newtheorem{proposition}[theorem]{Proposition}
\newtheorem{lemma}[theorem]{Lemma}
\theoremstyle{definition}
\newtheorem{definition}[theorem]{Definition}
\theoremstyle{remark}
\newtheorem{remark}{Remark}[section]
\theoremstyle{definition}
\newtheorem{example}{Example}[section]
\definecolor{steelblue}{rgb}{0.27, 0.51, 0.71}
\definecolor{mediumelectricblue}{rgb}{0.01, 0.31, 0.59}
\definecolor{halayaube}{rgb}{0.4, 0.22, 0.33}
\definecolor{eggplant}{rgb}{0.38, 0.25, 0.32}
\definecolor{darkraspberry}{rgb}{0.53, 0.15, 0.34}
 \definecolor{bluepigment}{rgb}{0.2, 0.2, 0.6}
 \definecolor{spirodiscoball}{rgb}{0.06, 0.75, 0.99}
 \definecolor{vividburgundy}{rgb}{0.62, 0.11, 0.21}
 \definecolor{mediumtealblue}{rgb}{0.0, 0.33, 0.71}
 \definecolor{mediumvioletred}{rgb}{0.78, 0.08, 0.52}
 \definecolor{egyptianblue}{rgb}{0.06, 0.2, 0.65}
 \definecolor{dodgerblue}{rgb}{0.12, 0.56, 1.0}
 \definecolor{ceruleanblue}{rgb}{0.16, 0.32, 0.75}
 \definecolor{darktangerine}{rgb}{1.0, 0.66, 0.07}
 \definecolor{charcoal}{rgb}{0.21, 0.27, 0.31}
 \definecolor{persianindigo}{rgb}{0.2, 0.07, 0.48}
 \definecolor{orangepeel}{rgb}{1.0, 0.62, 0.0}
 \definecolor{awesome}{rgb}{1.0, 0.13, 0.32}
 \definecolor{green(ncs)}{rgb}{0.0, 0.62, 0.42}
\DeclareMathAlphabet{\mathpzc}{OT1}{pzc}{m}{it}
\newcommand{\Ex}{\mathbb{E}}
\newcommand{\N}{\mathbb{N}}
\renewcommand{\P}{\mathbb{P}}
\newcommand{\R}{\mathbb{R}}
\newcommand{\Z}{\mathbb{Z}}
\newcommand{\bX}{\mathbb{X}}
\newcommand{\A}{\mathcal{A}}
\newcommand{\F}{\mathcal{F}}
\renewcommand{\L}{\mathcal{L}}
\newcommand{\X}{\mathcal{X}}
\newcommand{\W}{\mathcal{W}}
\DeclareMathOperator*{\argmax}{argmax}
\let\norm\relax
\DeclarePairedDelimiter{\norm}{\lVert}{\rVert}
\DeclarePairedDelimiterX{\inp}[2]{\langle}{\rangle}{#1, #2}
\definecolor{gray75}{gray}{0.75}
\titleformat{\chapter}[hang]{\Huge\bfseries}{\thechapter\hspace{20pt}\textcolor{gray75}{|}\hspace{20pt}}{0pt}{\Huge\bfseries}
\titlespacing*{\chapter}      {0pt}{0pt}{25pt}
\def\ps@myheadings{%
    \let\@oddfoot\@empty\let\@evenfoot\@empty
    \def\@evenhead{\thepage\hfil\slshape\leftmark}%
    \def\@oddhead{{\slshape\rightmark}\hfil\thepage}%
    \let\@mkboth\@gobbletwo
    \let\sectionmark\@gobble
    \let\subsectionmark\@gobble
    }
  \renewcommand\maketitle{\begin{titlepage}%
  \let\footnotesize\small
  \let\footnoterule\relax
  \let \footnote \thanks
  \null\vfil
  \vskip 60\p@
  \begin{center}%
    {\LARGE \@title \par}%
    \vskip 3em%
    {\large
     \lineskip .75em%
      \begin{tabular}[t]{c}%
        \@author
      \end{tabular}\par}%
      \vskip 1.5em%
    {\large \@date \par}
  \end{center}\par
  \@thanks
  \vfil\null
  \end{titlepage}%
  \setcounter{footnote}{0}%
}
\renewcommand\maketitle{\par
  \begingroup
    \renewcommand\thefootnote{\@fnsymbol\c@footnote}%
    \def\@makefnmark{\rlap{\@textsuperscript{\normalfont\@thefnmark}}}%
    \long\def\@makefntext##1{\parindent 1em\noindent
            \hb@xt@1.8em{%
                \hss\@textsuperscript{\normalfont\@thefnmark}}##1}%
    \if@twocolumn
      \ifnum \col@number=\@ne
        \@maketitle
      \else
        \twocolumn[\@maketitle]%
      \fi
    \else
      \newpage
      \global\@topnum\z@   
      \@maketitle
    \fi
    \thispagestyle{plain}\@thanks
  \endgroup
  \setcounter{footnote}{0}%
}
\def\keywordname{{\bfseries \emph{Keywords}}}%
\def\keywords#1{\par\addvspace\medskipamount{\rightskip=0pt plus1cm
\def\and{\ifhmode\unskip\nobreak\fi\ $\cdot$
}\noindent\keywordname\enspace\ignorespaces#1\par}}
\date{today} 
\thanks{The authors would like to thank William F. Turner for helpful comments as well as Johannes Muhle-Karbe, Cristopher Salvi and Joseph Mulligan for fruitful discussions throughout the writing of this paper. OF and BH thankfully acknowledge the funding of this research by Atlantic House Investments \& EPSRC Oxford-ICL Centre of Doctoral Studies in Mathematics of Random Systems.}
\begin{document}

\pagenumbering{roman}
\setcounter{page}{1}

\author[1]{\Large Owen Futter}
\author[2,3]{Blanka Horvath}
\author[4]{Magnus Wiese}

\affil[1]{\normalsize Imperial College London, Department of Mathematics}
\affil[2]{\normalsize University of Oxford, Mathematical Institute and Oxford Man Insititute}
\affil[3]{\normalsize The Alan Turing Institute}
\affil[4]{\normalsize University of Kaiserslautern, Department of Mathematics}


\title{\vspace{-0.8cm} \endgraf\rule{\textwidth}{.1pt} \\ \tiny \text{ } \\ \vspace{0.3cm} \textbf{\LARGE  Signature Trading: A Path-Dependent Extension of the \\
\vspace{0.15cm} Mean-Variance Framework with Exogenous Signals} \vspace{0.5cm} \\ \endgraf\rule{\textwidth}{.1pt}} 

\date{ }

\pagenumbering{arabic}
\setcounter{page}{1}

\maketitle

\vspace{-0.3cm}
\begin{abstract}
    \noindent In this article we introduce a portfolio optimisation framework, in which the use of \textit{rough path signatures} \cite{Lyons1998DifferentialSignals.} provides a novel method of incorporating path-dependencies in the joint signal-asset dynamics, naturally extending traditional factor models, while keeping the resulting formulas lightweight, tractable and easily interpretable.
    Specifically, we achieve this by representing a trading strategy as a linear functional applied to the \textit{signature of a path} (which we refer to as \emph{``Signature Trading''} or \emph{``Sig-Trading''}). This allows the modeller to efficiently encode the evolution of past time-series observations into the optimisation problem.
    In particular, we derive a concise formulation of the dynamic mean-variance criterion alongside an explicit solution in our setting, which naturally incorporates a drawdown control in the optimal strategy over a finite time horizon.
    Secondly, we draw parallels between classical portfolio stategies and \textit{Sig-Trading} strategies and explain how the latter leads to a pathwise extension of the classical setting via the \textit{``Signature Efficient Frontier''}.
    Finally, we give explicit examples when trading under an exogenous signal as well as examples for momentum and pair-trading strategies, demonstrated both on synthetic and market data.
    Our framework combines the best of both worlds between classical theory (whose appeal lies in clear and concise formulae) and between modern, flexible data-driven methods (usually represented by ML approaches) that can handle more realistic datasets.
    The advantage of the added flexibility of the latter is that one can bypass common issues such as the accumulation of heteroskedastic and asymmetric residuals during the optimisation phase. 
    Overall, \textit{Sig-Trading} combines the flexibility of data-driven methods without compromising on the clarity of the classical theory and our presented results provide a compelling toolbox that yields superior results for a large class of trading strategies.

\end{abstract}

\keywords{\small Mean-Variance Optimisation \and Signature Methods \and Data-Driven Methods \and Dynamic Trading Strategies \and Path-Dependent Signals \and Statistical Arbitrage \and Momentum Strategies \and Stochastic Filtering}
{
  \hypersetup{linkcolor=mediumtealblue}
  \tableofcontents
}

\section{Introduction}
The design and construction of trading strategies is a fundamental aspect of finance and subsequently, has been extensively researched in the past decades. Creating a trading strategy can mainly be divided into two areas: extracting \textit{alpha} and allocating the associated \textit{risk}. Methods of extracting alpha depend on the trade horizon or trade frequency and are often achieved through statistical techniques, which can then be incorporated into a parametric model. Meanwhile, allocating the associated risk generally depends on your choice of model and objective criterion; conventionally, optimisation methods are deployed to find a transformation of the signal-asset dynamics that maximises the chosen utility function of the strategy PnL, with respect to the underlying model. The work  (\cite{Markowitz1952PortfolioSelection}) of Markowitz introduced modern portfolio theory based on portfolio allocation determined by investors' preferences to risk and returns, resulting in a well-diversified portfolio. Classical methods consist of fixing a class of probabilistic parametric models and calibrating the model's parameters with respect to the empirical price process. Depending on the choice of parametric model and its corresponding parameters, the optimal portfolio will be different. In well-studied families of models it is possible to find closed-form expressions of the optimal portfolio given certain risk and return preferences, however in more complex parametric models or with more general utility functions, explicit solutions may not be attainable and so numerical techniques are used. \par

\begin{tikzpicture}
    \centering
    
    \hspace{-0.94cm}
    \draw[rounded corners=5pt] (-2.4, 0) rectangle (1.9, 1);
    \node at (-0.25, 0.5) {\parbox{4.5cm}{\centering Observe Signal \& Asset}};
    
    \draw[->] (1.9, 0.5) -- (3.6, 0.5);
    \node at (2.75, 0.9) {Model};
    
    \draw[rounded corners=5pt] (3.6, 0) rectangle (7.8, 1);
    \node at (5.7, 0.5) {\parbox{4cm}{\centering Characterise Dynamics}};

    \draw[->] (7.8, 0.5) -- (9.7, 0.5);
    \node at (8.75, 0.9) {Optimise};

    \draw[rounded corners=5pt] (9.7, 0) rectangle (14.1, 1);
    \node at (11.9, 0.5) {\parbox{6cm}{\centering Maximal Utility Strategy}};
\end{tikzpicture}

We can observe from the above blueprint, that the possible choices of model are vast - for example all possible ways of feature engineering, return prediction, or the choice of probabilistic model. The choice of optimisation technique will then be tailored to the choice of utility function and the model. While well-studied models provide a useful benchmark for practitioners, many of the assumptions made are restrictive. Several modelling assumptions do not reflect stylised facts in practice such as non-stationarity, heavy tails and path-dependent volatility (\cite{Fukasawa2021VolatilityRough, Das2022RoughnessSignals,
Guyon2022VolatilityPath-Dependent, Morel2023PathMonte-Carlo}) - all of which are exhibited by financial time-series data (\cite{Cont2001EmpiricalIssues}). The latter considerations are even more relevant when working with multiple assets since dependence structures are highly non-linear, often with cross-sectional trend- and mean-reversion patterns  occurring in practice. 
Recent research activity has brought modelling approaches to the forefront that are inherently data-driven and provide a highly flexible framework that is able to capture a broader set of asset dynamics than currently used classical stochastic models did. Data-driven techniques for trading have increased in recent years due to the rise of machine learning applications in finance, such as in \cite{Zhang2020DeepTrading, Wang2019PortfolioLearning, Zhang2020DeepOptimization, Jaimungal2021RobustLearning, VanStaden2021ACosts, Pretorius2022DeepManagement, Guijarro-Ordonez2021DeepArbitrage, Coache2022ConditionallyLearning}, as well as in hedging applications in \cite{Buhler2018DeepHedging, Horvath2021DeepVolatility, Limmer2023RobustGANs}. The requirement for complex and accurate synthetic data to train and test trading frameworks has also led to extensive research in developing market generators (\cite{Buehler2021GeneratingSignatures, Wiese2019QuantSeries, Ni2020ConditionalGenerationb, Ni2021Sig-WassersteinGeneration, Issa2023Non-adversarialScores}).
In this work, we go a step further than just modelling asset dynamics in a model-free way and take the approach of utilising rough path theory (\cite{Lyons2014RoughStreams, Friz2010MultidimensionalPaths}) also to develop a trading strategy that is determined by the expected signature of the joint signal-asset process. Since the aforementioned expected signature uniquely determines the law of the stochastic process (\cite{Chevyrev2013CharacteristicPaths}), it is immediate to see how the techniques simplify to the classical case when applied to traditional stochastic models. By taking a pathwise approach to trading, this naturally alleviates probabilistic restrictions, resulting in a \textit{model-free} or model-agnostic setup (\cite{Chiu2023AFinance}). A pathwise setting has been used in \cite{Perkowski2016PathwiseFinance, Riga2016ATrading, Riga2016PathwiseFinance, Ananova2023Model-freeStrategies}, and has more recently been utilised to handle more general portfolios in \cite{Allan2021Model-freeApproach}, as well as in applications to derivative pricing and calibration in \cite{Cuchiero2022Signature-basedCalibration} and optimal stopping problems in \cite{Bayer2021OptimalSignatures}. \par
 Inspired by the work of Perez et al. in \cite{Arribas2018DerivativesPayoffs, Kalsi2020OptimalSignatures, Lyons2019NumericalSignatures}, we adopt the idea of representing a trading strategy as a linear functional applied to the signature of a path and extend this to to incorporate exogenous market signals and multiple assets. We can think of a trading strategy as a decision made using the knowledge of the current state (e.g. the previous price path, plus some exogenous trading signal); in other words as a function from path space to some decision process. In practice, the true driving processes are most likely not (directly) observable and hidden by layers of noise. Hence the mapping from the previous price process to a trading strategy is often done first by removing noise in the system through stochastic filtering \cite{Bain2009FundamentalsFiltering, Crisan2021PathwiseProblem} (i.e. an exponentially weighted moving average or Kalman filter) and then optimised based on the resulting prediction.  Recently, the authors in \cite{Cohen2023NowcastingMethods} prove how the Kalman filter can in fact be equivalently written as a linear regression on the signature. Work has also been conducted in \cite{Dyer2021DeepModels, Dyer2022ApproximateDiscrepancies} with relation to approximate Bayesian computation using signatures. This naturally prompts the question - can the class of linear functionals on the signature be seen as a rich enough class of maps that represent such trading strategies? We will show that this is possible to due to signatures’ capability to approximate continuous functions on paths. Also perhaps crucially, the Sig-Trading framework does not impose the restriction that the underlying asset or signal be Markovian, allowing the Sig-Trader to capture auto-correlation and mean-reverting behaviours within the process that perhaps some more classical methods are not able to exploit. This enables us to incorporate path-dependent considerations into our trading decisions, while still obtaining a closed-form solution, that is easy to compute and simple to analyse. 

\noindent The paper is organised as follows: In Section~\ref{sec:market_model} we introduce key foundations of Sig-Trading and the concept of the extension to classical factor models. In Section~\ref{sec:original} we present our main result, an analytic solution to the dynamic mean-variance criterion for Sig-Trading and compare this to original factor models, whilst introducing a Sig-Trading version of the efficient frontier. Finally, in Section~\ref{sec:implement} we discuss its implementation and in Section~\ref{sec:results} we highlight the advantages of Sig-Trading, provide intuitive examples and demonstrate its possible use cases in practice, such as pairs trading, momentum, and trading under an exogenous signal. 
\subsection{Background and Motivation}

\subsubsection*{The Objective}

In this article, we are concerned with finding an optimal \textit{systematic} and \textit{dynamic} trading strategy such that the trader continuously updates their position as new information filters in. This is done with no discretion and is defined as a function of the market state, which continuously updates through time, leading to a new position for each time $t$. \par

We denote $T \in \R$ the terminal time. Let $(\Omega, \F, (\F_t)_{t \in [0,T]}, \P)$ be a filtered probability space. Furthermore, denote by $X=(X_t)_{t \in [0,T]}$ a non-negative $\R^d$-valued stochastic process satisfying $X_0^m=1$ for $m \in \{1,\dots,d\}$. We are interested in finding an optimal, predictable dynamic trading strategy $(\xi)_{t \in [0,T]}$ that maximises the expected utility of the PnL
\begin{align}{\label{eq:optimisation}}
    \max_{\substack{(\xi_t)_{t \in [0,T]} \\ \text{s.t constraints}}} \Ex(U(V_T)) 
\end{align}
where $V_T$ is the terminal value of the trading strategy (i.e the PnL)
\begin{align}{\label{eq:PnL}}
    V_T = \sum_{m=1}^d \int_0^T \xi_t^m dX_t^m.
\end{align}
Here, the optimal strategy $\xi$ is a function of the market state (i.e filtration $\F$ at time $t$) and its optimality is with respect to the constraints and utility function that the trader chooses. In order to solve such an optimisation problem, commonly methods are constructed from the following structure:
\begin{enumerate} 
\itemsep-0.5em
    \item Fix a framework to model the underlying dynamics of $X$,
    \item Choose an objective criterion (utility and constraints),
    \item Optimisation with respect to (1) and (2).
\end{enumerate}
Most likely, a trader may be trading under the presence of exogenous information $f$ to enrich the filtration $\F$ (and hence the model of the dynamics of $X$). Recently, extensive research in this direction has been conducted in optimal execution literature, e.g. in a more classical setting in \cite{Lehalle2019IncorporatingTrading, Forde2022OptimalResilience, Sanchez-Betancourt2022BrokersSignals, Cartea2022ExecutionMakers, Bank2023OptimalSignals, Cont2023FastInformation} and with machine learning applications in \cite{Bergault2021Multi-assetDynamics, Cartea2023BanditsSignals}. In this work, we do not consider market impact, but refer the reader to \cite{Kalsi2020OptimalSignatures, Cartea2022Double-ExecutionSignatures} to work in this area involving signatures. 

\subsubsection*{Modelling Dynamics}

The choice of model in (1) often requires explicitly predicting asset returns through supervised learning techniques. Therefore, a large number of models tend to fall into the \textit{predict-then-optimise} framework, where heavy assumptions are made on the asset returns/market factors that are input into any prediction, leaving the final solution exposed to asymmetric and compounded errors. Solutions to these problems are very well studied with specific assumptions and restrictions on the underlying asset process, however without these assumptions this can be much more difficult. 
In such stochastic control problems, the asset dynamics are explained by a diffusion process and dynamic programming can then be used to solve the Hamilton-Jacobi-Bellman (HJB) equation. In practice, the future expected returns (the drift of the process $X$), $\mu_{t+1}$, are often predicted via supervised learning methods, using trading signals or \textit{factors} as statistical predictors which are embedded into the framework itself (\cite{Chamberlain1983ArbitrageMarkets, Ng1992AReturns, Fama1993CommonBonds, Fama2015AModel, Stock2010DynamicModels, Garleanu2013DynamicCosts}). A generic (linear) factor model models the asset returns at time $t$, as
\begin{align} \label{eq:factor_model_predict}
    \mu_{t+1} = \Ex [ r_{t+1} \vert \F_t ] = Bf_t + \varepsilon_{t+1}
\end{align}
where $r$ is the $d$-dimensional asset returns, $B$ is a $d \times N$ matrix of factor coefficients, $f_t$ is a vector of $N$ factor returns and $\varepsilon$ is a vector of the $d$ assets’ (unexplained) residuals returns. This is set up as a supervised linear regression on future returns, as a function of the trading signals/factors. \par

However, when working with financial data, there is generally a very low signal to noise ratio and the residual terms $\epsilon_t$ are badly behaved, violating many statistical assumptions. Due to autocorrelation, non-stationarity and path-dependent volatility in the underlying asset $X$ (and also the signal $f$), this framework can very quickly become problematic, and these asymmetric errors are then compounded in the optimisation phase. In order to capture some of the autocorrelation in the residuals, a trader could incorporate the path into the signal $f$ via stochastic filtering (\cite{Bain2009FundamentalsFiltering, Crisan2021PathwiseProblem}). Traders may also scale returns for volatility in order to remove heteroskedasticity (\cite{Engle2001GARCHEconometrics}), log transform to remove asymmetricity or winsorise to remove fat tails; in \cite{Das2022RoughnessSignals}, the roughness of signals is also discussed. However, there still remains a large amount of discretion in such feature engineering and we will show that the signature can be used efficiently and robustly to tackle these issues in a data-driven manner. To avoid the accumulation of mis-specified error terms from the prediction phase, end-to-end (E2E) approaches using machine learning frameworks have been used in \cite{Costa2022DistributionallyConstruction} and \cite{Zhang2021ALearning} to ensure robustness by bypassing the prediction stage.

\begin{figure}[ht]
\begin{tikzpicture}
    \centering
    \hspace{-0.8cm}
    \draw[rounded corners=5pt] (-2.4, 0) rectangle (0.8, 1.6);
    \node at (-0.8, 0.8) {\parbox{4.5cm}{\centering Signal \& Asset \\ $f_t, X_t$}};
    
    \draw[->] (0.8, 0.8) -- (3.7, 0.8);
    \node at (2.25, 1.35) {\parbox{4cm}{\centering Least Squares Prediction}};
    
    \draw[rounded corners=5pt] (3.7, 0) rectangle (7.7, 1.6);
    \node at (5.7, 0.8) {\parbox{5cm}{\centering Future Returns \\ $\pi(f_t,X_t) = \Ex (X_{t+1} \vert f_t)$}};

    \draw[->] (7.7, 0.8) -- (10.6, 0.8);
    \node at (9.15, 1.2) {Optimisation};

    \draw[rounded corners=5pt] (10.6, 0) rectangle (14.1, 1.6);
    \node at (12.3, 0.8) {\parbox{4cm}{\centering Strategy \\ $\xi_t = \phi^* (\pi(f_t,X_t))$}};

    \draw[mediumtealblue,->] (-0.8, 0) -- (-0.8, -1);
    \draw[mediumtealblue,rounded corners=5pt] (-2.4, -1) rectangle (0.8, -2.6);
    \node at (-0.8, -1.8) {\parbox{4.5cm}{\centering Feature Space \\ $\varphi(f_t,X_t)$}};

    \draw[mediumtealblue,->] (0.8, -1.8) -- (10.6, -1.8);
    \node at (5.7, -1.4) {Optimisation};

    \draw[mediumtealblue,rounded corners=5pt] (10.6, -1) rectangle (14.1, -2.6);
    \node at (12.35, -1.8) {\parbox{4cm}{\centering Strategy \\ $\xi_t = \varphi^*(f_t,X_t))$}};
\end{tikzpicture}
  \caption{\centering Original framework (above) and end-to-end (E2E) optimisation framework (below).}
  \label{fig:end_to_end}
\end{figure}
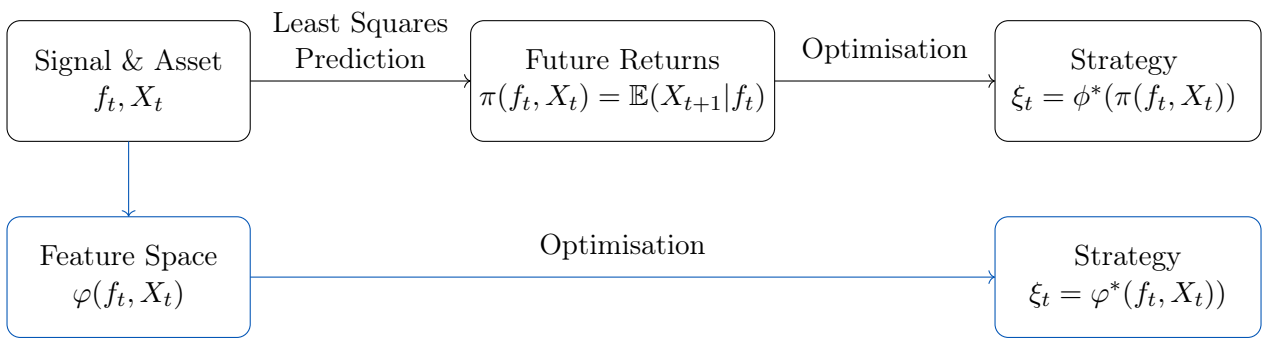

\subsubsection*{Choice of Utility}

Once we have a model that characterises the dynamics of the driving signal and the underlying asset, we can proceed to transforming this into a trading strategy position. How one does this depends on a variety of conditions such as the type of strategy, if we are trading multiple assets, risk preferences, trade frequency and investment horizon. 
Common objective criteria focus on a single trade-by-trade optimisation basis, overlooking the potential path that the trading strategy will take. However as alluded to previously, in practice signals and underlying assets can have strong temporal dependencies and so subsequent trading strategy positions will inherit autocorrelation structure. \par
\begin{figure}[ht]
\centering
  \includegraphics[width=\linewidth]{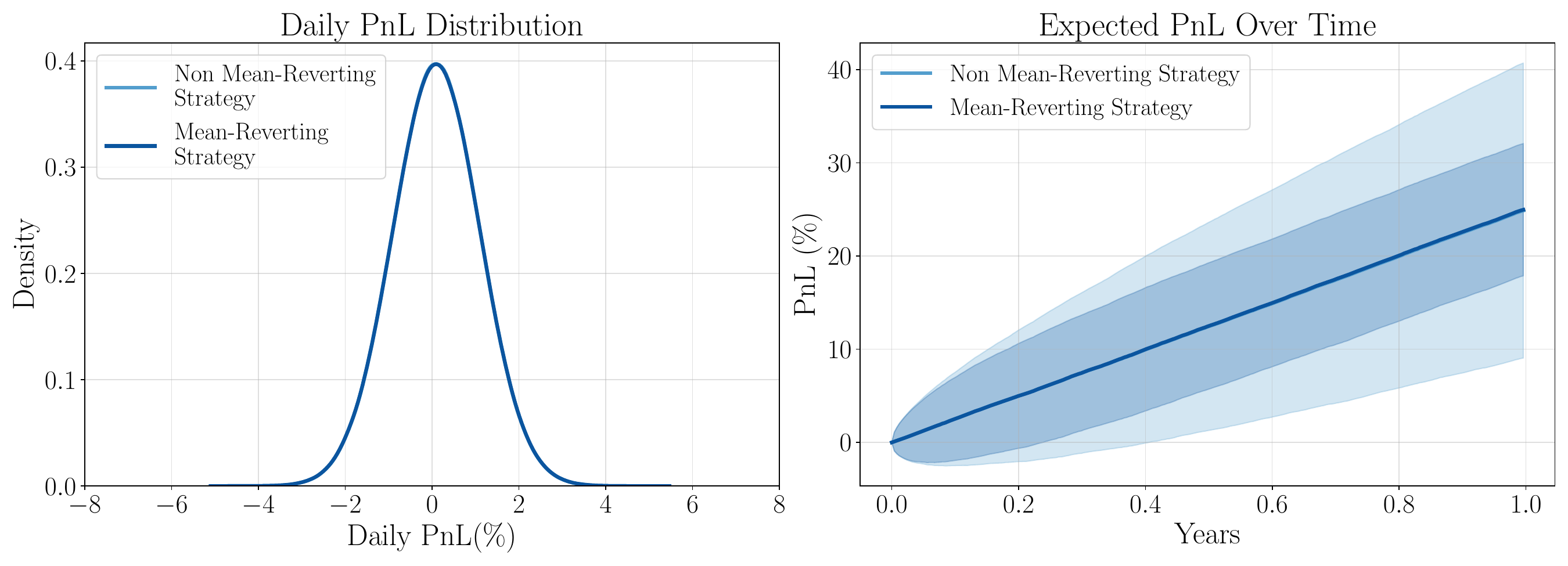}
  \caption{\centering Comparison of the PnL profile through time of a mean-reverting strategy vs a non mean-reverting strategy.}
  \label{fig:dynamic_strategy}
\end{figure}
As a motivating example in Figure~\ref{fig:dynamic_strategy}, we consider the comparison between a strategy that is mean-reverting (has autocorrelation) vs one that doesn't. Both strategies yield identical daily PnL distributions (and hence Sharpe ratio), but have different distributions at a future time due to the temporal structure within the strategy over time. By optimising with respect to a future time horizon, we are inherently optimising for a dynamic criterion that is dependent on the path. This approach naturally integrates a drawdown control in the optimisation, since drawdowns are a path-dependent characteristic. Achieving a strategy with a small maximum drawdown either requires a strong trend-to-noise ratio (high Sharpe ratio) or relies on the strategy being mean-reverting during volatile periods (\cite{Rej2017YouWorrying} provides a neat analysis). \par

Mean-variance optimisation, perhaps the most widely known choice of utility function, was first introduced in Markowitz's thesis (\cite{Markowitz1952PortfolioSelection}). He demonstrated that it was natural to construct an objective function that rewarded positive returns while penalizing associated risk (via variance). Subsequently, this framework has been researched extensively in the literature, including the CAPM asset pricing framework (\cite{Sharpe1964CapitalRisk}). In this work, we extend this approach to integrate path-dependencies and optimise under a \textit{dynamic} mean-variance criterion. We do so by simultaneously capturing path-dependent structure in the dynamics, while managing path-dependent variance through the lifetime of the trade. The general dynamic mean-variance optimisation can be framed as
\begin{align}{\label{eq:mean_var_optimisation}}
    \max_{(\xi_t)_{t \in [0,T]}} \Ex \left(\sum_{m=1}^d \int_0^T \xi_t^m dX_t^m \right) - \frac{\lambda}{2} \textup{ Var}\left(\sum_{m=1}^d\int_0^T \xi_t^m dX_t^m \right),
\end{align}
where the quantity
\begin{align*}
    V_T \coloneqq \sum_{m=1}^d \int_0^T \xi_t^m dX_t^m 
\end{align*}
is the strategy PnL at some future time $T$.

\subsubsection*{Optimisation}

 Depending on the choice of underlying model, the optimisation method can be different - for example, reinforcement learning can in fact be used to simulataneously learn the model and the optimal strategy (\cite{Wang2019Continuous-TimeFramework, Brini2023DeepReturns, Sood2023DeepOptimization}). The authors in \cite{Kalayci2019AOptimization} provide an extensive overview of alternate formulations and methods for mean-variance optimisation. In the case of the predict-then-optimise framework, such as \eqref{eq:factor_model_predict}, the mean-variance solution is given by
\begin{align} \label{eq:mean_var_sol}
    \xi_t^* = \frac{1}{\lambda} \Sigma_t^{-1} \mu_t
\end{align}
where $\Sigma_t$ is the $d \times d$ covariance matrix of asset returns at time $t$ and $\mu_t$ is the $d \times 1$ vector of expected returns over the next time period $t \in [0,1]$. This solution is highly intuitive and tractable. The performance of such a trading strategy is then mostly reliant on the predictive power of the signal that goes into the model, as well as the well-posedness of the covariance structure, leaving flexibility and responsibility in the hands of the trader. However, as has been highlighted in decades of literature, this framework is highly restrictive, such as the assumption of normally distributed returns and stationarity in the factor signal.

\subsection{Summary of Contributions} \label{section:our_approach}

Several stylised facts exhibited in asset prices are not represented in most classical dynamic optimisation frameworks. Many of these stylised facts, such as slow decaying autocorrelation and volatility clustering, are path-dependent properties and so it advantageous to incorporate path-dependence into any factor model. Due to such characteristics in financial time series data, the predict-then-optimise composition in classical factor models can be prone to asymmetric errors that are compounded through the optimisation process. Work from \cite{Costa2022DistributionallyConstruction} and \cite{Zhang2021ALearning} have provided robust machine learning approaches to help account for this issue by bypassing the direct prediction phase. Alternatively, we are able to exploit such dynamics in the signal and asset and incorporate these not only to increase expected return, but to dynamically reduce variance of PnL. Due to the powerful properties arising in rough path theory we are able to simultaneously incorporate such dynamics into both the modelling framework and the optimisation phase at the same time, bypassing any explicit prediction.

In summary, our contribution is to extend the existing signature trading strategy framework first presented in the thesis of Perez (\cite{PerezArribas2020SignaturesFinance}), by deriving a closed form mean-variance optimal trading strategy for multiple assets that accounts for path-dependent dynamics between exogenous trading signals and the underlying assets, thereby providing a pathwise extension to classical factor models. Our framework is simple to implement and does not require heavy machinery while comparative machine learning methods may require extensive network building and hyperparameter tuning every time the trader wants to perform a new optimisation. This is possible due to the mathematical properties of the signature allowing us to linearise the objective function, meaning it is relatively straightforward to solve for an explicit closed-form solution. Sig-Trading is a one-model-fits-all type framework that is flexible enough to adapt to any type of underlying asset and market factor process. Once a linear functional $\ell$ is obtained from past data samples it is straightforward to unravel this into an implementable trading strategy characterised by the number of units to buy/sell at each time point $t \in [0,T]$. Since the strategy is dynamic, as new data arrives the Sig-Trader will continuously compute the signature and update their position accordingly.

\section{
The Modelling Setup of the Sig-Factor Model
}\label{sec:market_model}

Rough path theory has provided many valuable tools, such as the Signature of a path, to help shift the focus from probabilistic to pathwise approaches when working with streams of data. The concept of the signature was first introduced in \cite{Chen1957IntegrationFormula, Chen1977IteratedIntegrals} and has played a crucial role in rough path theory in \cite{Lyons1998DifferentialSignals., Lyons2007DifferentialPaths, Friz2010MultidimensionalPaths, Friz2020AStructures}. Recent mathematical finance literature has benefited immensely from its universality property, which states that linear functionals on the signature are dense in the space of continuous functions on compact sets of paths (Theorem~\ref{eq:universal_approx}). This result allows to approximate a trading strategy as a linear functional on the terms of the signature. Incorporating higher order path-dependent characteristics via the signature allows us to capture stylised facts of financial time series data, without requiring or imposing an explicit probability distribution on the future returns. In this section, we discuss the notion of a $\textit{Signature Trading Strategy}$, first introduced in \cite{Lyons2019NumericalSignatures}, and how this is incorporated into a $\textit{model-free}$ setting, as well as under the presence of exogenous market signals. 

\subsection{The Signature} \label{sec:sig_intro}

In this section, we first recall definitons of path augmentations and how these are used in preceding results such as Theorem~\ref{eq:hoff_converegence}, which is crucial for our main result Theorem~\ref{thm:orig_solution}. Whilst we introduce fundamental definitions in Section \ref{sec:sig_intro}, we have collected some basic concepts and definitions from rough path theory for convenience in Appendix \ref{sec:appx_rough_paths} as they may aide in understanding of notations and technicalities throughout the paper. For a more thorough introduction to signatures, see for example \cite{Gyurko2013ExtractingStream, Chevyrev2016ALearning, Levin2013LearningSystem, Fermanian2021EmbeddingSignatures, Lyons2022SignatureLearning} for excellent articles focusing on intuition and understanding in a practical setting. For specific applications of signatures in finance we refer the reader to \cite{Bonnier2019DeepTransforms, Kalsi2020OptimalSignatures, Arribas2020Sig-SDEsFinance, Bayer2021OptimalSignatures, Cuchiero2022Signature-basedCalibration, Alden2022Model-AgnosticSignatures, Dupire2023FunctionalExpansions, Issa2023Non-parametricStructures, Wiese2023Sig-Splines:Models}.

Unless stated otherwise, the process $(X_t)_{t \in [0,T]}$ is a  continuous, stochastic process defined on a filtered probability space $(\Omega, \F, (\F_t)_{t \in [0,T]}, \P)$. We often refer to the path trajectories of the process $(X_t)_{t \in [0,T]}$ as $X:[0,T] \to \R^d$, which we assume can be lifted to geometric rough paths (Definition~\ref{defn:geom_rough_path}).

\begin{definition}{(Time reparameterisation).}
    Let $X:[0,T] \to \R^d$, $\varphi:[0,T]\to[T_1, T_2]$ a non-decreasing surjection, then the re-parameterised path is denoted as $X \circ \varphi =: X^{\varphi} : [T_1, T_2] \to \R^d$.
\end{definition}

\begin{definition}{(Add-time process).} 
    Often, we may wish to preserve the temporal structure of a path and so we keep the time parameterisation of the path $X$ by defining a new process, $\hat{X}$. We denote the time-augmented process by $\hat{X}_t=(t, X_t), t \in [0,T]$ such that $(t,X_t) =: \hat{X}: [0,T] \to \R^{d+1}$.
\end{definition}

\begin{definition}{(Hoff Lead-Lag Process, \cite{Hoff2006ThePath}, \cite{Flint2016DiscretelyProcess}).} \label{eq:hoff_defn}
    Let $\hat{X}:[0,T] \to \R^{d+1}$ be the continuous time-augmented process of $X$, discretely sampled at $t=t_0,\dots,t_{2N}$. The \textit{Hoff lead-lag transformed path} is defined as the piecewise linear interpolation $\hat{X}^{LL}:[0,T] \to \R^{2(d+1)}$ such that 
    $$
    (\hat{X}_{t_i}^{LL})^{2N}_{i=1} = (\hat{X}_{t_i}^{\textup{lead}}, \hat{X}_{t_i}^{\textup{lag}})^{2N}_{i=1}
    $$ 
    where 
    $$
    \hat{X}_{t}^{\textup{lead}} =
    \begin{cases}
        \hat{X}_{t_{k+1}}, & \quad \text{if } t \in [2k, 2k+1] \\
        \hat{X}_{t_{k+1}} + 2(t-(2k+1))(X_{t_{k+2}} - X_{t_{k+1}}), & \quad \text{if } t \in [2k+1, 2k+\frac{3}{2}] \\
        \hat{X}_{t_{k+2}}, & \quad \text{if } t \in [2k+\frac{3}{2}, 2k+2],
    \end{cases}
    $$
    $$
    \hat{X}_{t_j}^{\textup{lag}} = 
    \begin{cases}
        \hat{X}_{t_i}, & \quad \text{if } t \in [2k, 2k+\frac{3}{2}] \\
        \hat{X}_{t_{k+1}} + 2(t-(2k+\frac{3}{2}))(X_{t_{k+1}} - X_{t_{k}}), & \quad \text{if } t \in [2k+\frac{3}{2}, 2k+2].
    \end{cases}
    $$
    \end{definition}
    \begin{remark}
       There also exists a more intuitive and straightforward definition of a lead-lag process, however we decide not to opt for this version and instead consider the so-called \textit{Hoff process} (\cite{Hoff2006ThePath}), due to its fundamental properties in the case when we want to calculate the PnL of our trading strategy (Theorem \ref{eq:hoff_converegence}). This is due to the powerful, non-trivial result proven in \cite{Flint2016DiscretelyProcess} that states that the It\^{o} integral of a function of a process $X$ against itself, can be recovered via the components of the Hoff lead-lag transformation. This will be made more precise in Section~\ref{sec:sig_factor_model}.
    \end{remark}
\begin{figure}[ht]
\centering
  \includegraphics[width=\linewidth]{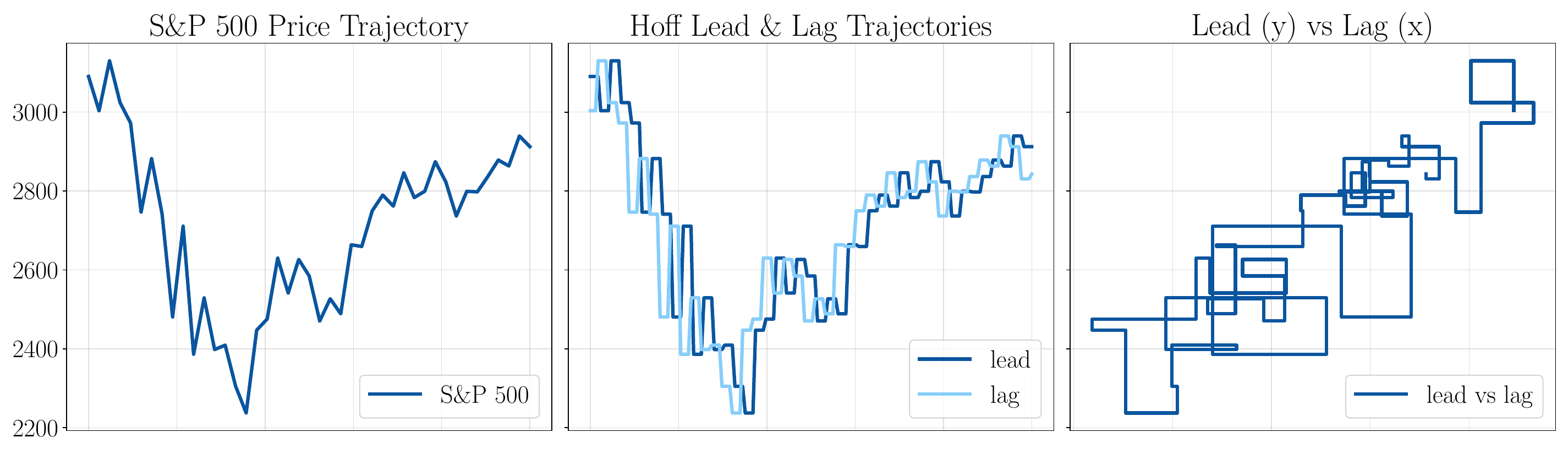}
  \caption{\centering SPY ETF sample price trajectory between 02/03/2020-30/04/2020 (LHS) and its respective Hoff lead-lag transformation (Centre) and the lead vs the lag component (RHS).}
  \label{fig:hoff_lead_lag}
\end{figure}


\noindent \textbf{Notation:} We distinguish that any integral of the form $\int f \circ dx$ refers to Stratnovich integration, meanwhile $\int f  dx$ will refer to It\^{o} integration. 

\begin{definition}{(Signature Transform).}
    Let $X:[0,T] \to \R^d \in C^{1-var}([0,T];\R^d)$ be a (piecewise) smooth path. Let us define the simplex $\Delta_T = \{(s,t) : 0 \leq s \leq t \leq T \}$. The signature of $X$ between fixed time $s$ and $t$ is a map
    \begin{align*}
        \bX : \Delta_T & \to T((\R^d)) \\
        (s,t) & \mapsto \bX_{s,t} := (1, \bX_{s,t}^1, \dots, \bX_{s,t}^n, \dots )
    \end{align*}
    where the $n$-th order of the signature is defined as
    \begin{align*}
        \bX_{s,t}^n := \idotsint\displaylimits_{s < u_1 < \dots < u_n < t} dX_{u_1} \otimes \dots \otimes dX_{u_n} \in (\R^d)^{\otimes n}.
    \end{align*}
        The signature is a $T((\R^d))$-valued process, which can be viewed as
        \begin{align*}
        \bX^{< \infty}_{0,T} = \left( \underbrace{
        \begin{matrix}
            \text{ } \vspace{0.2cm}  \\
            \bX_{0,T}^{\bluebf{\emptyset}} \\
            \text{ } \vspace{0.2cm}
        \end{matrix}}_{\textstyle =1} \text{ } , \text{ }
        \underbrace{\begin{pmatrix}
            \bX_{0,T}^{\bluebf{1}}   \\
            \vdots \\
            \bX_{0,T}^{\bluebf{d}}
        \end{pmatrix}}_{\textstyle \bX_{0,T}^1}
        \text{ } , \text{ }
        \underbrace{\begin{pmatrix}
            \bX_{0,T}^{\bluebf{11}} & \dots &  \bX_{0,T}^{\bluebf{1d}}   \\
            \vdots & \ddots & \vdots \\
            \bX_{0,T}^{\bluebf{d1}} & \dots & \bX_{0,T}^{\bluebf{dd}}
        \end{pmatrix}}_{\textstyle \bX_{0,T}^2} \text{ } , \text{ } \dots
        \right).
    \end{align*}
    The signature of a path can be truncated at any finite order $N \in \N$. We denote the truncated signature up to order $N$ as
    \begin{align*}
        \bX^{\leq N} : \Delta_T & \to T^{(N)}(\mathbb{R}^d) \\
        (s,t) & \mapsto \bX_{s,t} := (1, \bX_{s,t}^1, \dots, \bX_{s,t}^N).
    \end{align*}
\end{definition}
\noindent \textbf{Notation:} Throughout, we refer to the (un-truncated) signature between time $0$ and time $T$ as $\bX^{< \infty}_{0,T}$. We may refer to $\bX_{0,T}^n$ as being the $n$-th \textit{level} of the signature and $\bX_{0,T}^{\leq N}$ as the $N$-th \textit{order} truncated signature.

\noindent \textbf{Notation:} We will use bold blue for any word $\bluebf{w} \in \W(A_d)$. Words are the multi-indices that can be thought of as the different multi-indices inside the tensor algebra. By using a specific word $\bluebf{w}$, this will often equate to referring to the specific multi-index associated to that word.\par

\begin{definition}{(Linear functionals on the tensor algebra).}
    Note that there is a natural pairing between the extended tensor algebra $T((\R^d))$ and its dual space $T((\R^d)^*)$, by which we denote
    \begin{align*}
        \langle \cdot , \cdot \rangle : T((\R^d)^*) \times T((\R^d)) \to \R
    \end{align*}
    and is given by
    $$
    \langle \ell , \bX \rangle = \sum_{\bluebf{w} \in \W(A_d)} \ell_{\bluebf{w}} \bX^{\bluebf{w}}
    $$
    for  $\ell \in T((\R^d)^*)$, $\bX \in  T((\R^d)).$ We make clear that there exists a canonical isomorphism $(\R^d)^* \cong \R^d$ through the mapping $(\R^d)^* \ni \langle \ell , \cdot \rangle \mapsto \ell \in \R^d$.
\end{definition}

\subsection{Sig-Factor Model} \label{sec:sig_factor_model}
In this section, we develop and collect the tools required in order to find optimal trading strategies with respect to a given objective criterion. The key ingredient is that we can approximate a trading strategy as a linear functional on the signature of the path, which is formulated in Theorem \ref{thm:sig_approx}. Then, by embedding the \textit{market factor process} (Definition~\ref{defn:market_factor_process}) as a geometric rough path via the signature and using Theorem \ref{eq:hoff_converegence}, we alleviate most probabilistic restrictions that classical methods may have. 
Theorem~\ref{thm:int_sig} subsequently allows us to in fact bypass the It\^{o} integral entirely and express the expected PnL as a linear functional on the expected Hoff lead-lag signature. Hence, all that remains to be done in order to optimise under a given criterion (i.e mean-variance), is to solve an optimisation problem - equating to solving a system of linear equations. \par
We consider the scenario, as is often the case in practice, that we observe a much larger market state than just the asset trajectories themselves, i.e we observe the natural filtration of a new process that consists of the original price trajectories, enriched with some new market factors $f_t = (f^1_t, \dots, f^N_t)$. A \textit{market factor} can be any un-tradable characteristic, classified as a stochastic process $(f_t)_{t \in [0,T]}$ that may be used in tandem with our framework, in order to enrich the market state and provide predictability. In practice, factors are chosen and understood as a driving signal of the the underlying process. Unless stated otherwise, we assume the underlying asset process $(X_t)_{t \in [0,T]}$ is a  continuous, stochastic process whose path trajectories can be lifted to geometric rough paths (Definition~\ref{defn:geom_rough_path}). We note that, while our price process may be a semi-martingale, we do not require any restriction on the nature of the exogenous market signals, where as traditional factor models may require stationarity.

\begin{definition}{(Market factor process).} \label{defn:market_factor_process}
    Let $X=(X_t)_{t \in [0,T]}$ be a $d$-dimensional tradable underlying asset process and $\{f^i \}_{i=1}^N$ are $N$ un-tradable exogenous trading signals. Then we define the (time-augmented) \textit{market factor process} as the $(1+d+N)$-dimensional process
    \begin{align*}
        \hat{Z}_t \coloneqq (t, X_t, f_t) 
    \end{align*}
    that induces the natural filtered probability sapce $(\Omega, \F^{Z}, (\F^{Z}_t)_{t \in [0,T]}, \P)$, where the filtration of $X$ satisfies $\F^X \subseteq \F^Z$ such that $X$ is driven by $\{f^i \}_{i=1}^N$. Throughout the remainder of this paper, we maintain such assumptions on $\hat{Z}$. We define as the space of all \textit{market factor trajectories} for given assets $X$ and factors $f$.
    $$
    \mathcal{Z}^{f}_{0,T} \coloneqq \{ \text{   } \hat{Z}_t = (t, X_t, f_t) \quad \vert \quad X:[0,T] \to \R^d,  \text{ and } f:[0,T] \to \R^N \text{ and } Z_0 = (0,1,1)  \text{   } \}
    $$
\end{definition}

\begin{definition}{(Exogenous Signature Trading Strategy).} \label{def:exog_sig_strat}
    Let $\hat{Z}$ be a market factor process as defined in Definition~\ref{defn:market_factor_process}. Let $\xi = (\xi)_{t \in [0,T]}$ be an adapted, $\F^Z_t$-predictable and integrable strategy such that $\int^T_0 \xi_s^2 ds < \infty$. If the strategy $\xi$ is then a function of the market state, that is $\xi_t = \phi(\hat{Z}_{0,t})$, a continuous function of the previous \textit{market factor trajectory} up to time $t$. We can extend $\xi$ to be a \textit{signature trading strategy}, such that 
    \begin{align*}
        \xi_t^m = \langle \ell_m, \hat{\Z}_{0,t} \rangle \approx \phi(t,Z_{0,t}), \quad \forall m = 1,\dots,d.
    \end{align*}
    We define the space of all \textit{exogenous signature trading strategies} with respect to market factors $f$ as
    $$
    \A^{f, \text{sig}} \coloneqq \left\{ (\xi)_{t \in [0,T]} = (\langle \ell, \hat{\Z}_{0,t} \rangle)_{t \in [0,T]} \quad \bigg\vert \quad \int^T_0 \xi_s^2 ds < \infty, \quad \forall \text{ } \hat{Z}_{0,T} \in \hat{\mathcal{Z}}^f_{0,T} \text{ and } \ell \in T((\mathbb{R}^{N+d+1})^*) \right\}.
    $$ 
\end{definition}
\begin{remark}
    If instead we wanted to trade \textit{endogenously} without any trading signals, we would simply take the market factors to be the null-process $\emptyset$, such that $\hat{X} = \hat{Z}$, the results would still hold.
\end{remark}
Now that we have defined the framework required in order to trade a signature trading strategy embedded with market factors, we can state a key result in order to combine the importance of the Hoff process in Theorem \ref{eq:hoff_converegence}, that allows us to explicitly state the PnL of the exogenous sig-trader without the need for an integral at all.

\begin{lemma} \label{thm:sig_approx}
    Let $\mathcal{Z}^{f}_{0,T} \subset K \subset C^{1-var}([0,T],\mathbb{R}^d)$ be a compact set of market factor trajectories. Then for any exogenous signature trading strategy $\xi = \phi(\hat{Z}_{0,T})$ that acts on paths in $K$ and for every $\epsilon > 0$, there exists a linear functional $\ell \in T((\R^{N+d+1})^*)$, such that
    \begin{align*}
    \sup_{Z \in K} \lVert \phi(\hat{Z}_{0,T}) - \langle \ell, \hat{\Z}_{0,T} \rangle \rVert < \epsilon.
    \end{align*}
\end{lemma}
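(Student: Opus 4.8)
The plan is to deduce the statement from the Stone--Weierstrass theorem, in the same spirit as the universality result for signatures (Theorem~\ref{eq:universal_approx}). Write $\mathcal{L}$ for the set of maps $K \ni Z \mapsto \langle \ell, \hat{\Z}_{0,T}\rangle \in \R$, where $\ell$ ranges over the \emph{finitely supported} elements of $T((\R^{N+d+1})^{*})$ (equivalently, $\ell$ paired against truncated signatures of arbitrary but finite order), so that each pairing is a finite sum and hence well defined; since the truncated signature of a $1$-variation path depends continuously on the path and $\ell$ touches only finitely many levels, $\mathcal{L} \subseteq C(K,\R)$.

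First I would reduce to scalar targets. Decomposing $\phi = (\phi^1,\dots,\phi^d)$ and using that the norm on $\R^d$ is controlled by the coordinatewise supremum, it suffices to approximate each continuous $\phi^m : K \to \R$ to within $\epsilon/\sqrt{d}$ by some $\langle \ell_m, \hat{\Z}_{0,T}\rangle$, and then to assemble the $\ell_m$ into one linear functional witnessing Definition~\ref{def:exog_sig_strat}.

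Second, I would check that $\mathcal{L}$ is a point-separating unital subalgebra of $C(K,\R)$. Linearity is immediate. The empty-word functional $\ell = e_{\bluebf{\emptyset}}$ yields the constant function $1$, so $\mathcal{L}$ contains the constants. Closure under multiplication is the shuffle identity: for finitely supported $\ell,\ell'$ one has $\langle \ell, \hat{\Z}_{0,T}\rangle\,\langle \ell', \hat{\Z}_{0,T}\rangle = \langle \ell \shuffle \ell', \hat{\Z}_{0,T}\rangle$, valid for signatures of bounded-variation (indeed geometric rough) paths, and $\ell \shuffle \ell'$ is again finitely supported. For point separation, take $Z \neq Z'$ in $K$: because the first coordinate of $\hat{Z}$ is the genuine time map $t\mapsto t$, the augmented path is strictly monotone in that coordinate and therefore not tree-like reducible, so by the uniqueness of the signature up to tree-like equivalence the signatures $\hat{\Z}_{0,T}$ and $\hat{\Z}'_{0,T}$ disagree on some word $\bluebf{w}$, and $\ell = e_{\bluebf{w}}$ separates $Z$ from $Z'$. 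Stone--Weierstrass then gives $\overline{\mathcal{L}} = C(K,\R)$ in the uniform norm; applying this to each $\phi^m$ and combining the resulting $\ell_m$ into a single $\ell \in T((\R^{N+d+1})^{*})$ finishes the argument.

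The main obstacle is the point-separation step: one has to be sure the time-augmentation genuinely removes tree-like equivalence on all of $K$ (invoking the Hambly--Lyons and Boedihardjo--Geng--Lyons--Yang uniqueness theorems) and that the topology in which $K$ is assumed compact is exactly the one in which the truncated signature maps are continuous. A secondary technical point is whether the tensor-normalised signature is needed: if $K$ is genuinely compact in the $1$-variation topology it is not, but if $K$ were only relatively compact in a weaker (e.g.\ uniform) topology one would replace $\bX_{0,T}$ by its normalisation, which keeps $\mathcal{L}$ inside $C(K,\R)$ while preserving the algebra and separation properties, and re-run the same argument.
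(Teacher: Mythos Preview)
Your argument is correct. The paper's own proof is a one-line invocation of the universal approximation theorem (Theorem~\ref{eq:universal_approx}), so you are not taking a different route so much as unpacking that theorem: you verify directly, via Stone--Weierstrass, that the linear signature functionals form a point-separating unital subalgebra of $C(K,\R)$, and you handle the vector-valued target by coordinatewise reduction. What your version buys is a self-contained proof that makes explicit why the time-augmentation is needed (to kill tree-like equivalence and guarantee point separation) and where compactness of $K$ enters; what the paper's version buys is brevity, since all of this is already packaged into Theorem~\ref{eq:universal_approx}. Your caveats about the topology on $K$ and the possible need for a normalised signature are valid refinements, but under the stated hypothesis $K \subset C^{1\text{-var}}$ compact they are not needed, and the paper does not address them either.
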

\begin{proof}
    We can see that this result follows from the universal approximation of continuous functions on paths by linear functionals acting on the signature (Theorem \ref{eq:universal_approx}).
\end{proof}

\begin{definition}{(Trading Strategy PnL).} \label{eq:pnl_defn}
    Let $\hat{Z} = (t,X_t, f_t)$ be a market factor process with $X$ a tradable underlying process and $f$ an untradable  signal process. If $\xi \in \A^{f,\text{sig}}$ is a linear signature trading strategy such that $\xi_t^m = \langle \ell_m, \hat{\Z}_{0,t} \rangle$, for each asset $m=1,\dots,d$, then we define the PnL of the signature trading strategy between time $0$ and time $T$ as
    \begin{align} \label{eq:sig_pnl}
        V_T = \sum_{m=1}^d \int^T_0 \langle \ell_m, \hat{\Z}_{0,t} \rangle dX_t^m
    \end{align}
    where the integral is understood in the It\^{o} sense. 
\end{definition}
\noindent It is crucial to distinguish the difference between the It\^{o} integral used in this definition versus the Stratonovich integral in (2) of Example \ref{eq:example_sigs}. If that integral was in fact an It\^{o} integral then we could describe the above definition of PnL directly in terms of the add-time signature $\hat{\Z}_{0,t}$, however this is unfortunately not the case! So in order to develop a more friendly version of $V_T$, we require the following theorem involving the \textit{Hoff lead-lag process} as seen in Definition \ref{eq:hoff_defn}. \par

\begin{theorem}{(Recovery of It\^{o} Integral using the Hoff process, (Theorem 5.1, \cite{Flint2016DiscretelyProcess})).} \label{eq:hoff_converegence} 
    Let $X=(X_t)_{t \in [0,T]}$ be a stochastic process on the filtered probability space $(\Omega, \F, (\F_t)_{t \in [0,T]}, \P)$. Suppose we observe piecewise smooth streams of $X$ that are discretely sampled over a sequence of times $\{t_i\}^N_{i=0}$. Let $\hat{X}^{LL} = (\hat{X}_{t_i}^{\textup{lead}}, \hat{X}_{t_i}^{\textup{lag}})^{2N}_{i=1}$ be the associated observed \textit{Hoff lead-lag transform} of $\hat{X}$ as defined in Definition \ref{eq:hoff_defn}. Let $\phi = (\phi^1, \dots, \phi^d)$ be continuous functions acting on paths of $X$, then we have that
    \begin{align*}
        \sum_{m=1}^d \int^T_0  \phi^m(\hat{X}_t^{\textup{lag}}) d\hat{X}_t^{m, \textup{lead}} \to \int^T_0 \phi(X_t) dX_t \coloneqq & \sum_{m=1}^d \int^T_0 \phi^m(X_t) dX_t^m \\
        \text{as  } & \max_{t_i, t_{i+1}} \vert t_{i+1} - t_i \vert \to 0,
    \end{align*}
    in either probability or $L^p$-norm.
\end{theorem}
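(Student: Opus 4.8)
The statement is Theorem~5.1 of \cite{Flint2016DiscretelyProcess}, so one legitimate option is simply to invoke it; what follows is the argument I would reconstruct. The plan is to evaluate the Riemann--Stieltjes integral $\sum_{m=1}^d\int_0^T \phi^m(\hat X^{\mathrm{lag}}_t)\,d\hat X^{m,\mathrm{lead}}_t$ along the piecewise-linear Hoff path \emph{exactly}, block by block, to isolate the left-endpoint Riemann sum $\sum_m\sum_k \phi^m(X_{t_k})\,(X^m_{t_{k+1}}-X^m_{t_k})$ as its leading term, and to show the leftover contributions vanish as $\max_i|t_{i+1}-t_i|\to 0$. Since left-endpoint Riemann sums of an adapted continuous integrand converge both in probability and in $L^p$ to the It\^o integral $\sum_m\int_0^T\phi^m(X_t)\,dX^m_t$ when $X$ is a semimartingale with $\int_0^T\|\phi(X_t)\|^2\,dt<\infty$, this gives the theorem.

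First I would unpack Definition~\ref{eq:hoff_defn} for a single Hoff block: over such a block the lead component is first held at $\hat X_{t_k}$ while the lag is stationary, then the lead sweeps linearly from $\hat X_{t_k}$ to $\hat X_{t_{k+1}}$ while the lag is still frozen at $\hat X_{t_k}$, and only afterwards does the lag catch up on a short sub-interval while the lead is constant. Partitioning $[0,T]$ along these sub-intervals, the integral collects: nothing from any sub-interval on which $d\hat X^{\mathrm{lead}}=0$; exactly $\phi^m(X_{t_k})\,(X^m_{t_{k+1}}-X^m_{t_k})$ from the sub-interval on which the lag is frozen at $X_{t_k}$ and the lead traverses the $k$-th increment; and, from the finitely many short transition segments per block on which both components move affinely, a remainder $R^m_N$.

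Then I would bound $R^m_N$. On a transition segment both $\hat X^{\mathrm{lead}}$ and $\hat X^{\mathrm{lag}}$ are affine with increments of order $|X_{t_{k+1}}-X_{t_k}|$ and $|X_{t_k}-X_{t_{k-1}}|$ respectively, so each such term is bounded by a constant times $\sup_{K}\|\phi^m\|\cdot|X^m_{t_{k+1}}-X^m_{t_k}|\cdot|X^m_{t_k}-X^m_{t_{k-1}}|$; summing over $k$ and using $2|ab|\le a^2+b^2$, this is dominated by a constant times the discrete quadratic variation $\sum_k|X^m_{t_{k+1}}-X^m_{t_k}|^2$, which is $L^p$-bounded uniformly in the mesh because $X$ is a semimartingale. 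Feeding in the uniform continuity of the continuous path functionals $\phi^m$ on the compact family of trajectories then forces $R^m_N\to 0$ in probability and in $L^p$; combined with the convergence of the left-endpoint sums this closes the argument. For genuine path-functionals $\phi^m$ rather than functions of the current value, the same proof runs verbatim, replacing ``value $X_{t_k}$'' by ``path stopped at $t_k$'' and using uniform continuity with respect to the supremum norm on the compact set $K$.

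The hard part is the transition segments. A careless lead--lag interpolation in which the two components move simultaneously would produce a \emph{midpoint} (Stratonovich) sum together with a surviving quadratic-variation term, not the It\^o sum; it is precisely the staggering in Hoff's construction --- the lead moving strictly ahead of the lag --- that makes the dominant contribution a left-endpoint sum, and one has to verify that the cross terms generated on the transition segments genuinely cancel rather than persisting as an It\^o--Stratonovich correction. This is where the semimartingale hypothesis on $X$ and the compactness of the trajectory set (hence uniform continuity of each $\phi^m$) are really used, and it is the technical core of Theorem~5.1 in \cite{Flint2016DiscretelyProcess}.
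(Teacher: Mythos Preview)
Your opening remark already matches the paper exactly: the paper does not prove this result but merely cites Theorem~5.1 of \cite{Flint2016DiscretelyProcess} and offers a one-sentence heuristic (``the areas between the lead and lag components \ldots\ introduce a correction factor \ldots\ allowing us to recover the It\^o, not Stratonovich, integral''). So the ``legitimate option'' you mention is precisely what the authors do, and nothing further is required to match them.

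Your additional reconstruction goes well beyond the paper and is essentially sound, but two points are worth flagging. First, the heuristic the paper gives is phrased in a different register than yours: they emphasise that the \emph{signed area} between the lead and lag components carries the quadratic-variation correction (this is the rough-path viewpoint, consistent with their later use of $\psi_{0,T}$ in the proof of Theorem~\ref{thm:int_sig}), whereas you argue more elementarily that the Hoff staggering produces a left-endpoint Riemann sum directly. Both are correct descriptions of the same phenomenon. Second, your block-by-block description of the Hoff process does not line up with Definition~\ref{eq:hoff_defn} as written: on $[2k+1,2k+\tfrac{3}{2}]$ the lead moves from $\hat X_{t_{k+1}}$ to $\hat X_{t_{k+2}}$ while the lag sits at $\hat X_{t_k}$, so the contribution is $\phi(X_{t_k})(X_{t_{k+2}}-X_{t_{k+1}})$ rather than $\phi(X_{t_k})(X_{t_{k+1}}-X_{t_k})$. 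This is still an adapted (indeed strictly predictable) Riemann sum and converges to the same It\^o integral, so your argument survives, but the indexing in your sketch should be adjusted if you want it to match the paper's definition.
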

\begin{proof}
     The basic idea of the proof is that the areas between the \textit{lead} and \textit{lag} components of the Hoff lead-lag process, (captured by $\hat{X}^{LL}$), introduce a correction factor in the stochastic integral limit, consequently allowing us to recover the It\^{o}, not Stratonovich, integral.
\end{proof}
This result allows us to show that applying a function to the observed lagged path, against the observed leading path, we recover the true It\^{o} integral against the stochastic process $X$ as the mesh size goes to zero. In this sense, we see that we are able to approximate asymptotically the true It\^{o} integral as instead an integral of the observed, Hoff lead-lag stream.
\begin{remark}
    We can clearly see the analogy with Theorem \ref{eq:hoff_converegence} and our definition of PnL in Definition \ref{eq:pnl_defn}. By transforming our asset price data via the Hoff lead-lag process, now regarded as a geometric rough path, the quadratic variation of the underlying process $X$ naturally arises. The path-dependent concept of volatility has been researched extensively (\cite{Breidt1998TheVolatility, Gatheral2014VolatilityRough, Jacquier2020Path-dependentModels, Guyon2022VolatilityPath-Dependent}) and this is explicitly embedded in the second order of the signature of the Hoff process. So not only does this result allow us to capture the path-dependency of volatility, which is mostly endogenous, we can enrich this with other path-dependent market factors.
\end{remark}
\begin{figure}[ht]
\centering
  \includegraphics[width=0.9\linewidth]{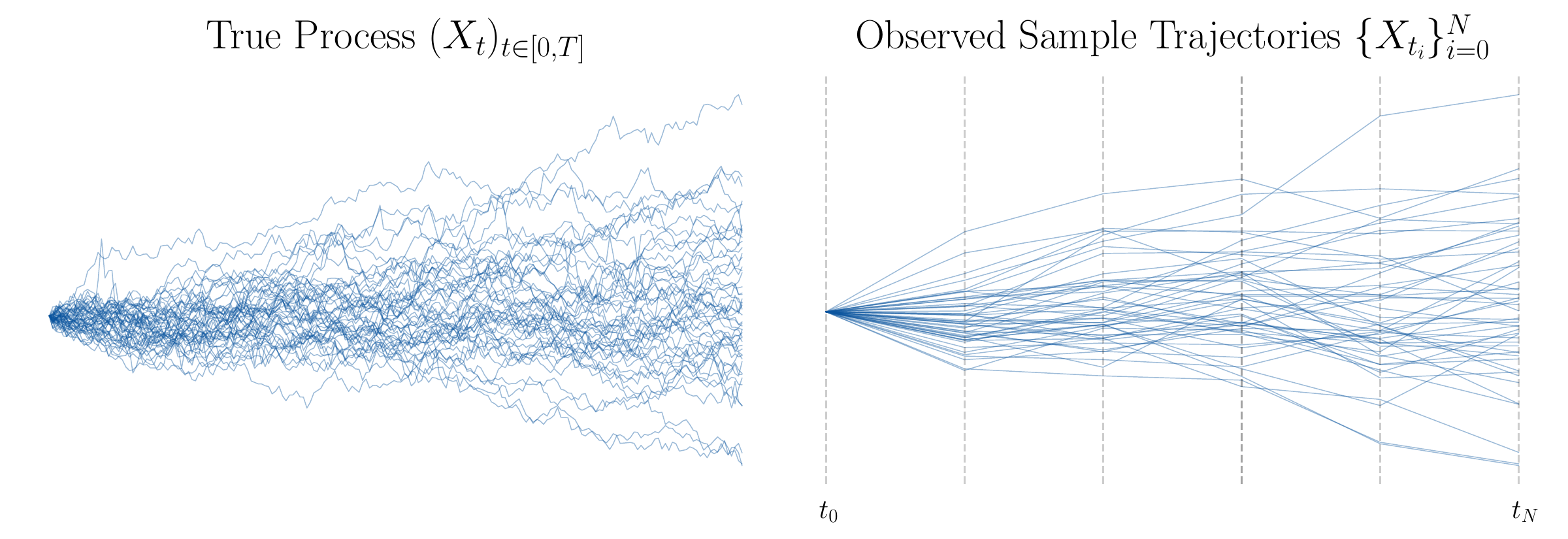}
  \caption{\centering Distinguishing between the true stochastic process $X$ and the discretely sampled observed process for which we define the Hoff lead-lag process.}
  \label{fig:true_vs_observed_v2}
\end{figure}
We can see from Figure~\ref{fig:true_vs_observed_v2} that in order for Theorem~\ref{eq:hoff_converegence} to hold, we require frequent sampling such that the distance between observations is small. A natural next step is to consider the case where our strategy $\xi$ is in fact a \textit{signature trading strategy}, i.e $\xi_t^m = \langle \ell_m , \hat{\Z}_{0,t} \rangle$ for each asset $m=1, \dots, d$. Moreover, how can we find an expression for the PnL $V_T$ in \eqref{eq:sig_pnl}, in the It\^{o} integral sense, using Theorem \ref{eq:hoff_converegence}? We will in fact extend this idea to a much more powerful result, under the presence of exogenous market signals.

\begin{restatable}{theorem}{pnlthm}{(PnL of a $d$-asset signature trading strategy under exogenous signal).} \label{thm:int_sig} 
 Let $\hat{Z} \coloneqq (t, X_t, f_t)_{t \in [0,T]}$ be the \textit{market factor process}, where $X$ is a $d$-dimensional tradable stochastic process and $f$ is a $N$-dimensional un-tradable factor process. Let $\ell_1, \dots, \ell_d \in T((\R^{N+d+1})^*)$ and define our trading strategy as $\xi_t^m =  \langle \ell_m, \hat{\Z}_{0,t}^{< \infty} \rangle$ for $m=1,\dots, d$. Then, we have that the PnL of the strategy between time $0$ and time $T$ can be represented as
\begin{align}\label{eq:pnl_leadlag} 
     V_T = \sum_{m = 1}^d \int^T_0 \langle \ell_m, \hat{\Z}_{0,t}^{< \infty} \rangle dX_t^m \approx \sum_{m = 1}^d \int^T_0 \langle \ell_m, \hat{\Z}_{0,t}^{\textup{lag}, < \infty} \rangle dX_t^{m,\textup{lead}} = \sum_{m = 1}^d \langle \ell_m \bluebf{f}(m), \hat{\Z}^{LL,<\infty}_{0,T} \rangle
 \end{align}
 where $\bluebf{f}(m): \{1, \dots, d\} \to \W(A_{2(N+d+1)})$ is a shift operator which is defined as $\bluebf{f}(m) = \pi(e^*_{m+N+d+2})$ where $\pi:T((\R^{2(N+d+1)})^*) \to \W(A_{2(N+d+1)})$ the canonical isomorphism between the dual space $T((\R^{2(N+d+1)})^*)$ and the space of all words $\W(A_{2(N+d+1)})$.
 \end{restatable}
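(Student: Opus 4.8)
The plan is to chain the two relations in \eqref{eq:pnl_leadlag} separately: the ``$\approx$'' is purely the Hoff lead--lag recovery of the It\^o integral (Theorem~\ref{eq:hoff_converegence}), and the final ``$=$'' is the elementary algebraic fact that integrating a linear functional on a signature against one of the path's own coordinates produces a right-concatenated linear functional on that same signature. No analysis beyond Theorem~\ref{eq:hoff_converegence} should be required; the real work is keeping the three relevant signatures and their alphabets straight.

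First I would pass to the Hoff picture. Apply Theorem~\ref{eq:hoff_converegence} with the underlying process taken to be the full market factor process $\hat{Z}=(t,X_t,f_t)$ rather than $X$ alone — the theorem is stated for a generic process, and one simply restricts its outer sum to range over the tradable $X$-coordinates sitting inside $\hat{Z}$ — and with $\phi^m(\cdot)=\langle \ell_m,(\cdot)^{<\infty}\rangle$. For this to be a legitimate continuous integrand one needs $\ell_m$ to decay fast enough that the series converges (this is the content of the notation ``$<\infty$''; otherwise one truncates at a finite level $N$ and runs the argument level-by-level), continuity of the signature on compact sets of bounded-variation paths being exactly the ingredient behind Lemma~\ref{thm:sig_approx}. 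Since the Hoff integrand manufactured from $\phi^m$ is precisely $\phi^m(\hat{Z}_t^{\textup{lag}})=\langle \ell_m,\hat{\Z}_{0,t}^{\textup{lag},<\infty}\rangle$, Theorem~\ref{eq:hoff_converegence} yields, as the mesh tends to $0$ in probability or in $L^p$,
\begin{align*}
V_T=\sum_{m=1}^d\int_0^T\langle \ell_m,\hat{\Z}_{0,t}^{<\infty}\rangle\,dX_t^m\;\approx\;\sum_{m=1}^d\int_0^T\langle \ell_m,\hat{\Z}_{0,t}^{\textup{lag},<\infty}\rangle\,dX_t^{m,\textup{lead}},
\end{align*}
which is the first claimed relation.

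Next I would turn the right-hand side into a single linear functional on the Hoff signature, now working pathwise: $\hat{Z}^{LL}$ is piecewise linear, hence of bounded variation, with signature $\hat{\Z}^{LL,<\infty}$ valued in $T((\R^{2(N+d+1)}))$. Two standard signature identities finish it. (i) The lag path $\hat{Z}^{\textup{lag}}$ is the image of $\hat{Z}^{LL}$ under the coordinate projection onto the lag block, so every iterated integral of $\hat{Z}^{\textup{lag}}$ equals the iterated integral of $\hat{Z}^{LL}$ that uses only lag-coordinate differentials; hence $\langle \ell_m,\hat{\Z}_{0,t}^{\textup{lag},<\infty}\rangle=\langle \ell_m,\hat{\Z}_{0,t}^{LL,<\infty}\rangle$ once $\ell_m$ is lifted to a functional on $T((\R^{2(N+d+1)})^{*})$ supported on the lag sub-alphabet (with the ordering of the doubled coordinates fixed in the statement this lift does not change the coefficients). (ii) For any bounded-variation path $W$ with signature $\mathbb{W}$, any word $\bluebf{w}$ and any coordinate $j$ one has $\int_0^T\mathbb{W}_{0,t}^{\bluebf{w}}\,dW_t^{j}=\mathbb{W}_{0,T}^{\bluebf{w}j}$ (the defining recursion of the signature); extending linearly in the functional and applying this with $W=\hat{Z}^{LL}$ and $j$ the index of the lead-$X^m$ coordinate gives
\begin{align*}
\int_0^T\langle \ell_m,\hat{\Z}_{0,t}^{LL,<\infty}\rangle\,dX_t^{m,\textup{lead}}=\langle \ell_m\,\bluebf{f}(m),\hat{\Z}_{0,T}^{LL,<\infty}\rangle,
\end{align*}
with $\bluebf{f}(m)$ the right-concatenation by that letter, i.e. the operator $\bluebf{f}(m)=\pi(e^*_{m+N+d+2})$ of the statement. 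Summing over $m=1,\dots,d$ closes the chain.

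The main obstacle I expect is not analytic but combinatorial: keeping the three distinct signatures — of $\hat{Z}$, of the lag path $\hat{Z}^{\textup{lag}}$, and of the full Hoff path $\hat{Z}^{LL}$ — cleanly separated, and verifying that the relabelling in step (i) composed with the right-concatenation in step (ii) is exactly the single word operator $\bluebf{f}(m)$ advertised, which is the only place where the precise ordering of the $2(N+d+1)$ Hoff coordinates genuinely matters. Everything truly analytic — that the discretised Hoff expression converges to the It\^o PnL, and in which mode — is inherited verbatim from Theorem~\ref{eq:hoff_converegence}, and the convergence/truncation caveat encoded in ``$<\infty$'' is carried through unchanged; one should state it explicitly as a hypothesis on the decay of the $\ell_m$ (or phrase the whole identity at truncation level $N$, where it is exact termwise).
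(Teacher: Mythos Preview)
Your argument is correct and is in fact more streamlined than the paper's own proof. You separate the chain into (a) the Hoff convergence of Theorem~\ref{eq:hoff_converegence}, which you use as a black box for the ``$\approx$'', and (b) the purely algebraic identity on the piecewise-linear Hoff path, where the right-concatenation rule for bounded-variation signatures and the projection identity for the lag block give the ``$=$'' immediately. The paper instead works at the level of the limiting lead-lag rough path and establishes the outer equality $\int_0^T\langle\ell_m,\hat{\Z}_{0,t}^{\leq M}\rangle\,dX_t^m=\langle\ell_m\bluebf{f}(m),\hat{\Z}^{LL,\leq M+1}_{0,T}\rangle$ directly: it decomposes the It\^o integral into a Stratonovich integral (which yields concatenation on the doubled process $\hat{Y}=(\hat{Z},\hat{Z})$) plus a quadratic-variation correction, and then identifies that correction with the antisymmetric area block of the lead-lag lift via an induction on the truncation level (deferred to Lemma~3.2.11 of \cite{PerezArribas2020SignaturesFinance}). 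Your route avoids the It\^o--Stratonovich bookkeeping and the induction entirely, at the cost of treating Theorem~\ref{eq:hoff_converegence} as opaque; the paper's route, by contrast, makes transparent \emph{why} the Hoff lift carries the It\^o correction (the $\pm\tfrac12[\hat{Y}]$ off-diagonal terms), which is conceptually useful but not logically necessary for the statement as written with its ``$\approx$''.
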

\noindent Now, due to the linearity of expectation, we are able to represent the \textit{expected PnL} at time $T$ as the following:
\begin{align} \label{eq:exp_lead_lag_pnl}
    \Ex(V_T) & =  \sum_{m =1 }^d \langle \ell_m \bluebf{f}(m), \Ex( \hat{\Z}^{LL,<\infty}_{0,T}) \rangle.
\end{align}
where $\Ex( \hat{\Z}^{LL,<\infty}_{0,T})$ is the expected signature of the Hoff lead-lag transformation of $\hat{Z}$.
\begin{proof}
    Given in Appendix \ref{sec:proof_of_pnl_thm}.
\end{proof}
\begin{figure}[ht]
\centering
  \includegraphics[width=\linewidth]{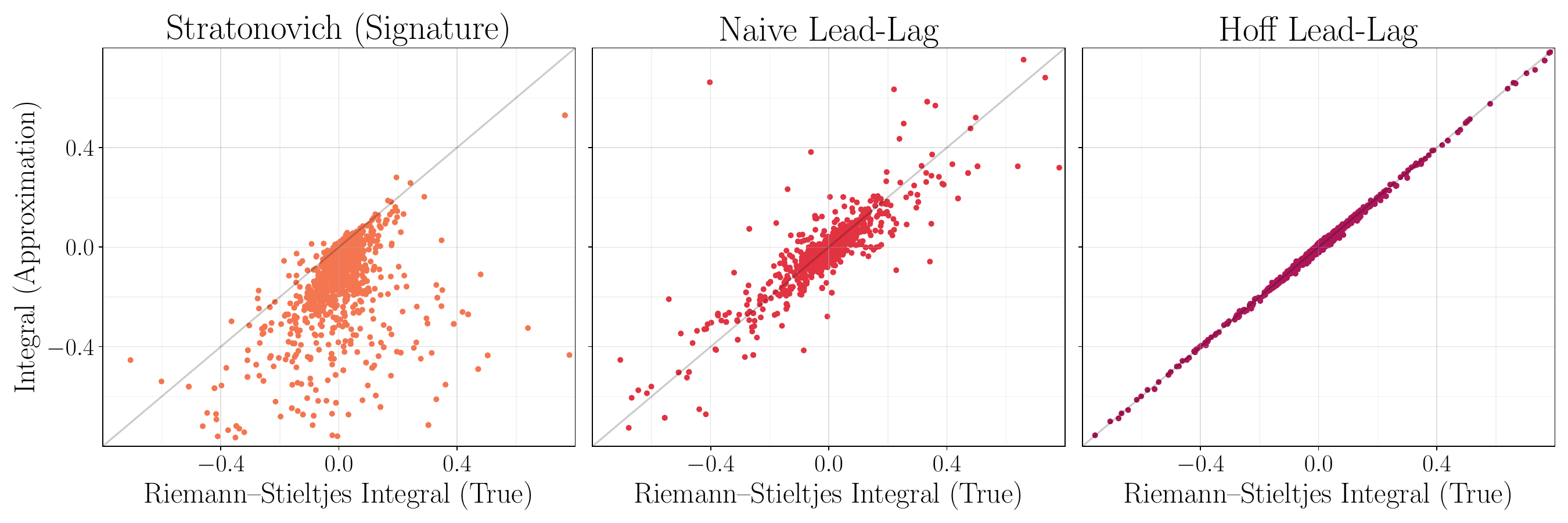}
  \caption{\centering A comparison between different approximations of the true It\^{o} PnL.}
  \label{fig:hoff_lead_lag_integral}
\end{figure}
\begin{remark}
    This result allows us to express any integral of a linear functional on the signature of a time-augmented market factor path as a newly defined linear functional on the time-augmented lead-lag market factor path instead. We must note that the authors in (\cite{Lyons2019Non-parametricDerivatives}, Lemma 3.11), construct a version in the specific case when $d=1$, $f=\emptyset$ and so $\bluebf{f}(m) = \bluebf{4}$. Here, we extend this case to allow for any exogenous market information, as well as the multi dimensionsal case for $d>1$. 
\end{remark}
\begin{remark}
    Note that $\bluebf{f}(m): \{1,\dots,d\} \to \W(A_{2(N+d+1)})$ is simply just the shift operator that allows any multi-index $(j_1, \dots, j_n) \in \left\{1,\dots,d \right\}^n$ for the original signature terms to be transformed into a multi-index for the signature terms of the $2(N+d+1)$-dimensional time-augmented lead-lag process. \par
\end{remark}
\begin{example} \label{ex:word_example}
    Define the $4$-dimensional market factor process $\hat{Z}\coloneqq (t, X^1, X^2, f^1)$ for two assets and one corresponding factor. Consider the case where our signature trading strategy is truncated at level $M=2$, then the truncated expected signature, $\Ex( \hat{\Z}^{LL,\leq M}_{0,T})$ will have 
    $$
    \vert \W_{d+N+1}^M \vert = \vert \W_{4}^{2} \vert = \sum^2_{k=0} 4^k = 21
    $$
    terms. Notably, the 21 \textit{words} associated with these signature terms $\W_{4}^{2}$ are defined as:
    $$
    \W_{4}^{2} = \left\{ \bluebf{\emptyset}, \bluebf{0}, \bluebf{1}, \bluebf{2}, \bluebf{3}, \bluebf{00}, \bluebf{01}, \bluebf{02}, \bluebf{03}, \bluebf{10}, \bluebf{11}, \bluebf{12}, \bluebf{13}, \bluebf{20}, \bluebf{21}, \bluebf{22}, \bluebf{23}, \bluebf{30}, \bluebf{31}, \bluebf{32}, \bluebf{33} \right\}
    $$
    and so any linear functional associated with the truncated expected signature, $\Ex( \hat{\Z}^{LL,\leq M}_{0,T})$, will have at most 21 non-zero terms. We can see that if we concatenate a linear functional $\ell$ with the letter $\bluebf{f}(m)$, then, it must be applied to the truncated signature $\hat{\Z}^{LL,\leq M+1}_{0,T}$ and the words associated with $\ell \bluebf{f}(m)$ will be $\bluebf{wf}(m)$ where $\bluebf{w} \in \W_{4}^{2}$. Throughout, we will refer to terms of the truncated signature via \textit{words} such as $\bluebf{w}, \bluebf{v} \in W_{d+N+1}^M$.
\end{example}

\begin{lemma}{(Variance of the PnL of a signature trading strategy under exogenous signal).} \label{thm:pnl_var}
    Let $X$ be a $d$-dimensional tradable stochastic process and let $f$ be a $N$-dimensional un-tradable factor process. Define $\hat{Z}_t \coloneqq (t, X_t, f_t)$ as the \textit{market factor process} and $\ell_1, \dots, \ell_d \in T((\R^{N+d+1})^*)$ and define our trading strategy as $ \langle \ell_m, \hat{\Z}_{0,s}^{< \infty} \rangle$ for $m=1,\dots, d$. Let the expected PnL of the strategy between time $0$ and time $T$ be defined as in (\ref{eq:exp_lead_lag_pnl}).
    Then the Variance of the PnL at time $T$ is defined as
    \begin{align*} 
        \textup{Var}(V_T) =\sum_{m =1}^d \sum_{n=1}^d \langle \ell_m \bluebf{f}(m) \shuffle \ell_n \bluebf{f}(n), \Ex( \hat{\Z}^{LL}_{0,T}) \rangle -  \langle \ell_m \bluebf{f}(m), \Ex( \hat{\Z}^{LL}_{0,T}) \rangle \langle \ell_n \bluebf{f}(n), \Ex( \hat{\Z}^{LL,}_{0,T}) \rangle.
    \end{align*}
    where $\shuffle$ is the shuffle product defined in Definition \ref{defn:shuffle_product}.
\end{lemma}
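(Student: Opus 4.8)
The plan is to compute the variance straight from its definition, $\textup{Var}(V_T) = \Ex(V_T^2) - \big(\Ex(V_T)\big)^2$, by combining the PnL representation of Theorem~\ref{thm:int_sig} with the shuffle identity for products of linear functionals on a signature. First I would start from $V_T = \sum_{m=1}^d \langle \ell_m \bluebf{f}(m), \hat{\Z}^{LL,<\infty}_{0,T} \rangle$, the (limiting) identity supplied by Theorem~\ref{thm:int_sig}. Squaring and expanding the finite sum over the $d$ assets gives $V_T^2 = \sum_{m=1}^d \sum_{n=1}^d \langle \ell_m \bluebf{f}(m), \hat{\Z}^{LL}_{0,T} \rangle \langle \ell_n \bluebf{f}(n), \hat{\Z}^{LL}_{0,T} \rangle$.

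Next I would invoke the fundamental algebraic property of the shuffle product (Definition~\ref{defn:shuffle_product}): for any two linear functionals $a,b$ on the tensor algebra and the signature $\bX$ of any geometric rough path one has $\langle a, \bX \rangle \langle b, \bX \rangle = \langle a \shuffle b, \bX \rangle$. Applying this with $a = \ell_m \bluebf{f}(m)$, $b = \ell_n \bluebf{f}(n)$, and $\bX = \hat{\Z}^{LL}_{0,T}$ collapses each summand, so that $V_T^2 = \sum_{m,n} \langle \ell_m \bluebf{f}(m) \shuffle \ell_n \bluebf{f}(n), \hat{\Z}^{LL}_{0,T}\rangle$. Taking expectations, and using linearity of $\Ex$ together with linearity of the pairing $\langle \cdot,\cdot\rangle$ in its second argument, yields $\Ex(V_T^2) = \sum_{m=1}^d \sum_{n=1}^d \langle \ell_m \bluebf{f}(m) \shuffle \ell_n \bluebf{f}(n), \Ex(\hat{\Z}^{LL}_{0,T}) \rangle$. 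On the other hand, squaring the expression \eqref{eq:exp_lead_lag_pnl} for $\Ex(V_T)$ gives $\big(\Ex(V_T)\big)^2 = \sum_{m=1}^d \sum_{n=1}^d \langle \ell_m \bluebf{f}(m), \Ex(\hat{\Z}^{LL}_{0,T}) \rangle \langle \ell_n \bluebf{f}(n), \Ex(\hat{\Z}^{LL}_{0,T}) \rangle$. Subtracting the two double sums term by term produces exactly the claimed formula.

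The one step requiring genuine care — and the main obstacle — is not the algebra, which is routine once the shuffle identity is in hand, but the limit hidden in the "$\approx$" of Theorem~\ref{thm:int_sig}: that identity holds only as the mesh of the sampling times tends to zero, in probability or in $L^p$. To obtain the variance identity rigorously one needs $L^2$ convergence of the Hoff lead-lag Riemann sums to $V_T$, so that $\Ex(V_T^2)$ is the limit of the corresponding discretised second moments; this in turn rests on the square-integrability condition $\int_0^T \xi_s^2\,ds < \infty$ from Definition~\ref{def:exog_sig_strat} and on integrability of the finitely many signature terms of $\hat{Z}^{LL}$ that actually appear in the finite-order functionals $\ell_m \bluebf{f}(m) \shuffle \ell_n \bluebf{f}(n)$. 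Under these hypotheses — which are implicit already in writing $\Ex(\hat{\Z}^{LL}_{0,T})$ — the interchange of expectation and limit is justified and the computation closes.
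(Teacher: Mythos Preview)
Your proof is correct and follows the same route as the paper: start from the PnL representation of Theorem~\ref{thm:int_sig}, expand $V_T^2$ as a double sum, apply the shuffle product identity (Lemma~\ref{eq:shuffle_product_property}) to each term, take expectations, and subtract $(\Ex V_T)^2$. Your discussion of the $L^2$ convergence needed to pass the ``$\approx$'' in \eqref{eq:pnl_leadlag} to second moments is actually more careful than the paper's own one-line proof, which simply records the formulas \eqref{eq:exp_PnL2}--\eqref{eq:exp_PnL3} without commenting on the limit.
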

\begin{proof}
    This results follows simply from Theorem \ref{thm:int_sig} and the fact that variance of a random variable is defined as $\textup{Var}(V_T) = \Ex (V_T^2) - (\Ex(V_T))^2$, where $\Ex(V_T)$ is defined in (\ref{eq:exp_lead_lag_pnl}) and
    \begin{align}
    \label{eq:exp_PnL2} \Ex (V_T^2) = & \sum_{m =1}^d \sum_{n =1}^d \langle \ell_m \bluebf{f}(m) \shuffle \ell_n \bluebf{f}(n), \Ex( \hat{\Z}^{LL}_{0,T}) \rangle \\ 
    \label{eq:exp_PnL3} \Ex(V_T)^2 = & \sum_{m =1}^d \sum_{n =1}^d \langle \ell_m \bluebf{f}(m), \Ex( \hat{\Z}^{LL}_{0,T}) \rangle \langle \ell_n \bluebf{f}(n), \Ex( \hat{\Z}^{LL}_{0,T}) \rangle.
    \end{align}
\end{proof}
\section{Main Result
} \label{sec:original}
In this section we derive our main result, which is an explicit and concise expression for the optimal signature trading strategy in the presence of exogenous market signals considering a pathwise version of the classical mean-variance criterion. Using the trading strategy expected PnL and variance of PnL as defined in Chapter \ref{sec:market_model}, we show that the path-dependent mean-variance optimisation problem is convex in the weights of the linear functionals $\ell_1, \dots, \ell_d$.

\begin{restatable}{theorem}{origthm}{(Optimal Signature Trading Strategy).} \label{thm:orig_solution} 
    Denote $T \in \N$ the terminal time.  Let $X$ be a $d$-dimensional tradable stochastic process and let $f$ be a $N$-dimensional un-tradable factor process. Define $\hat{Z}_t \coloneqq (t, X_t, f_t)$ as the \textit{market factor process}. Define a signature trading strategy $\xi_t^m$  through a linear functional on the signature of the market factor process, i.e. $\xi_t^m = \langle \ell_m, \hat{\Z}_{0,t} \rangle$. Then, for a given truncation level $M$, the mean-variance optimal signature trading strategy $\ell^* = (\ell_1^*,\dots,\ell_d^*),$ $\ell_m^* \in T^{(M)}((\R^{N+d+1})^*)$, satisfies
    \begin{align}\label{eq:constrained_optimisation}
    \ell_m^* :=\argmax_{\substack{\ell_m \in T^{(M)}((\R^{N+d+1})^*) \\ \text{Var}(V_T)\leq \Delta}}  \sum_{m =1 }^d  \left\langle \ell_m \bluebf{f}(m), \Ex( \hat{\Z}^{LL,< \infty}_{0,T}) \right\rangle, \quad \forall m \in \{1,\dots,d\}
\end{align}
and is given by
\begin{align*}
\langle \ell_m^*, e_{ \bluebf{w}} \rangle = \frac{((\Sigma^{\textup{sig}})^{-1} \mathbf{\mu}^{\textup{sig}} )_{\bluebf{wf}(m)}}{2\lambda}, \quad m \in \{1,\dots,d\}, \bluebf{w} \in \W^M_{N+d+1}
\end{align*}
where the variance-scaling parameter $\lambda$ is given by
    \begin{align*}
    \lambda = \frac{1}{2 \sqrt{\Delta}} \left( \sum_{m=1}^d \sum_{n=1}^d \sum_{ \bluebf{w} \in \W_{N+d+1}^M} \sum_{ \bluebf{v} \in \W_{N+d+1}^M} ((\Sigma^{\textup{sig}})^{-1} \mathbf{\mu}^{\textup{sig}})_{\bluebf{wf}(m)} ((\Sigma^{\textup{sig}})^{-1} \mu^{\textup{sig}})_{\bluebf{vf}(n)} \Sigma^{\textup{sig}}_{\bluebf{wf}(m), \bluebf{vf}(n)}   \right)^{\frac{1}{2}}.
\end{align*}
We define the “Signature PnL attribution” as the $d \cdot \vert \W_{N+d+1}^M \vert$-length vector $\mu^{\textup{sig}} = (\mu^{\textup{sig}}_1, \dots, \mu^{\textup{sig}}_d)^\top$ as
\begin{align} \label{eq:mu_sig}
    \mu^{\textup{sig}}_{\bluebf{wf}(m)} = \left\langle \bluebf{wf}(m), \Ex (\hat{\Z}^{LL, < \infty}_{0,T}) \right\rangle, \quad & \forall \bluebf{w} \in \W_{N+d+1}^M,  m \in \{1,\dots,d\} 
\end{align}

\noindent and the “Signature PnL covariances” as the $d \cdot \vert \W_{N+d+1}^M \vert \times d \cdot \vert \W_{N+d+1}^M \vert $ matrix $\Sigma^{\textup{sig}}$ as
\begin{align} \label{eq:Sigma_sig}
\Sigma^{\textup{sig}}_{\bluebf{wf}(m),\bluebf{vf}(n)} = \left\langle \bluebf{wf}(m) \shuffle \bluebf{vf}(n), \Ex (\hat{\Z}^{LL, < \infty}_{0,T}) \right\rangle - \left\langle \bluebf{wf}(m), \Ex (\hat{\Z}^{LL, < \infty}_{0,T}) \right\rangle \left\langle \bluebf{vf}(n), \Ex (\hat{\Z}^{LL, < \infty}_{0,T}) \right\rangle
\end{align}

\noindent for all $\bluebf{w},\bluebf{v} \in \W_{N+d+1}^M$ and $m, n \in \{1,\dots,d\}$.
\end{restatable}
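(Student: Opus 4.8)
\emph{Proof sketch.} The plan is to linearise \eqref{eq:constrained_optimisation} in the coordinates of the linear functionals and recognise the result as a convex quadratically-constrained linear programme, solved by the textbook Lagrangian argument. First I would identify the tuple $(\ell_1,\dots,\ell_d)$ with the column vector $\ell\in\R^{d\,|\W_{N+d+1}^M|}$ whose entries are $\ell_{\bluebf{w},m}:=\langle\ell_m,e_{\bluebf{w}}\rangle$, for $\bluebf{w}\in\W_{N+d+1}^M$ and $m\in\{1,\dots,d\}$. Writing $\ell_m=\sum_{\bluebf{w}}\langle\ell_m,e_{\bluebf{w}}\rangle\,e_{\bluebf{w}}$ and using that right-concatenation by $\bluebf{f}(m)$ sends the basis word $e_{\bluebf{w}}$ to $e_{\bluebf{wf}(m)}$, the expected-PnL identity \eqref{eq:exp_lead_lag_pnl} becomes the linear form $\Ex(V_T)=\sum_m\sum_{\bluebf{w}}\langle\ell_m,e_{\bluebf{w}}\rangle\,\mu^{\textup{sig}}_{\bluebf{wf}(m)}=\ell^\top\mu^{\textup{sig}}$, with $\mu^{\textup{sig}}$ the vector of \eqref{eq:mu_sig}.

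Next I would expand the variance. Bilinearity of the shuffle product gives $\ell_m\bluebf{f}(m)\shuffle\ell_n\bluebf{f}(n)=\sum_{\bluebf{w},\bluebf{v}}\langle\ell_m,e_{\bluebf{w}}\rangle\langle\ell_n,e_{\bluebf{v}}\rangle\,\big(\bluebf{wf}(m)\shuffle\bluebf{vf}(n)\big)$, and the rank-one term $\langle\ell_m\bluebf{f}(m),\Ex(\hat{\Z}^{LL,<\infty}_{0,T})\rangle\langle\ell_n\bluebf{f}(n),\Ex(\hat{\Z}^{LL,<\infty}_{0,T})\rangle$ of Lemma~\ref{thm:pnl_var} factorises over $(\bluebf{w},\bluebf{v})$ the same way; substituting both into Lemma~\ref{thm:pnl_var} yields $\textup{Var}(V_T)=\ell^\top\Sigma^{\textup{sig}}\ell$ with $\Sigma^{\textup{sig}}$ exactly as in \eqref{eq:Sigma_sig}. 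This matrix is symmetric (commutativity of $\shuffle$; symmetry of the rank-one term) and positive semi-definite, being the covariance matrix of the finite family of scalars $\big(\langle e_{\bluebf{w}}\bluebf{f}(m),\hat{\Z}^{LL,<\infty}_{0,T}\rangle\big)_{\bluebf{w},m}$. Hence $\{\ell:\ell^\top\Sigma^{\textup{sig}}\ell\le\Delta\}$ is convex and the objective is linear, which already establishes the convexity claimed just before the theorem.

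To solve the programme I would assume $\Sigma^{\textup{sig}}\succ0$ (otherwise restrict to $\mathrm{range}(\Sigma^{\textup{sig}})$ and use a pseudo-inverse) and $\mu^{\textup{sig}}\neq0$ (the degenerate case being trivial). With $\Delta>0$ the point $\ell=0$ is strictly feasible, so KKT is necessary and sufficient; stationarity of $\mathcal{L}(\ell,\lambda)=\ell^\top\mu^{\textup{sig}}-\lambda(\ell^\top\Sigma^{\textup{sig}}\ell-\Delta)$ gives $\mu^{\textup{sig}}=2\lambda\Sigma^{\textup{sig}}\ell$, i.e. $\ell^*=\tfrac{1}{2\lambda}(\Sigma^{\textup{sig}})^{-1}\mu^{\textup{sig}}$, which in coordinates reads $\langle\ell_m^*,e_{\bluebf{w}}\rangle=\tfrac{1}{2\lambda}\big((\Sigma^{\textup{sig}})^{-1}\mu^{\textup{sig}}\big)_{\bluebf{wf}(m)}$. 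Since a non-zero linear functional attains its maximum over an ellipsoid on the boundary, the constraint is active, so I would substitute $\ell^*$ into $\ell^{*\top}\Sigma^{\textup{sig}}\ell^*=\Delta$; one factor of $\Sigma^{\textup{sig}}$ cancels, leaving $\tfrac{1}{4\lambda^2}(\mu^{\textup{sig}})^\top(\Sigma^{\textup{sig}})^{-1}\mu^{\textup{sig}}=\Delta$, hence $\lambda=\tfrac{1}{2\sqrt{\Delta}}\big((\mu^{\textup{sig}})^\top(\Sigma^{\textup{sig}})^{-1}\mu^{\textup{sig}}\big)^{1/2}$. Expanding $(\mu^{\textup{sig}})^\top(\Sigma^{\textup{sig}})^{-1}\mu^{\textup{sig}}=\big((\Sigma^{\textup{sig}})^{-1}\mu^{\textup{sig}}\big)^\top\Sigma^{\textup{sig}}\big((\Sigma^{\textup{sig}})^{-1}\mu^{\textup{sig}}\big)$ as a sum over index pairs $(\bluebf{w},m),(\bluebf{v},n)$ then reproduces the displayed formula for $\lambda$.

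The main obstacle is not the optimisation — which is standard once linearised — but the bookkeeping that makes the linearisation legitimate: one must check that although only finitely many coordinates of $\ell$ are free, every pairing that appears, in particular the shuffled words $\bluebf{wf}(m)\shuffle\bluebf{vf}(n)$ of degree up to $2M+2$, is taken against the well-defined untruncated expected lead-lag signature $\Ex(\hat{\Z}^{LL,<\infty}_{0,T})$ supplied by Theorem~\ref{thm:int_sig}, so that $\mu^{\textup{sig}}$ and $\Sigma^{\textup{sig}}$ are genuine finite-dimensional objects. A secondary difficulty is the invertibility of $\Sigma^{\textup{sig}}$: as a covariance matrix it is only guaranteed positive semi-definite, so a fully rigorous statement requires either a non-degeneracy hypothesis on $\Ex(\hat{\Z}^{LL,<\infty}_{0,T})$ or the restriction of the whole argument to $\mathrm{range}(\Sigma^{\textup{sig}})$, replacing $(\Sigma^{\textup{sig}})^{-1}$ by the Moore--Penrose pseudo-inverse.
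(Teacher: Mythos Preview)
Your proposal is correct and follows essentially the same route as the paper: identify the coordinates $\langle\ell_m,e_{\bluebf{w}}\rangle$, express $\Ex(V_T)$ and $\textup{Var}(V_T)$ as $\ell^\top\mu^{\textup{sig}}$ and $\ell^\top\Sigma^{\textup{sig}}\ell$, form the Lagrangian, solve the first-order conditions to obtain $\ell^*=\tfrac{1}{2\lambda}(\Sigma^{\textup{sig}})^{-1}\mu^{\textup{sig}}$, and back-substitute into the active variance constraint to recover $\lambda$. Your write-up is in fact slightly more careful than the paper's on two points the paper leaves implicit: you invoke Slater's condition to justify KKT sufficiency and you explain why the constraint must bind, and you flag that $\Sigma^{\textup{sig}}$, being a covariance matrix, is only guaranteed positive semi-definite so that invertibility is an additional hypothesis (the paper simply writes ``assuming that $\Sigma^{\textup{sig}}$ is invertible'').
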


\begin{proof}
 In order to solve the constrained optimisation problem \eqref{eq:constrained_optimisation}, we can introduce the Lagrangian
\begin{align*}
    \L (\ell_1, \dots, \ell_d, \lambda) = \Ex (V_T) - \lambda(\text{Var}(V_T)-\Delta)
\end{align*}
and we are interested in finding saddle points $\ell_1^*, \dots, \ell_d^* \in T^{(M)}((\R^{N+d+1})^*), \lambda^0 \in \R$ that satisfy $\nabla \L(\ell_1^*, \dots, \ell_d^*, \lambda) = 0$. For this purpose recall Theorem \ref{thm:int_sig}, where the expected terminal PnL is given by \eqref{eq:exp_lead_lag_pnl}.
Furthermore, we can decompose the variance of the terminal PnL, given in equations (\ref{eq:exp_PnL2}) and (\ref{eq:exp_PnL2}).
Next, we can compute for each asset $m \in \{1,\dots,d\}$ and each word $\bluebf{w} \in \W_{N+d+1}^M$, the gradients of \eqref{eq:exp_lead_lag_pnl}, \eqref{eq:exp_PnL2} and \eqref{eq:exp_PnL3} with respect to $\langle \ell_m, e_{\bluebf{w}} \rangle$
\begin{align*}
\frac{\partial \Ex (V_T)}{\partial \langle \ell_m, e_{\bluebf{w}} \rangle}  & = \langle \bluebf{wf}(m), \Ex (\hat{\Z}^{LL,<\infty}_{0,T})\rangle \\
\frac{\partial \Ex (V_T^2)}{\partial \langle \ell_m, e_{\bluebf{w}} \rangle} & = 2 \sum_{n=1}^d \sum_{\bluebf{v} \in \W_{N+d+1}^M} \langle \ell_n, e_{\bluebf{v}} \rangle \langle \bluebf{wf}(m) \shuffle \bluebf{vf}(n), \Ex (\hat{\Z}^{LL,<\infty}_{0,T}) \rangle 
\\
\frac{\partial \Ex (V_T)^2}{\partial \langle \ell_m, e_{\bluebf{w}} \rangle} & = 2 \sum_{n=1}^d \sum_{\bluebf{v} \in \W_{N+d+1}^M}  \langle \ell_n, e_{\bluebf{v}} \rangle \langle \bluebf{wf}(m), \Ex (\hat{\Z}^{LL,<\infty}_{0,T}) \rangle  \langle \bluebf{vf}(n), \Ex (\hat{\Z}^{LL,<\infty}_{0,T}) \rangle 
\end{align*}
Using these computed gradients, we can compute the gradient of the Lagrangian with respect to $\langle \ell_m, e_{\bluebf{w}} \rangle$
\begin{align*}
\frac{\partial \L (\ell_1, \dots, \ell_d, \lambda)}{\partial \langle  \ell_m, e_{\bluebf{w}} \rangle} = \langle \bluebf{wf}(m), b \rangle - 2\lambda \sum_{n=1}^d \sum_{\bluebf{v} \in \W_{N+d+1}^M} \langle \ell_n, e_{\bluebf{v}} \rangle \Sigma^{\textup{sig}}_{\bluebf{wf}(m), \bluebf{vf}(n)}
\end{align*}
where we define $\mu^{\textup{sig}} = (\mu^{\textup{sig}}_1, \dots, \mu^{\textup{sig}}_d)^\top$ as
\begin{align*}
    \mu^{\textup{sig}}_{\bluebf{wf}(m)} = \langle \bluebf{wf}(m), \Ex (\hat{\Z}^{LL, < \infty}_{0,T}) \rangle, \quad & \forall \bluebf{w} \in \W_{N+d+1}^M,  m \in \{1,\dots,d\} 
\end{align*}
and the matrix $\Sigma^{\textup{sig}}$ as
\begin{align*}
\Sigma^{\textup{sig}}_{\bluebf{wf}(m),\bluebf{vf}(n)} = \langle \bluebf{wf}(m) \shuffle \bluebf{vf}(n), \Ex (\hat{\Z}^{LL, < \infty}_{0,T}) \rangle - \langle \bluebf{wf}(m), \Ex (\hat{\Z}^{LL, < \infty}_{0,T}) \rangle \langle \bluebf{vf}(n), \Ex (\hat{\Z}^{LL, < \infty}_{0,T}) \rangle
\end{align*}
for all $\bluebf{w},\bluebf{v} \in \W_{N+d+1}^M$ and $m, n \in \{1,\dots,d\}$.

Using the first order conditions, observe that we obtain the system of linear equations
\begin{align}\label{eq:system}
\mu^{\textup{sig}}_{\bluebf{wf}(m)} = 2 \lambda \sum_{n=1}^d \sum_{ \bluebf{v} \in \W_{N+d+1}^M} \langle \ell_n, e_{ \bluebf{v}} \rangle \Sigma^{\textup{sig}}_{\bluebf{wf}(m), \bluebf{vf}(n)}, m \in \{1,\dots,d \}, \bluebf{w} \in \W_{N+d+1}^M.
\end{align}
Under truncation, we find that \eqref{eq:system} is a system of
\begin{align*}
d_M & = \vert I \vert \cdot \vert \W_{N+d+1}^M \vert \\
 & = d \sum_{k=0}^M (N+d+1)^k \\
 & =  (N+d+1)^{M+1} - 1
\end{align*}
equations and $d_M$ unknowns. Hence, assuming that $\Sigma^{\textup{sig}}$ is invertible, we can solve \eqref{eq:system} and obtain for $m \in \{1,\dots,d \}, \bluebf{w} \in \mathcal{W}^M_{N+d+1}$
\begin{align}\label{eq:solution}
\langle \ell_m^*, e_{ \bluebf{w}} \rangle = \frac{((\Sigma^{\textup{sig}})^{-1} \mu^{\textup{sig}} )_{\bluebf{wf}(m)}}{2\lambda}
\end{align}
where we assumed by complementary slackness (KKT) that $\lambda \neq 0$. We can then substitute \eqref{eq:solution} into the variance constraint to obtain two solutions for $\lambda$
\begin{align*}
\lambda_\pm = & \pm \frac{1}{2 \sqrt{\Delta}} \left( \sum_{m=1}^d \sum_{n=1}^d \sum_{ \bluebf{w} \in \W_{N+d+1}^M} \sum_{ \bluebf{v} \in \W_{N+d+1}^M} \langle \ell_m, e_{\bluebf{w}} \rangle \langle \ell_n, e_{\bluebf{v}} \rangle \Sigma^{\textup{sig}}_{\bluebf{wf}(m), \bluebf{vf}(n)}  \right)^{\frac{1}{2}} \\
= & \pm \frac{1}{2 \sqrt{\Delta}} \left( \sum_{m=1}^d \sum_{n=1}^d \sum_{ \bluebf{w} \in \W_{N+d+1}^M} \sum_{ \bluebf{v} \in \W_{N+d+1}^M} ((\Sigma^{\textup{sig}})^{-1} \mu^{\textup{sig}})_{\bluebf{wf}(m)} ((\Sigma^{\textup{sig}})^{-1} \mu^{\textup{sig}})_{\bluebf{vf}(n)} \Sigma^{\textup{sig}}_{\bluebf{wf}(m), \bluebf{vf}(n)}   \right)^{\frac{1}{2}}
\end{align*}
Hence, we obtain the solution
\begin{align*}
    \langle \ell_m^*, e_{\bluebf{w}} \rangle = \frac{(\Delta)^{\frac{1}{2}}((\Sigma^{\textup{sig}})^{-1} \mu^{\textup{sig}})_{\bluebf{wf}(m)}}{\left( \sum\limits_{m,n \in \{1,\dots,d\}} \sum\limits_{\bluebf{w},\bluebf{v} \in W_{N+d+1}^N} ((\Sigma^{\textup{sig}})^{-1} \mu^{\textup{sig}})_{\bluebf{wf}(m)} ((\Sigma^{\textup{sig}})^{-1} \mu^{\textup{sig}})_{\bluebf{vf}(n)} \Sigma^{\textup{sig}}_{\bluebf{wf}(m), \bluebf{vf}(n)} \right)^{\frac{1}{2}}}\text{ }
\end{align*}
 for $m \in \{ 1, \dots, d \}, \bluebf{w} \in \W^M_{N+d+1}$.
\end{proof}
In the remainder of this section we first give an example how the main theorem can be used and then proceed to draw a parallel to the classical case and explain how our theorem extends classical formulas to the path-dependent case.

\begin{remark}
    The matrix $\Sigma^{\textup{sig}}$ and vector $\mu^{\textup{sig}}$ are just placeholders for different terms and combinations of $\Ex (\hat{\Z}^{LL,<\infty}_{0,T})$. Hence, all we need in order to know what our optimal functional is, is the expected Hoff lead-lag signature, and this is straightforward to compute for reasonable orders of truncation and number of assets and factors. The following example aims to provide some practical intuition behind how the optimal strategy is calculated and the relation between the signature and words on the tensor algebra.
\end{remark}

\begin{example}
    Let us consider the case of when we have two assets $X = (X^1, X^2)$, one factor $f$, such that $d=2$, $N=1$. Then we define the 4-dimensional market factor process $\hat{Z}\coloneqq (t,X,f)$. \par
     Let us fix the truncation level to be $M=2$. We remark that Example~\ref{ex:word_example} is of the same form, and so the 2nd level truncated signature $ \Z_{0,t}^{\leq 2}$ has $\vert \W^2_{4} \vert = 21$ terms, namely the associated words are:
    \begin{align} \label{eq:words_ex}
    \W_{4}^{2} = \left\{ \bluebf{\emptyset}, \bluebf{0}, \bluebf{1}, \bluebf{2}, \bluebf{3}, \bluebf{00}, \bluebf{01}, \bluebf{02}, \bluebf{03}, \bluebf{10}, \bluebf{11}, \bluebf{12}, \bluebf{13}, \bluebf{20}, \bluebf{21}, \bluebf{22}, \bluebf{23}, \bluebf{30}, \bluebf{31}, \bluebf{32}, \bluebf{33} \right\}
    \end{align}
\end{example}

Hence, the optimal linear signature trading strategy will correspond to two linear functionals (one for each asset) $\ell_1, \ell_2$, each of length 21, defined as
\begin{align*}
    \xi_t^1 &= \langle \ell_1, \hat{\Z}_{0,t}^{\leq 2} \rangle \\
    \xi_t^2 &= \langle \ell_2, \hat{\Z}_{0,t}^{\leq 2} \rangle
\end{align*}

In the above theorem, we can see the vector $\mu^{\textup{sig}} = (\mu^{\textup{sig}}_1, \mu^{\textup{sig}}_2)$ will be defined as follows:
$$
\mu^{\textup{sig}} \coloneqq \left\{\Ex \left(\hat{\Z}_{\bluebf{w f}(m)}^{LL, \leq 3} \right)\right\}_{\bluebf{w} \in \W^2_{4}, m=1,2}.
$$
Hence, $\mu^{\textup{sig}}$ will be a $d \cdot \vert \W^M_{N+d+1} \vert = 2 \times 21 = 42$ length vector, containing elements of the expected lead-lag signature of order 3.  \par
Recall that the shift operator $\bluebf{f}(m)$ is defined for each asset $m=1,2$ as:
\begin{align*}
    \bluebf{f}(1) & = \bluebf{5} \\
    \bluebf{f}(2) & = \bluebf{6}.
\end{align*}
which correspond to the 5th and 6th dimensions of the lead-lag process. Hence, we see that $\mu^{\textup{sig}}_1$ contains the expected lead-lag signature terms corresponding to the index of words
$$
    I_1 \coloneqq \left\{ \bluebf{5}, \bluebf{05}, \bluebf{15}, \bluebf{25}, \bluebf{35}, \bluebf{005}, \bluebf{015}, \bluebf{025}, \bluebf{035}, \bluebf{105}, \bluebf{115}, \bluebf{125}, \bluebf{135}, \bluebf{205}, \bluebf{215}, \bluebf{225}, \bluebf{235}, \bluebf{305}, \bluebf{315}, \bluebf{325}, \bluebf{335} \right\}
$$
and $\mu^{\textup{sig}}_2$ contains the expected lead-lag signature terms corresponding to the index of words
$$
    I_2 \coloneqq \left\{ \bluebf{6}, \bluebf{06}, \bluebf{16}, \bluebf{26}, \bluebf{36}, \bluebf{006}, \bluebf{016}, \bluebf{026}, \bluebf{036}, \bluebf{106}, \bluebf{116}, \bluebf{126}, \bluebf{136}, \bluebf{206}, \bluebf{216}, \bluebf{226}, \bluebf{236}, \bluebf{306}, \bluebf{316}, \bluebf{326}, \bluebf{336} \right\},
$$
such that we have:
\begin{align*}
    \mu^{\textup{sig}} = 
    \begin{bmatrix}
\left\langle \bluebf{5}, \Ex \left(\hat{\Z}^{LL, \leq 3} \right) \right\rangle \\
\vdots \\
\left\langle \bluebf{335}, \Ex \left(\hat{\Z}^{LL, \leq 3} \right) \right\rangle \\
\left\langle \bluebf{6}, \Ex \left(\hat{\Z}^{LL, \leq 3} \right) \right\rangle \\
\vdots \\
\left\langle \bluebf{336}, \Ex \left(\hat{\Z}^{LL, \leq 3} \right) \right\rangle
\end{bmatrix}
\Bigg\} \text{ 42 elements}
\end{align*}
Intuitively, we can think of each element of $\mu^{\textup{sig}}$ as the expected \textit{attribution} that each signature term has to a given assets future returns. For example, consider the term of the signature corresponding to the word $\bluebf{3}$, i.e $\langle \bluebf{3}, \hat{\Z}^{\leq 2} \rangle$, which corresponds to the increments of the factor signal, i.e
$$
\langle \bluebf{3}, \hat{\Z}^{\leq 2}_{0,T} \rangle = \int^T_0 \circ dZ^3_t
$$
Then the expected \textit{attribution} that this has on the increment of asset 1, can be defined as
$$
\Ex \left( \int^T_0 \langle \bluebf{3}, \hat{\Z}^{\leq 2}_{0,t} \rangle dX^1_t \right) = \left \langle \bluebf{3f}(1), \Ex \left( \hat{\Z}^{LL,\leq 3}_{0,T} \right) \right\rangle = \left \langle \bluebf{35}, \Ex \left(  \hat{\Z}^{LL,\leq 3}_{0,T} \right) \right\rangle = \mu^{\textup{sig}}_{\bluebf{35}},
$$
therefore the vector $ \mu^{\textup{sig}}$ consists of expected PnL attribution for each of the 21 terms of the signature of the factor process that we observe. \par

Now, we consider the $42 \times 42$ matrix $\Sigma^{\textup{sig}}$, defined element-wise as
\begin{align*}
\Sigma^{\textup{sig}}_{\bluebf{wf}(m),\bluebf{vf}(n)} = \left\langle \bluebf{wf}(m) \shuffle \bluebf{vf}(n), \Ex (\hat{\Z}^{LL, < \infty}_{0,T}) \right\rangle - \left\langle \bluebf{wf}(m), \Ex (\hat{\Z}^{LL, < \infty}_{0,T}) \right\rangle \left\langle \bluebf{vf}(n), \Ex (\hat{\Z}^{LL, < \infty}_{0,T}) \right\rangle
\end{align*}
for all $\bluebf{w},\bluebf{v} \in \W_{4}^2$ and $m, n =1,2$. \par

Each element of this matrix represents a covariance term between the PnL \textit{attributions} that we discussed previously. For example, let us observe an arbitrary element of $\Sigma^{\textup{sig}}$. Let $w = \bluebf{01}, \bluebf{v} = \bluebf{23}, m=1, n=2$.  Then $\bluebf{wf}(m) = \bluebf{015}, \bluebf{vf}(n) = \bluebf{236}$ and the corresponding element in the matrix is given as
$$
\Sigma^{\textup{sig}}_{\bluebf{015},\bluebf{236}} = \left\langle \bluebf{015} \shuffle \bluebf{236}, \Ex (\hat{\Z}^{LL, \leq 6}_{0,T}) \right\rangle - \left\langle \bluebf{015}, \Ex (\hat{\Z}^{LL, \leq 3}_{0,T}) \right\rangle \left\langle \bluebf{236}, \Ex (\hat{\Z}^{LL, \leq 3}_{0,T}) \right\rangle
$$
where $\bluebf{015} \shuffle \bluebf{236}$ is a sum of 20 different words in $\W_4^6$. We refer the reader also to Example~\ref{ex:shuffle_ex} for another example of the shuffle product. It is evident that, while our linear functional is only applied to the second order truncated signature, we require the sixth order signature in order to compute the covariance matrix, which can cause a computational bottleneck in practice. \par

Piecing this altogether, to obtain our optimal solution $\ell^* = (\ell_1^*, \ell_2^*)$, we have
\begin{align*}
    \ell^* = (\Sigma^{\textup{sig}})^{-1} \mu^{\textup{sig}}
\end{align*}
of which we obtain a 42-dimensional vector $\ell^*$ consisting of two length 21 vectors $\ell_1^*, \ell_2^*$. We can then compute the trading strategy, for each time $t$, as
\begin{align*}
    \xi_t^1 &= \langle \ell_1^*, \hat{\Z}_{0,t}^{\leq 2} \rangle \\
    \xi_t^2 &= \langle \ell_2^*, \hat{\Z}_{0,t}^{\leq 2} \rangle.
\end{align*}
Note also how we can explicitly calculate the expected PnL and variance of the portfolio explicitly in this framework. Let $\ell$ be the 42-length vector $\ell = (\ell_1^*, \ell_2^*)$, then we have 
\begin{align*}
    \Ex(V_T) &= \ell^\top \mu^{\textup{sig}} \\
    \textup{Var}(V_T) &= \ell^\top \Sigma^{\textup{sig}} \ell.
\end{align*}

\subsection{Sig-Factor Model vs Classical Factor Model}

Since $\hat{Z}$ is embedded with market factors $f$, the expected lead-lag signature $\Ex (\hat{\Z}^{LL,<\infty}_{0,T})$ will contain a wealth of path-dependent characteristics about our assets and how they are driven by the past price trajectory and the past trajectory of the market factors. At this point, we observe this solution is analogous is to the classical framing of an optimal factor model under the mean-variance framework, seen in \eqref{eq:factor_model_predict} and \eqref{eq:mean_var_sol}. In the classical factor model framework, for $N$ factors $f = f^1, \dots, f^N$, to obtain the the $m$-th asset position at time $t$, $\pi_t^m$, we have
\begin{align} \label{eq:factor_lin_functional}
    \pi_t^m = &  \left\langle \frac{1}{\lambda} (\Sigma^{-1} B)_m , f_t \right\rangle \\
    \notag = &  \left\langle \beta_m , f_t \right\rangle
\end{align}
where $(\Sigma^{-1} B)_m$ is the $m$-th row of the $d \times N$ matrix $\Sigma^{-1} B$. Therefore $\beta_m \coloneqq \frac{1}{\lambda} (\Sigma_t^{-1} B)_m$ represents a risk-weighted transformation of the coefficients used in the prediction step. This sequence of $N$ coefficients are then applied to the factors via an inner product to obtain the position for the $m$-th asset, $\pi_t^m \in \R$. We can see just how similar this example is to the sig-factor model.
\begin{align*} 
\xi_t^m = & \left\langle \frac{1}{2\lambda} ((\Sigma^{\textup{sig}})^{-1} \mu^{\textup{sig}})_m, \hat{\Z}_{0,t} \right\rangle \\
\notag = &  \left\langle \ell_m , \hat{\Z}_{0,t} \right\rangle
\end{align*}
Here, we have that the vector $\mu^{\textup{sig}}$ consists of the expected lead-lag signature PnL terms, for all $\vert \W_{N+d+1}^M \vert$ terms in the $M$-th order truncated signature that end in the letter $\bluebf{f}(m)$, i.e..
$$
\mu^{\textup{sig}} = \left\{\Ex \left(\hat{\Z}_{\bluebf{w f}(m)}^{LL, <\infty} \right)\right\}_{\bluebf{w} \in \W^M_{N+d+1}, m\in \{1,\dots,d\}}.
$$
and so the linear functional $\ell_m$ is a risk-adjusted weighting of coefficients that are applied to the signature terms. A sensible question now to ask is - would we produce the same portfolio if the factors $f = f^1, \dots, f^N$ in \eqref{eq:factor_lin_functional} were the terms of the signature? In this case, the answer is no. The sole reason for this is due to the \textit{prediction} phase in the classical factor model which induces asymmetric errors that are compounded when applied to the covariance matrix, and subsequently observed factors at time $t$, meaning the linear functionals $\beta_m$ and $\ell_m$ would not be the same. \par

\begin{figure}[ht]
\centering
  \includegraphics[width=\linewidth]{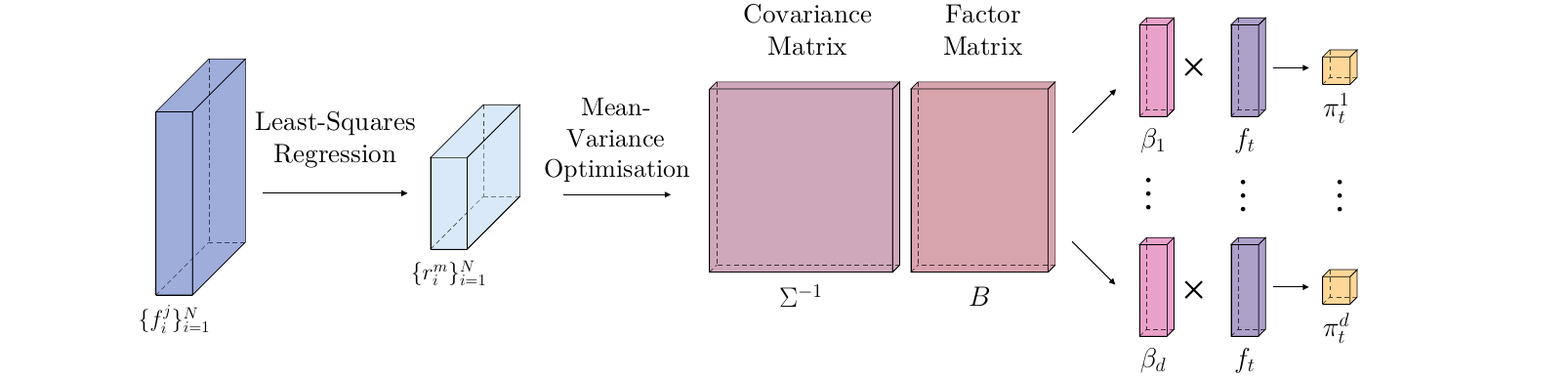}
  \caption{\centering Classical Factor Model}
  \label{fig:classic_factor_model}
\end{figure}

\begin{figure}[ht]
\centering
  \includegraphics[width=\linewidth]{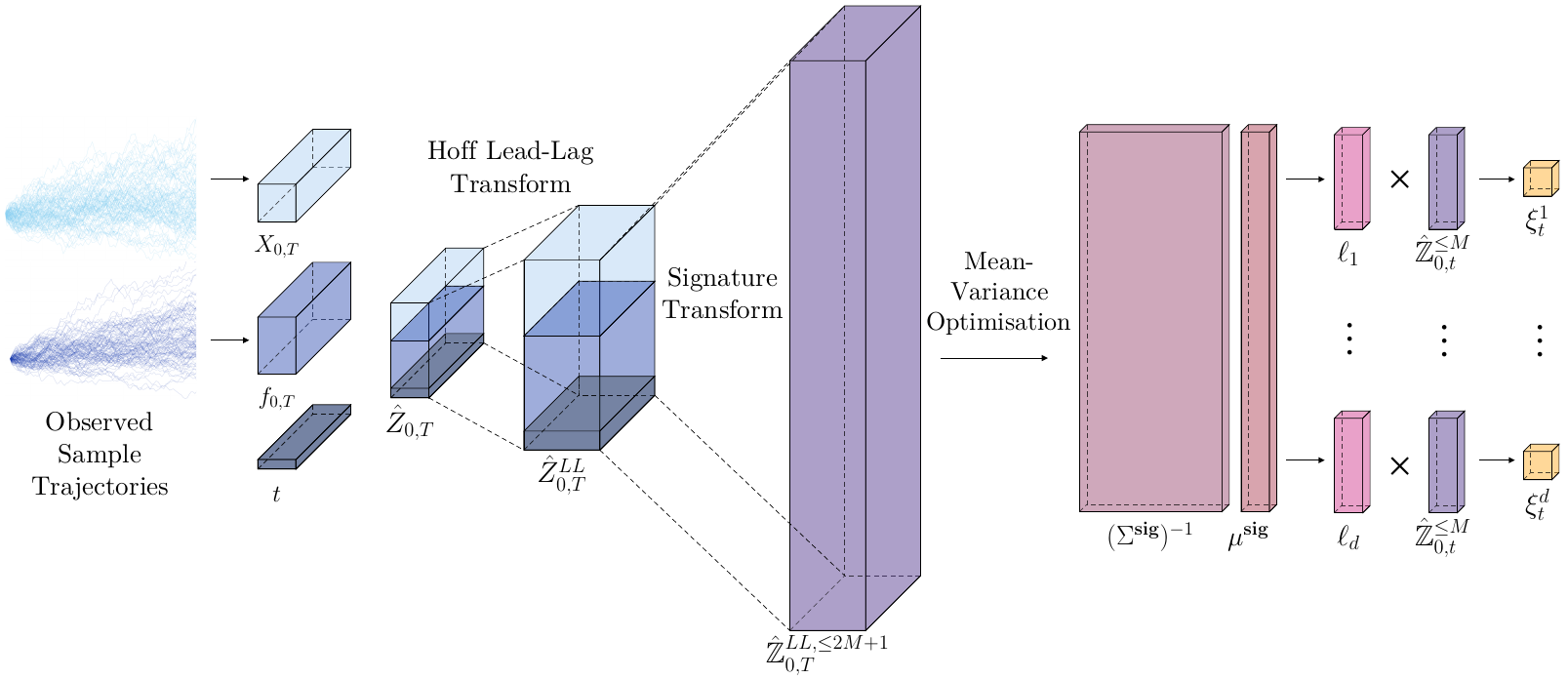}
  \caption{\centering Sig-Factor Model}
  \label{fig:sig_factor_model}
\end{figure}

Due to powerful results from rough path theory, we are able able to bypass the prediction phase by lifting and projecting our market factor process into a much higher dimensional space. Using Theorem~\ref{eq:exp_lead_lag_pnl}, we can express the expected future PnL as a linear functional on the expected Hoff lead-lag signature and then perform the optimisation \eqref{eq:constrained_optimisation} in this much higher-dimensional space, without inducing any errors originating from a least-squares regression. 

\subsection{Optimal Static Portfolio}

By design, a \textit{Signature Trading Strategy}, $\xi_t = \langle \ell, \hat{\Z}_{0,t} \rangle$, is a dynamic strategy that continuously updates its position at time $t$, depending on the value of $\hat{\Z}_{0,t}$. However, we recall that the signature is defined at each level as
$$
\hat{\Z}_{0,t} = (1, \hat{\Z}_{0,t}^1, \dots, \hat{\Z}_{0,t}^N, \dots)
$$
where the $k$-th level has $d^k$ elements. We observe that in fact the zero-th level of the signature is equal to 1, and so if we choose to only trade depending on this level of the signature, any linear functional $\ell$ applied to 1, will just return a static position for all time $t \in [0,T]$. Therefore, for a $d$-asset portfolio, we obtain
\begin{align*}
    \xi_t^1 & = \langle \ell_1, 1 \rangle = \ell_1 \in \R, \quad \forall t \in [0,T] \\
    \vdots &  \\
    \xi_t^d & = \langle \ell_d, 1 \rangle = \ell_d \in \R, \quad \forall t \in [0,T].
\end{align*}
Using Theorem~\ref{thm:orig_solution}, we have that $\ell = (\ell_1, \dots, \ell_d) \in \R^d$, where
$$
\ell = \frac{1}{2\lambda} (\Sigma^{\textup{sig}})^{-1} \mu^{\textup{sig}}.
$$
Since we are trading with respect to the zero-th order of the signature, then the only word $\bluebf{w}$ that we are interested in is $\bluebf{w} = \bluebf{\emptyset}$, therefore the number of words in our linear functional is $\vert \W_{N+d+1}^0 \vert = 1$. Hence, we can define the $d$-dimensional vector
\begin{align*}
    \mu^{\textup{sig}}_{\bluebf{wf}(m)} & = \left\langle \bluebf{wf}(m), \Ex (\hat{\Z}^{LL, < \infty}_{0,T}) \right\rangle, \quad \forall \bluebf{w} \in \W_{N+d+1}^0,  m \in \{1,\dots,d\} \\
     & = \left\langle \bluebf{f}(m), \Ex (\hat{\Z}^{LL, \leq 1}_{0,T})\right\rangle, \quad  m \in \{1,\dots,d\}.
\end{align*}
We can observe that in fact, the elements of the $d$-dimensional vector $\mu^{\textup{sig}}$ are simply the expected returns of each asset, i.e. for element corresponding to the $m$-th asset,
$$
\mu^{\textup{sig}}_m = \left\langle \bluebf{f}(m), \Ex (\hat{\Z}^{LL, \leq 1}_{0,T})\right\rangle = \Ex \left( \int^T_0 dX_t^m \right) = \Ex(X_T^m) - \Ex(X_0^m).
$$
Therefore, we have
\begin{align*}
    \mu^{\textup{sig}} = 
    \begin{bmatrix}
\Ex(X_T^1) - \Ex(X_0^1) \\
\vdots \\
\Ex(X_T^d) - \Ex(X_0^d)
\end{bmatrix}
\end{align*}
which corresponds to the expected returns vector. Likewise, we obtain the $d \times d$ matrix $\Sigma^{\textup{sig}}$ as
\begin{align*}
\Sigma^{\textup{sig}}_{m,n} = \left\langle \bluebf{f}(m) \shuffle \bluebf{f}(n), \Ex (\hat{\Z}^{LL, \leq 2}_{0,T}) \right\rangle - \left\langle \bluebf{f}(m), \Ex (\hat{\Z}^{LL,  \leq 1}_{0,T}) \right\rangle \left\langle \bluebf{f}(n), \Ex (\hat{\Z}^{LL,  \leq 1}_{0,T}) \right\rangle,
\end{align*}
which naturally corresponds to the covariance matrix of the returns for each of the $d$ assets. \par
We can see how this solution will in fact yield us the same results as the classical Markowitz portfolio, provided we use the same historical period to calculate the expected returns and covariances, of which we provide further evidence by constructing the Sig-Factor Model extension of the efficient frontier.

\subsection{Sig-Factor Model Efficient Frontier}

In Modern Portfolio Theory (MPT), first introduced in \cite{Markowitz1952PortfolioSelection}, the mean-variance optimal portfolio can be represented via the efficient frontier, which contains all portfolios that have the maximal Sharpe ratio. In the Sig-factor model, we can also obtain an efficient frontier that represents the relationship between expected returns and variance. For a given signature trading strategy $\xi_t^m = (\xi_t^1, \dots, \xi_t^d)$, where $\xi_t^m = \langle \ell_m, \hat{\Z}_{0,t} \rangle$, we can explicitly define the expected PnL and variance of our portfolio in terms of the matrix $\Sigma^{\textup{sig}}$ and vector $\mu^{\textup{sig}}$, as defined in Theorem~\ref{thm:orig_solution}, e.g we have
\begin{align*}
    \Ex(V_T) &= \ell^\top \mu^{\textup{sig}} \\
    \textup{Var}(V_T) &= \ell^\top \Sigma^{\textup{sig}} \ell.
\end{align*}
Therefore, for any linear functional $\ell$, we can observe the expected PnL and variance corresponding to it. 

\begin{figure}[ht]
\centering
  \includegraphics[width=\linewidth]{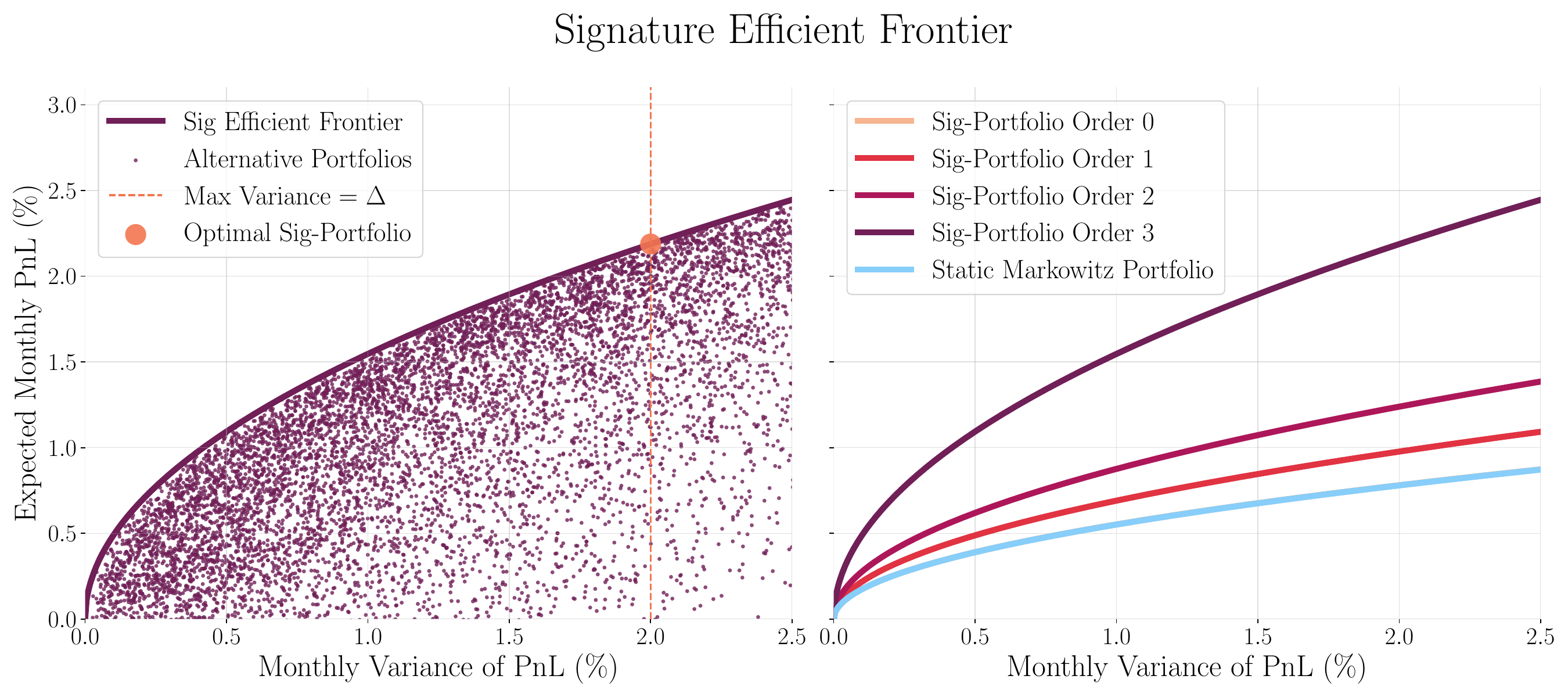}
  \caption{\centering Sig-Factor Model Efficient Frontier for 3-asset portfolio of ETFs DBC/BND/VTI.}
  \label{fig:sig_frontier}
\end{figure}

In Figure~\ref{fig:sig_frontier}, the LHS represents the level-3 Sig-trader efficient frontier curve for a portfolio consisting of the ETFs DBC/BND/VTI. This demonstrates the relationship between portfolio returns and variance, just as Markowitz had posed. The orange circle corresponds to the mean-variance optimal sig-trading strategy $\ell^*$ for a given maximum variance $\Delta$, according to the solution provided in Theorem~\ref{thm:orig_solution}. Each of the smaller dots then represent different choices of linear functional $l$, that are perturbations of the original strategy. We clearly observe that the Sig-trader portfolio maximises the risk-adjusted return (Sharpe ratio), for a given level of risk, $\Delta$. \par

On the RHS, we observe the efficient frontier curve for different orders of Sig-trader, between levels 0 to 3. Firstly, as eluded to previously, we can observe that the zero-th order Sig-trader is in fact identical to the static Markowitz portfolio, meaning that their frontier curves are not distinguishable from one another. We can also observe the improvement that is made to our expected returns, as we introduce more levels of the signature. This shouldn't be a surprise, as we know that more information about the dynamics is explained as we increase the level of truncation, which suggests that non-linear dependency structures are drivers of the future asset returns.
\vspace{-0.01cm}
\nopagebreak
\section{Implementation} \label{sec:implement}
\nopagebreak
Throughout this work, we use the package \href{https://github.com/patrick-kidger/signatory}{\texttt{signatory}} (\cite{Kidger2021Signatory:GPU}), alongside \texttt{PyTorch} for calculating and performing functionality related to tensors and the signature transform. Other packages offering signature computation include
\href{https://github.com/datasig-ac-uk/esig}{\texttt{esig}} and \href{https://github.com/bottler/iisignature}{\texttt{iisignature}}. The mean-variance Sig-Trading framework is very simple to implement and does not require heavy machinery while comparative machine learning methods may require extensive network building and hyperparameter tuning for every time the user wants to perform a new optimisation. Signature-trading is a one-model-fits-all type framework that is flexible enough to adapt to different types of signal and underlying asset process.  The algorithm to find an optimal trading strategy can be found in Algorithm \ref{alg:algo1}.

The method is also self-contained in the sense that it requires no user inputted assumptions such as expected returns or covariances and that they are inferred along with characteristics of the whole process within the algorithm itself. Once a linear functional $\ell$ is obtained from past data samples it is straightforward to unravel this into an implementable trading strategy characterised by the number of units to buy/sell at each time point $t \in [0,T]$. Since the strategy is dynamic, as new data arrives the Sig-Trader will continuously compute the signature and update their position accordingly.

Whilst Sig-Trading is data driven and does not require direct probabilistic assumptions on the underlying model, just like most frameworks, there is versatility when deploying Sig-Trading in practice as there does still remain quantities that need reliably estimating in the fitting procedure. The expected lead-lag signature could be noisy when calibrated through time and so we leave any questions of robustness (with respect to the fitting procedure) to future ongoing extensions of this project, as there remains question marks over how well defined such a solution is, especially in the context of rapidly changing regimes such as in financial markets.  In step (4) of Algorithm~\ref{alg:algo1}, we suggest taking a Monte Carlo approach by calculating the empirical expected lead-lag signature, such that $\Ex (\hat{\Z}^{LL,\leq N}) = \frac{1}{M} \sum_{i=1}^M \hat{\Z}^{LL,\leq N}_{t \in [0,T]}$. However, there remains a multitude of alternative ways to approach this for some given sample/fitting data, such as cross validation techniques, in order to ensure robustness out of sample. This paper does not directly discuss these approaches but we do point out that any traders favourite statistical estimation and training procedures can work in this setting. It may be that more recent training data is more important for fitting and this may be incorporated via a rolling fitted model - but this may lead to overfitting. Hence, in this paper we leave such statistical procedures directly to the discretion of the user and instead provide a framework in which such techniques can be ensembled. 
\vspace{0.5cm}
\begin{algorithm}[H]
    \caption{Fitting The Optimal Signature Trading Strategy} \label{alg:algo1}
    \hspace*{\algorithmicindent} \textbf{Input:               } A finite set of $M$ $d$-dimensional sample market paths $\left\{ X^i_{t \in [0,T]} \right\}_{i=1}^M$\\
    \hspace*{\algorithmicindent} \hspace*{\algorithmicindent} \hspace*{\algorithmicindent} \text{  } A finite set of $M$ $N$-dimensional sample factor (signal) paths $\left\{ f^i_{t \in [0,T]} \right\}_{i=1}^M$ \\
    \text{ } \\
    \hspace*{\algorithmicindent} \textbf{Output:         } $\ell_m \in T((\R^{N+d+1})^*) \quad \forall m \in \{ 1, \dots, d\}$: The optimal trading strategy as a \\
    \hspace*{\algorithmicindent} \hspace*{\algorithmicindent} \hspace*{\algorithmicindent} \hspace*{\algorithmicindent}
    linear functional on the signature of the time-augmented path \\
    \hspace*{\algorithmicindent} \hspace*{\algorithmicindent} \hspace*{\algorithmicindent} \hspace*{\algorithmicindent} \text{ }  \\
    \hspace*{\algorithmicindent} \textbf{Parameters:  } $ N \geq 0$: Truncation level of the signature \\
    \hspace*{\algorithmicindent} \hspace*{\algorithmicindent} \hspace*{\algorithmicindent} \hspace*{\algorithmicindent} \hspace*{\algorithmicindent} $\Delta \geq 0$: Maximum variance of PnL \\
    \begin{algorithmic}[1]
    \State Create the $(N+d+1)$-dimensional market factor process $\hat{Z} = (t,X_t,f_t)$ to obtain $\left\{ \hat{Z}_{t \in [0,T]} \right\}_{i=1}^M$.

    \text{ }
    \State Apply time-augmentation and lead-lag transformations to all market factor paths, to obtain $\left\{ \hat{Z}^{LL}_{t \in [0,T]} \right\}_{i=1}^M$.

    \text{ }
    \State Compute the truncated signature (at order $N$) of each market factor path $\left\{ \hat{\Z}^{LL,\leq N}_{t \in [0,T]} \right\}_{i=1}^M$.

    \text{ }
    \State Calculate the empirical expected signature $ \Ex (\hat{\Z}^{LL,\leq N}) = \frac{1}{M} \sum_{i=1}^M \hat{\Z}^{LL,\leq N}_{t \in [0,T]} $

    \text{ }
    \State Populate vector $\mu^{\text{sig}}$ as a subset of $\Ex (\hat{\Z}^{LL,\leq N})$ terms, using \eqref{eq:mu_sig}.
    
    \text{ }
    \State Populate matrix $\Sigma^{\text{sig}}_{\bluebf{w},\bluebf{v}}$ for each word $\bluebf{w},\bluebf{v} \in \W_{N+d+1}$, using \eqref{eq:Sigma_sig}.
        
    \text{ }
    \State Solve the system of linear equations as to obtain $\ell_m \in T((\R^{N+d+1})^*) \quad \forall m \in \{ 1, \dots, d\}$ where
    $$
    \langle \ell_m^*, e_{ \bluebf{w}} \rangle = \frac{((\Sigma^{\textup{sig}})^{-1} \mathbf{\mu}^{\textup{sig}} )_{\bluebf{wf}(m)}}{2\lambda}, \quad m \in \{1,\dots,d\}, \bluebf{w} \in \W^M_{N+d+1}
    $$
    
    \State \Return Linear functional $\ell_m \quad \forall m \in \{ 1, \dots, d\}$.
    
    \end{algorithmic}
\end{algorithm}
\vspace{0.5cm}
Likewise, while signature trading strategies do a good job of drawdown control, we do not discuss extra practicalities such as vol-scaling positions in this paper. Due to the nested nature of filtrations in this work (i.e. we are in possession of strictly increasing amounts of information through time), this may impact performance of a trading strategy differently through the life of the trade, hence it would make practical sense to smoothe out Sig-Trading positions through time - this also can reduce any bias that may arise from the exact start date of the trade. We also suggest that, especially at higher frequencies, that we could replace time-augmentation with \textit{volume-augmentation} since this is still a monotonically increasing channel in the process (ensuring the signature remains unique), but represents a new notion of time.

\begin{algorithm}[H]
    \caption{Trading The Optimal Signature Trading Strategy, at time $t$} \label{alg:algo2}
    \hspace*{\algorithmicindent} \textbf{Input:             } The previous stopped underlying asset and signal paths $X_{s \in [0,t]}, f_{s \in [0,t]}$, \\
    \hspace*{\algorithmicindent} \hspace*{\algorithmicindent} \hspace*{\algorithmicindent} \text{   } Linear functional $\ell_m \in T((\R^{N+d+1})^*) \quad \forall m \in \{ 1, \dots, d\}$. \\
    
    \hspace*{\algorithmicindent} \textbf{Output:         } $\xi^m_t \quad \forall m \in \{ 1, \dots, d\}$: The optimal trading strategy \\
    \begin{algorithmic}[1]
    \State Create the $(N+d+1)$-dimensional market factor paths and apply time-augmentation to obtain $\hat{Z}_{s \in [0,t]}$.
    
    \text{ }
    \State Compute the truncated signature of the stopped market factor path, $\hat{\Z}_{0,t}$.

    \text{ }
    \State For each asset $m \in \{ 1, \dots, d\}$, obtain the optimal strategy $\xi_t^m$ by applying an inner product of the terms of $\ell_m$ to the signature of the time-augmented market path.
    $$
    \xi_t^m= \langle \ell_m , \hat{\Z}_{0,t} \rangle
    $$
    \State \Return $\xi_t^m \quad \forall m \in \{ 1, \dots, d\}$
    
    \end{algorithmic}
\end{algorithm}

\section{Numerical Results} \label{sec:results}

Throughout this section, we aim to demonstrate and highlight some of the capabilities of the Sig-Trading framework, incorporating path-dependencies and exogenous signals.

\subsection{Synthetic Data}

\subsubsection{Pairs Trading}
First, we explore the scenario in which we have no exogenous trading signal to enrich our trading strategy but we only have access to the underlying time series of the asset process. The true advantage that data-driven methods have over classical parametric frameworks is that they can exploit the inefficiencies present in financial time series data, without specifying the explicit dynamics that they are trying to capture. In this toy example, we aim to isolate one specific aspect that is exhibited by time series data and highlight how the Sig-Trader exploits it. \textit{Pairs Trading} is one of the most famous and original active trading strategies deployed by investors that focuses on trading the joint behaviour between two assets. The sentiment is that while both assets have their own dynamics, the difference (spread) between the prices of the two assets should hold some predictability on future (co-)movements. Classical literature focuses on modelling this relationship using a mean-reverting process such as an Ornstein-Uhlenbeck process. The strategy should then contain a \textit{buy signal} when the spread falls below some threshold and a \textit{sell signal} when the spread is above the threshold, in the anticipation that this spread should converge back to the threshold. More on mean-reversion strategies can be found in \cite{Alexander2002TheStrategies, Vidyamurthy2004PairsAnalysis, Mudchanatongsuk2008OptimalApproach, Cartea2015AlgorithmicAssets, Leung2015OptimalExit}.  \par
In this experiment we take two assets such that
\begin{align*}
    dX_t & =  \sigma^X dW^X_t \\
    dY_t & =  \kappa (X_t - Y_t) dt + \sigma^Y dW^Y_t
\end{align*}
Where $X$ is a standard arithmetic Brownian motion with zero drift and volatility $\sigma^X$. $Y$ however is modelled as a mean-reverting process where its drift is proportional to the spread between $X$ and $Y$. We can clearly see in this toy example that the only exploitable \textit{alpha} within the framework is the temporal dependence through mean-reversion in $Y$ since there is no long term drift in $X$ or $Y$. \par
\begin{figure}[ht]
\centering
  \includegraphics[width=\linewidth]{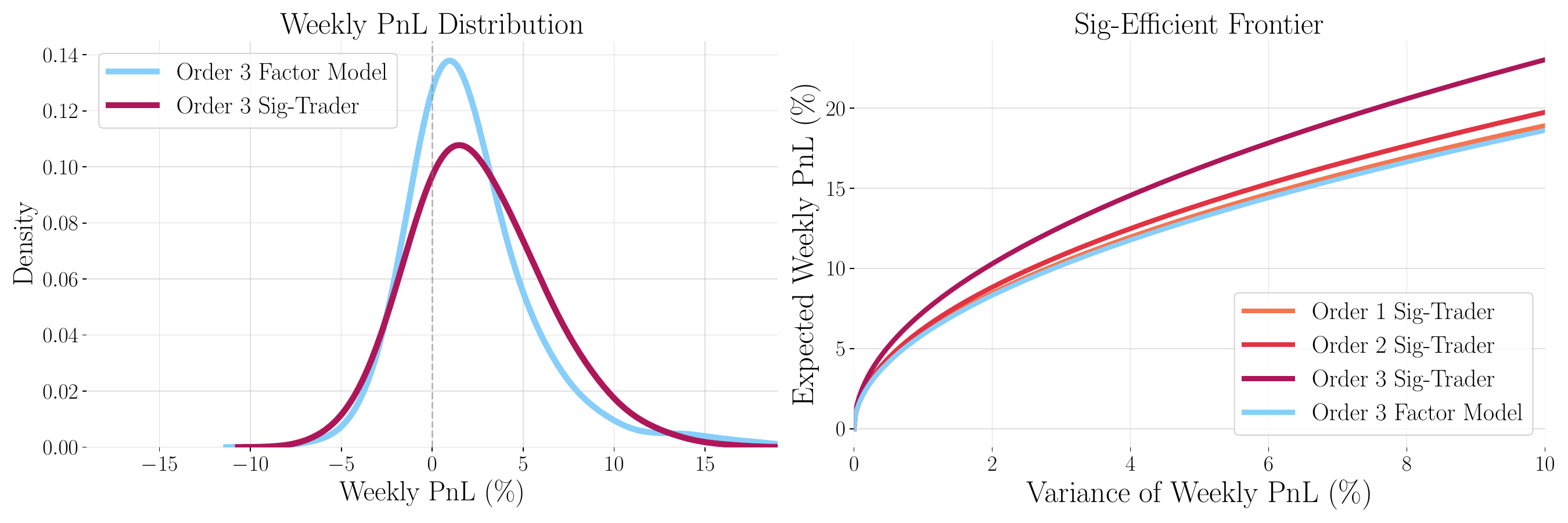}
  \caption{\centering Sharpe ratio distribution (LHS) and Sig-Trader strategy as a function of the spread between the two assets (RHS).}
  \label{fig:sim_pair_trade}
\end{figure}
 If a trader was to deploy a static buy and hold strategy here, the PnL would be zero on average, however, for higher order Sig-Traders, it is possible to exploit the mean-reversion dynamics. In fact, this example demonstrates how the above-mentioned \textit{alpha} is self-contained within the 1st level of the signature (which is the increment of the path, or the ‘drift’) and so orders of $N \geq 2$ do not contain any more predictive power on excess returns than that of $N=1$. However,  on the right hand side (RHS) of Figure~\ref{fig:sim_pair_trade}, the signature efficient frontier illustrates that when trading with a weekly look-forward horizon, the ratio of return to variance of weekly PnL is greater as we increase the order of the Sig-Trader. This is due to the higher levels of the signature capturing non-linear path-dependencies that can help reduce variance in the weekly PnL distribution. We relate back to Figure~\ref{fig:dynamic_strategy} to demonstrate that in fact higher order Sig-Traders are able to construct mean-reverting strategies that naturally limit drawdowns, which are an inherently path-dependent characteristic. \par

 In this synthetic example, we compare the Sig-Trader strategy to that of the original factor model. The generic factor model is set up as a supervised linear regression on future returns, as a function of the current signature,
\begin{align*} 
    \mu_{t+1} = \Ex [ r_{t+1} \vert \F_t ] = B \mathcal{S}_t + \varepsilon_{t+1}
\end{align*}
where $r$ is the $2$-dimensional asset returns, $B$ is a $2 \times N$ matrix of factor coefficients, $\mathcal{S}_t$ is a vector of the signature values of the  and $\varepsilon$ is a vector of the $2$ assets’ (unexplained) residuals returns. We can see that this model uses the same input as the Sig-Trader (the signature), with the same tools (applying a linear function). \par
\begin{figure}[ht]
  \centering
    \includegraphics[width=\linewidth]{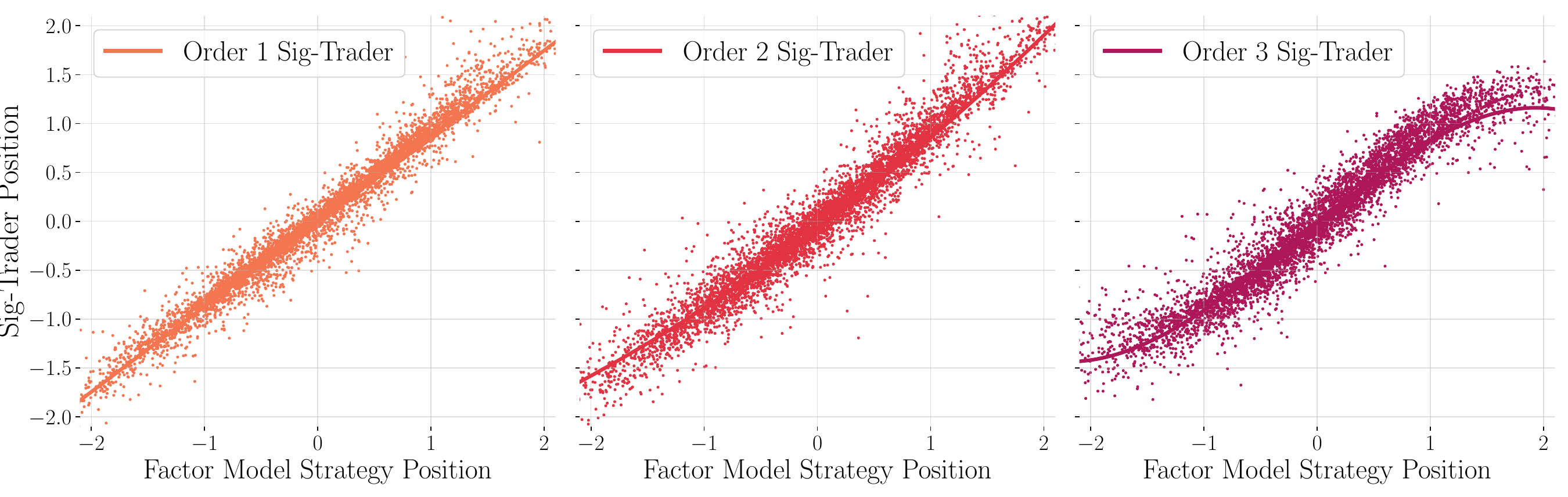}
    \caption{\centering Comparison between the Markowitz Order 3 factor model positions and the corresponding Sig Trader positions, for orders 1,2,3.}
    \label{fig:sim_pair_positions}
  \end{figure}
The key distinguishment is that the factor model is optimal for maximising the daily mean-variance PnL profile, which ignores any long term, pathwise dynamics of the strategy itself. Figure~\ref{fig:sim_pair_positions} demonstrates the difference in nature between the Sig-Trader and the factor model. In the order 1 case, as described above, the exploitable alpha in terms of returns are captured by the 1st level of the signature and so we see a similar position profile for the Sig-Trader and the factor model. However, when the factor model tends to go more short (or long), the higher order Sig-Traders, e.g order 3, tends to reduce its position since this will be more beneficial to minimising the weekly variance of PnL. We can think of this behaviur as being similar to applying a sigmoid function to your position in search of robustness, or applying a stop-loss to avoid becoming too leveraged - which are common practices. By incorporating path-dependencies in the strategy, we have access to a more robust and intutive extension to classic factor models.

\subsubsection{Incorporating Exogenous Signal}

We now consider the case when we are in possession of an exogenous signal that can be used to inform our trading decisions, rather than solely relying on raw asset time series data. Suppose the asset price process $X$ is driven by some function of the signal $\phi(t,f_{0,t})$, e.g
\begin{align*}
  dX_t = \phi(t, f_{0,t}) dt + \sigma^{X} dW_t^{X}.
\end{align*}
Since $\phi(t, f_{0,t})$ is a function of the past time series of $f$, it may be difficult for a trader to directly model the dynamics of this system directly if they do not know the explicit form of $\phi$. In practice, a trader may use a Kalman (or alternative) filter to capture the impact of a noisy signal; in fact, the authors in \cite{Cohen2023NowcastingMethods} prove how the Kalman filter can be equivalently written as a linear regression on the signature. In this example, we propose the following system for the signal process $f$ and its consequent impact on the underlying asset $X$,
\begin{align*}
      df_t & = -\kappa f_t + \sigma^{f} dW_t^{f} \\
      Z_t & = \int^t_0 K (t-s) df_s \\
      dX_t & = Z_t dt + \sigma^{X} dW_t^{X}.
\end{align*}
We let the (observable) signal $f$ be a generic mean-reverting OU process with zero mean, while it has a path-dependent causal impact on some latent process $Z$ via a time-dependent kernel $K$ that we do not observe. This kernel could simply be a stochastic filter, for example if the kernel is exponential, we recover an exponentially weighted average of previous increments in the signal. In this example, we take the kernel to be $K(t,s) = \exp\{-\alpha(t-s)\}$, which can be understood to be a decaying impact of the signal on the process $Z$, i.e more recent observations of $f$ have a greater impact on the instantaneous drift. The asset price process $X$ then behaves like an arithmetic Brownian motion with long term zero drift, but short term temporal structure. \par
\begin{figure}[ht]
  \centering
    \includegraphics[width=\linewidth]{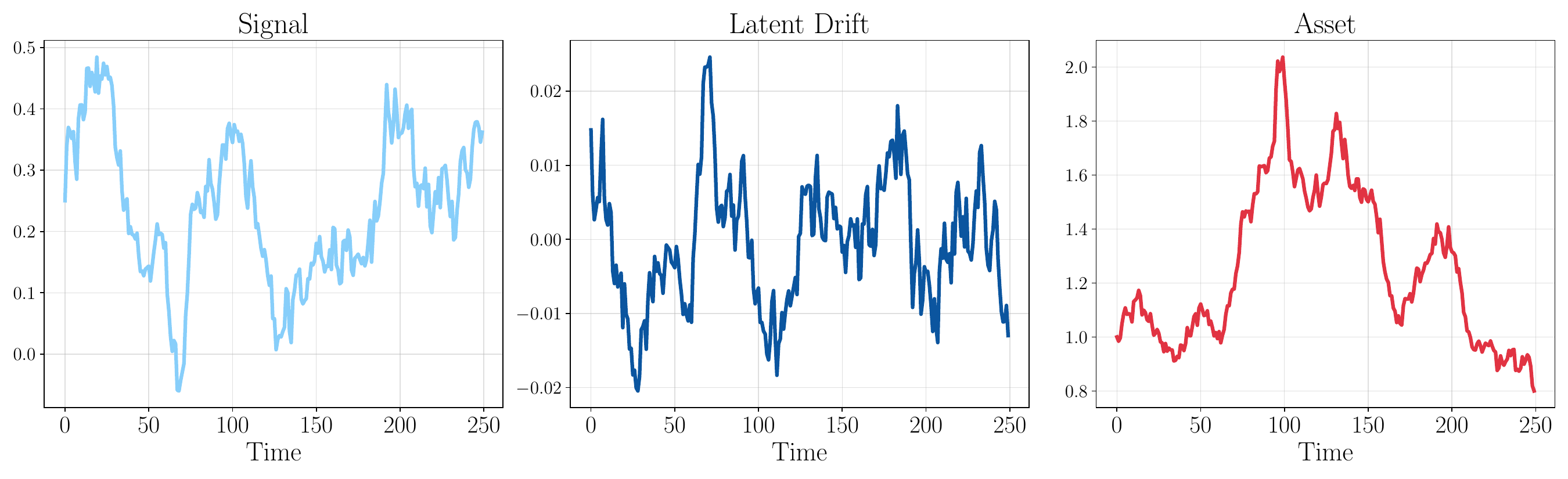}
    \caption{\centering An Example of Signal/Drift/Asset Trajectories.}
    \label{fig:synthetic_signal_example}
  \end{figure}
The underlying asset process $X$ does not have a long term drift, so we would expect a static strategy to have no long-term expected return, as seen in the left hand plot of Figure~\ref{fig:synthetic_signal_sharpes}. In fact, we notice that any excess return is marginal when the Sig-Trader trades endogenously without access to the signal, which can be seen in the light blue distributions in Figure~\ref{fig:synthetic_signal_sharpes}. The red distributions correspond to the simple factor model that takes the value of the signal at time $t$ and predicts the future (one-step) return, i.e
\begin{align} \label{eq:signal_factor}
    \mu_{t+1} = \Ex [r_{t+1} \vert \F_t ] = \beta f_t + \varepsilon_{t+1}
\end{align}
where the position is then scaled according to the expected return $mu_{t+1}$. The dark blue Sharpe ratio distributions correspond to the Sig-Trader for different orders of truncation. Clearly, we see that for higher orders of truncation, the Sharpe ratio improves as the Sig-Trader can better approximate the non-linear relationship between the signal and the underlying asset process.
\begin{figure}[ht]
\centering
  \includegraphics[width=\linewidth]{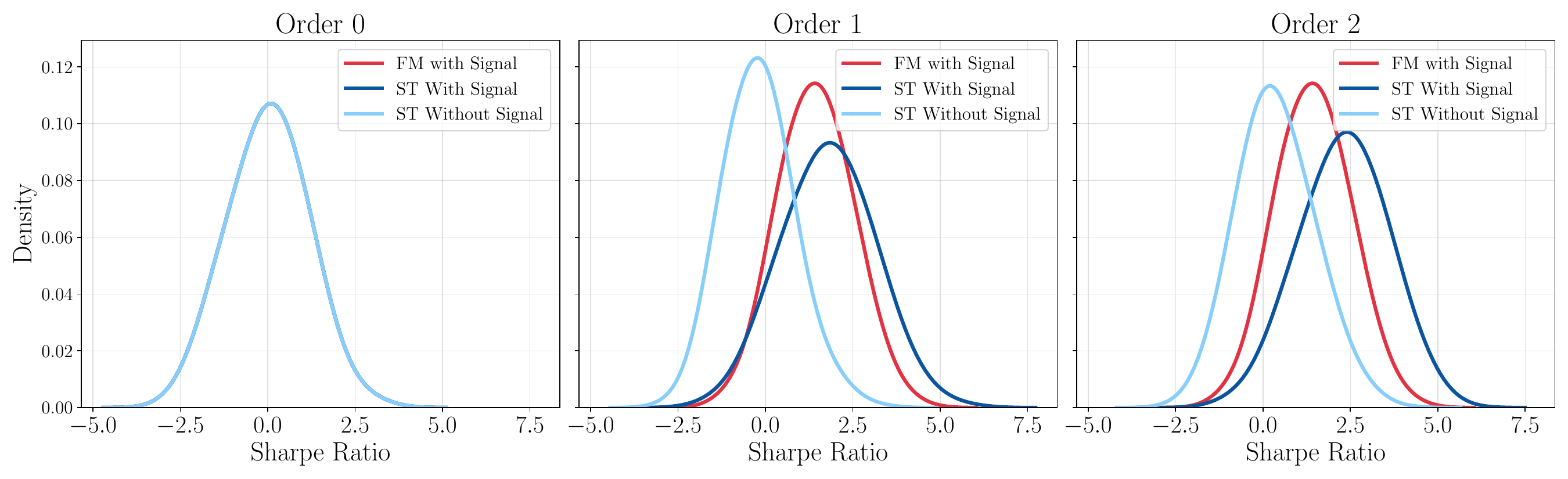}
  \caption{\centering Sharpe ratio distributions for the Sig-Trader of truncation orders 0,1,2, compared to the factor model described in \eqref{eq:signal_factor}.}
  \label{fig:synthetic_signal_sharpes}
\end{figure}
In this system there are several layers of noise and complexities to sift through, including the path-dependent impact of the signal, as well as noise from the signal itself $\sigma^f$ and exogenous noise of the underlying asset, $\sigma^X$. In this simple system, this volatility of such randomness is constant through time, but in practice this is likely not the case and this is the type of scenario when the Sig-Trader can outperform the classic predict-then-optimise frameworks. The left hand side of Figure~\ref{fig:synthetic_signal_noise_ratio} illustrates how the Sig-Trader can transform a signal of varying strengths, into a position that is able to produce strong risk-adjusted returns. 
\begin{figure}[H]
\centering
  \includegraphics[width=\linewidth]{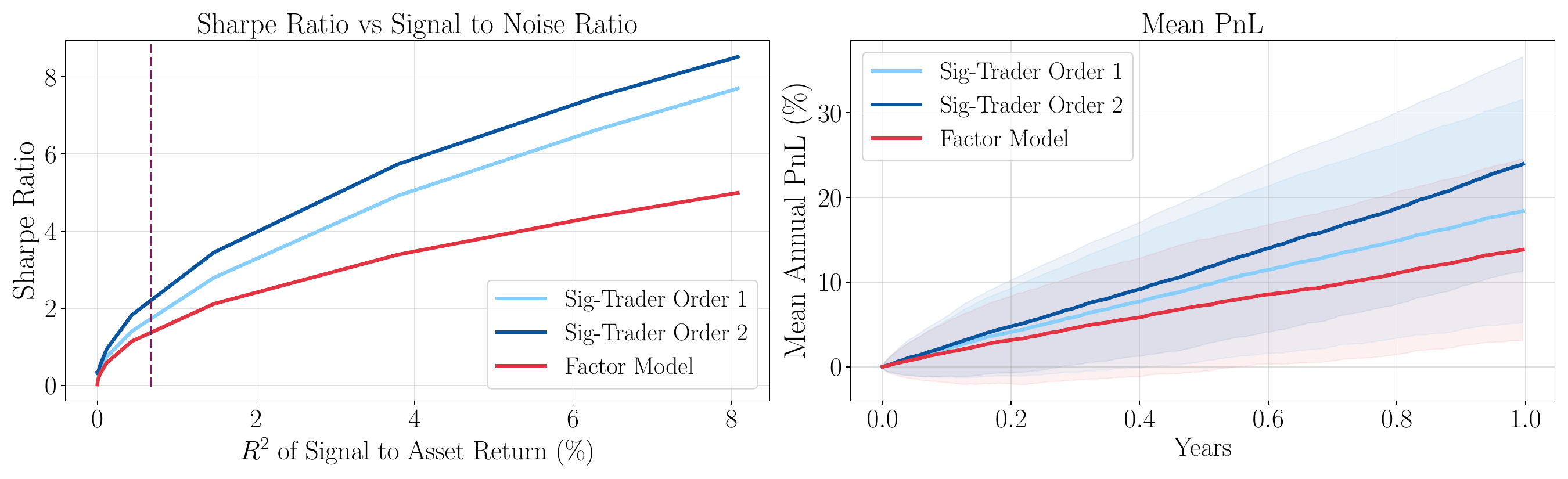}
  \caption{\centering Sharpe ratio against the signal to noise ratio (LHS). The mean PnL and variance of PnL through time (RHS).}
  \label{fig:synthetic_signal_noise_ratio}
\end{figure}

\subsection{Learning Momentum as a Sig-Trading Strategy}

As highlighted previously, the class of signature trading strategies, contains most common systematic trading strategies that are functions of the past market time-series. One of the most common and established forms of systematic trading strategy is trend following (and its multiple variants), which generally applies some function (i.e a filter) to the past market time series, to determine the strength and direction of the underlying asset trend. The trader then trades in this direction, adjusting their position according to the strength (alternatively, they may we want to trade the opposite direction, which would constitute a mean-reversion strategy). For a high-level overview of the design and mechanics of momentum strategies, we refer the reader to \cite{RichardJMartin2012MomentumMe, RichardJMartin2012Non-linearStrategies}, which give a large discussion on both linear and non-linear momentum. For practical applications, see \cite{Moskowitz2012TimeMomentum, Asness2013ValueEverywhere, Baltas2013MomentumFunds, Barroso2015MomentumMoments, Lemperiere2014RiskReturns}. \par

Within the broader class of momentum strategies live several variants, and this is dependent on your choice of function that characterises the trend. Variants (and combinations thereof) include RSI indicators, Bollinger bands, MACD (moving average convergence divergence) or any other type of moving average crossover. The key observation is that all of the previous mentioned variants constitute alternative functions (filters) of the past time series, therefore a natural question to ask is what are the best (or in fact optimal) types of function or filter to apply to the past market time series, in order to capture the dynamics of the underlying asset? Since momentum strategies are contained within the space of signature trading strategies, we are able to find the linear functional corresponding to a given momentum strategy. To make this more precise, we focus on capturing the characteristics of a MACD momentum strategy. We recall that MACD$(t_1, t_2)$ is the difference between the (exponentially weighted) $t_1$-moving average (the fast signal) and the $t_2$-moving average (the slow signal), designed to indicate strength of trend. The general setup might look as follows:

\begin{tikzpicture}
  \centering
  \hspace{-0.5cm}
  \draw[rounded corners=5pt] (0, 0) rectangle (4.5, 1);
  \node at (2.25, 0.5) {\parbox{4cm}{\centering Time Series Paths, $X_{0,t}$}};
  
  \draw[->] (4.5, 0.5) -- (5.3, 0.5);
  \node at (4.9, 0.9) {$\phi$};
  
  \draw[rounded corners=5pt] (5.3, 0) rectangle (7.5, 1);
  \node at (6.4, 0.5) {\parbox{3cm}{\centering MACD}};

  \draw[->] (7.5, 0.5) -- (8.3, 0.5);
  \node at (7.9, 0.9) {$L$};

  \draw[rounded corners=5pt] (8.3, 0) rectangle (11.3, 1);
  \node at (9.8, 0.5) {\parbox{3cm}{\centering Future Returns}};

  \draw[->] (11.3, 0.5) -- (12.1, 0.5);
  \node at (11.7, 0.9) {$\sigma$};

  \draw[rounded corners=5pt] (12.1, 0) rectangle (16.1, 1);
  \node at (14.1, 0.5) {\parbox{4cm}{\centering Strategy Position, $\xi_t$}};
\end{tikzpicture}

This is a very general framework that takes a given filter (i.e the MACD), uses it to predict future returns, and then applies some normalisation (i.e a sigmoid function) in order to retrieve a final strategy position. We note that a momentum strategy should naturally result in a positive regression coefficient of $L$ since a positive MACD signal indicates that the asset is ‘trending’ (otherwise, if the coefficient was negative, this would imply mean-reversion). We can think of this whole framework as being one continuous function of the past path, i.e
\begin{align*}
    \xi_t = \varphi(t, X_{0,t}) = \sigma(L(\phi(t, X_{0,t}))).
\end{align*}
The goal is therefore to demonstrate that the function $\varphi$ can be captured via a signature trading strategy such that $\varphi(t, X_{0,t}) = \langle \ell, \hat{\bX}_{0,t} \rangle$. \par

Figure~\ref{fig:MACD_graph} displays the corresponding MACD filter, which we denote $\phi$, while Figure~\ref{fig:learning_momentum} shows the improvement of learnt linear functionals $\ell$, as the order of truncation of the signature increases. We notice that the order 3 signature is able to approximate the function $\varphi$ with almost $90\%$ $R^2$ accuracy, even though both the filter $\phi$ and the normalisation function $\sigma$ are highly non-linear.

Given that we are able to approximate the momentum trading strategy via a Sig-Trading strategy, we can compare this strategy to the mean-variance optimal order 3 strategy. Using the learnt linear functional $\ell$, we are able to construct the efficient frontier via the expected terminal PnL and variance of terminal PnL,
\begin{align*}
    \Ex(V_T) &= \ell^\top \mu^{\textup{sig}} \\
    \textup{Var}(V_T) &= \ell^\top \Sigma^{\textup{sig}} \ell.
\end{align*}
Figure~\ref{fig:MACD_EF_convex} (LHS) demonstrates that the optimal order 3 Sig-Trader has a much better risk-return profile over the chosen trade horizon of 20 days (one month), than that of the learnt momentum trader. It is worth noting however that the Sig-Trader in this example has many similar characteristics to that of the momentum trader, however the optimal Sig-Trader was able to capture further
\begin{figure}[ht]
\centering
  \includegraphics[width=\linewidth]{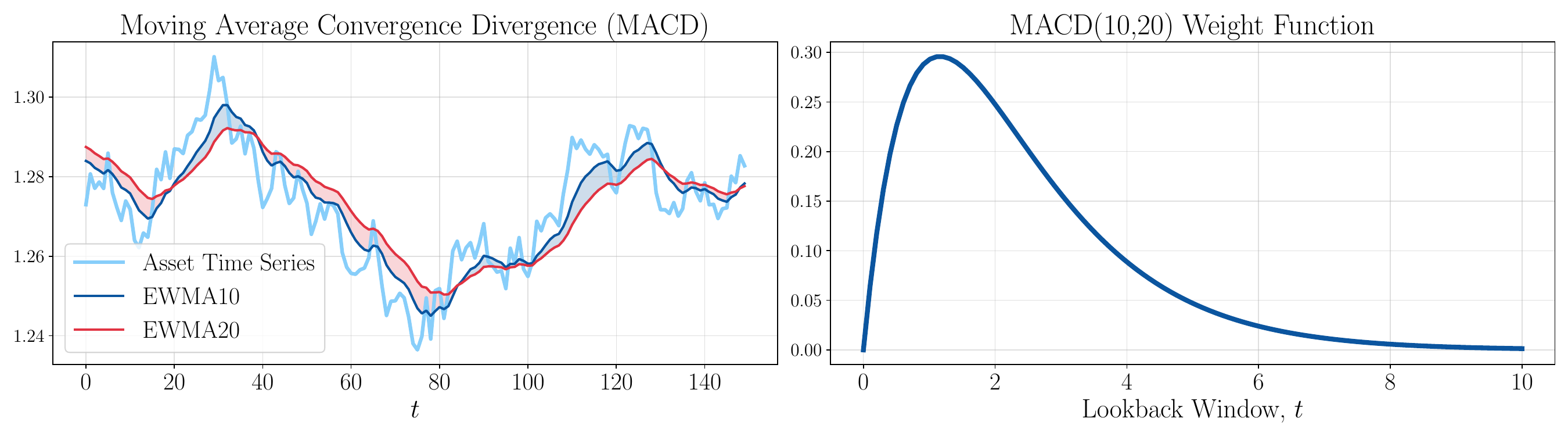}
  \caption{\centering An example of the corresponding slow and fast moving averages of a MACD(10,20) signal on the TLT ETF during 2006 (LHS). The associated weight function $\phi(t-s)$ with the MACD(10,20) filter (RHS).}
  \label{fig:MACD_graph}
\end{figure}
\begin{figure}[ht]
\centering
  \includegraphics[width=\linewidth]{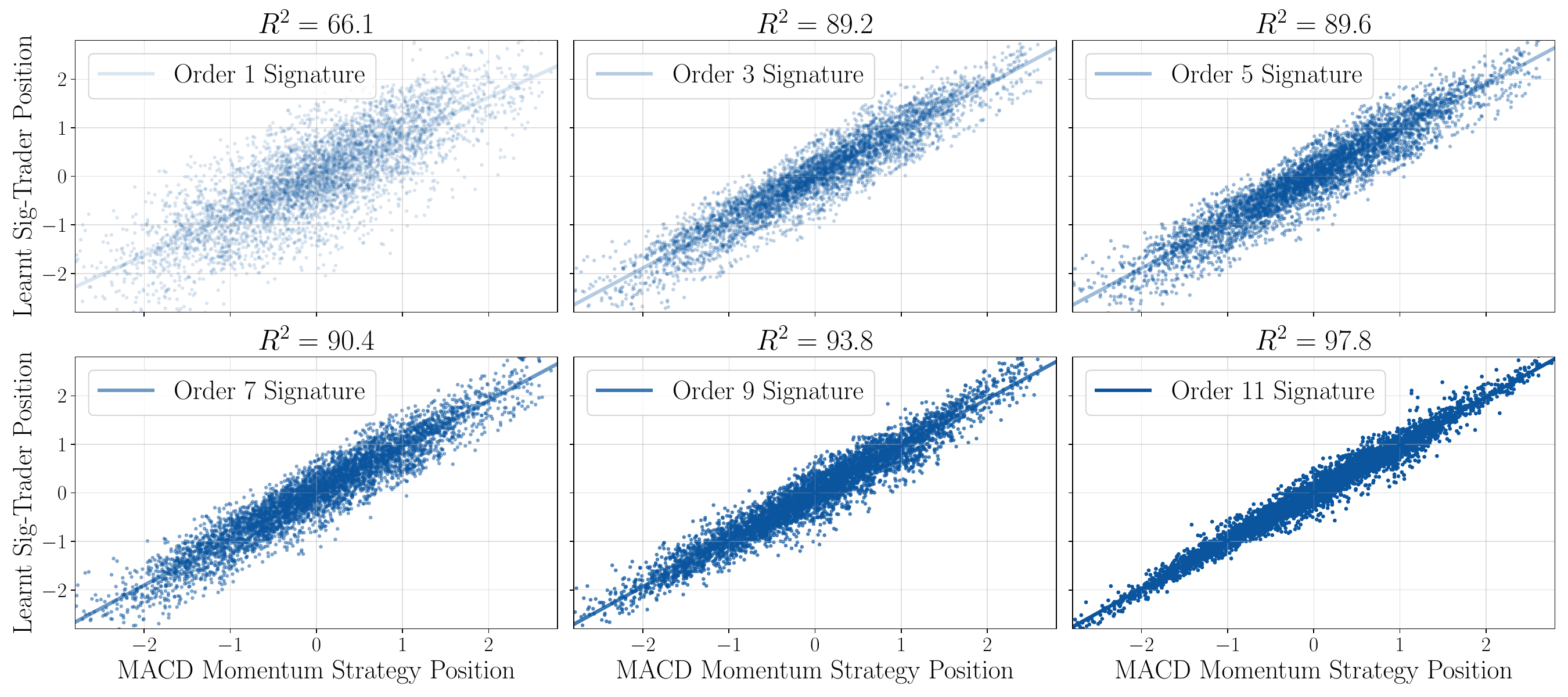}
  \caption{\centering The learnt MACD momentum strategy as a linear functional on the signature for different orders of truncation.}
  \label{fig:learning_momentum}
\end{figure}
\begin{figure}[ht]
\centering
  \includegraphics[width=\linewidth]{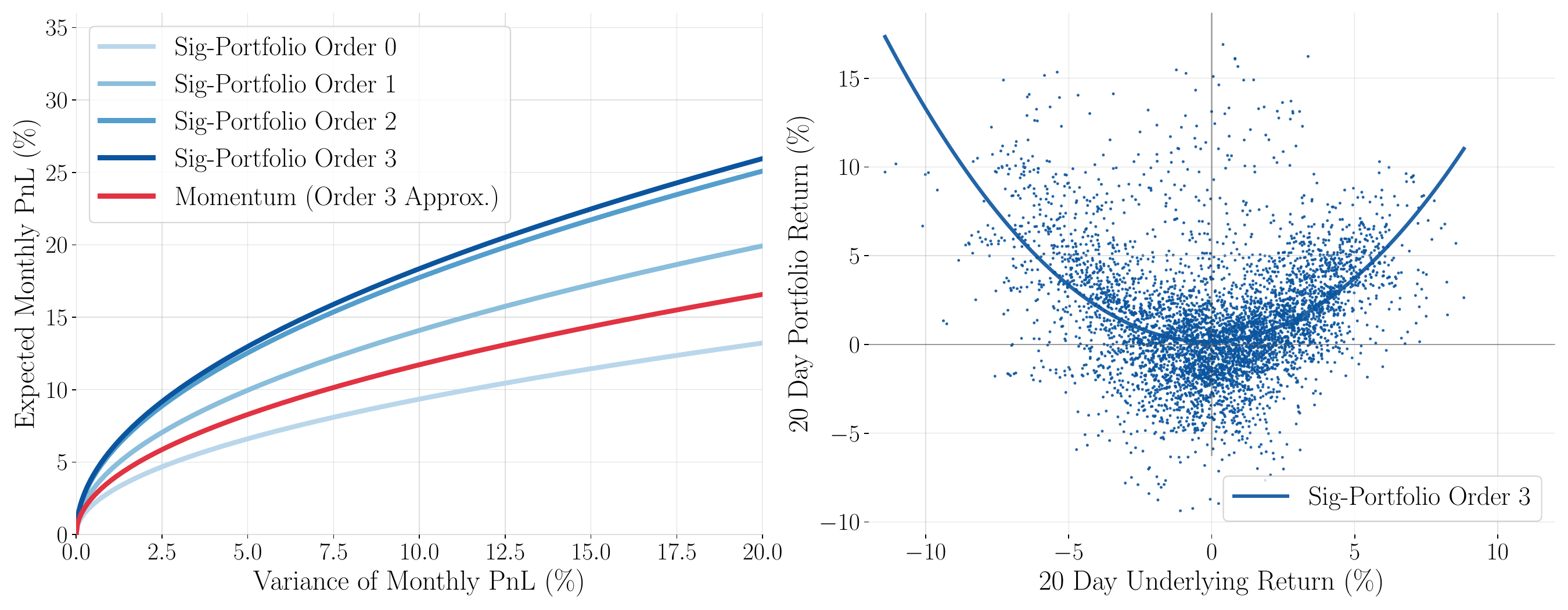}
  \caption{\centering Signature efficient frontier comparison for optimal Sig-Traders and the corresponding learnt 3rd order approximation of the MACD momentum strategy (LHS). The order 3 Sig-Trading strategy return profile compared to the underlying returns (RHS).}
  \label{fig:MACD_EF_convex}
\end{figure}
\noindent characteristics about the underlying dynamics and factor in the pathwise optimisation to achieve a lower monthly variance. The convex return profile of the order 3 Sig-Trader (RHS of Figure~\ref{fig:MACD_EF_convex}) is reminiscent of a trend following strategy that earns positive returns with high conviction in trending (upwards or downwards) markets.
This example shows us that in fact, besides from focusing on an optimal Sig-Trading solution, by re-casting other strategies in the same format, it can inform us better on how our strategies work and the various exposures that may exist. It might be such that a given strategy is largely exposed to specific terms in the signature, which can explain more about its characteristics and associated risks. We can also analyse how ‘far away’ our strategies are to an optimal Sig-Trading solution and hence better optimise for dynamic risks.

\section{Conclusion}

 Path dependencies such as non-Markovian data structures, or time series exhibiting temporal correlation are a frequent phenomenon in financial data. 
 Momentum and mean-reversion (of assets or signals) are two of the purest features of time series’ data that a trader can exploit, and these features are inherently reliant on the whole path. However,  many traditional techniques used for portfolio oprimisation 
 are too inflexible to handle such structural complexities of the data and signals. 
 Signature trading strategies, first developed and covered in detail in the thesis of Perez (\cite{PerezArribas2020SignaturesFinance}), are a versatile representation of any investment strategy, which have been demonstrated to be a powerful tool in the context of pricing, hedging, and optimal execution.
 
 We observe that in fact their advantages can be carried over to more general portfolio optimisation problems as they encompass many common trading styles that are present in practice (including the before-mentioned momentum and mean-reversion). More specifically, in this paper we extend classical factor models  into the Sig-Trading framework, obtaining a closed form solution to the optimal mean-variance Sig-Trading strategy and derive a clear intuition for portfolio managers to navigate and use this formula in a path-dependent context. Furthermore, by lifting the mean-variance optimisation into the lead-lag signature space (See Definition \ref{eq:hoff_defn}), we bypass the necessity for any explicit prediction of returns, which is commonly required in traditional settings. This alleviates the accumulation of asymmetric residuals from the prediction phase, which can often be difficult to control. Moreover, the Sig-Trading framework simultaneously captures the joint signal-asset dynamics, whilst performing a dynamic optimisation which naturally incorporates a drawdown control within the objective function. 

In summary, the Sig-Trading framework provides an alternative to machine learning methods, in its ability to handle path-dependent, non-linear dynamics and signals in a portfolio opitmisation context. Unlike machine learning methods, our framework requires no training or gradient descent optimisation and provides a closed-form solution that is lightweight to work with in practice.  Our closed-form solution is tractable, easily implementable and ensures interpretability of the derived optimal strategies. Overall, our results provide more intuition than ML-based methods and establish more discretion in the fitting procedure than classical methods, providing a malleable framework that still solves many challenges faced when working with financial data.

\begin{appendices}

\section{Rough Path \& Tensor Algebra Preliminaries} \label{sec:appx_rough_paths}

We aim to keep this paper self-contained by recalling the concepts and definitions we explicitly use in this paper. In this appendix we recall necessary fundamental building blocks used in our derivations. 

\subsection{The Tensor Algebra}
In this section, we define the space on which the signature is defined and introduce notations that are used throughout the paper.
\begin{definition}{(Tensor Algebra).}
	Let $d \geq 1$. We define the extended tensor algebra over $\R^d$ by
\begin{align*}
T((\R^d)) := \left\{ \text{  } \textbf{a} = (a_0, a_1, \dots, a_n, \dots) \text{  } \left| \text{  } a_n \in (\R^d)^{\otimes n} \text{  } \right\} \right. 
\end{align*}
Similarly, we define the truncated tensor algebra of order $N \in \N$ and the tensor algebra by $T^{(N)}(\R^d)$ and $T(\R^d)$ respectively, by
\begin{align*}
T^{(N)}(\R^d) := & \left\{ \text{  } \textbf{a} = (a_n)_{n=0}^\infty \text{  } \left|  \text{  } a_n \in (\R^d)^{\otimes n} \text{ and } a_n = 0 \text{  } \forall n \geq N  \text{  } \right\} \right. \subset T((\R^d)) \\
= & \bigoplus_{n = 0}^N (\R^d)^{\otimes n}, \\
T(\R^d) := & \bigoplus_{n = 0}^\infty (\R^d)^{\otimes n} \subset T((\R^d)).
\end{align*}
Note that the truncated tensor algebra of order $N$ has dimension $\sum_{k=0}^N d^k = \frac{d^{N+1}-1}{d-1}$. \par
\end{definition}
\begin{remark}
    Intuitively, we have that the zero-th level of the tensor algebra, $(\R^d)^{\otimes 0}$, is simply the set of all scalars $a_0 \in \R$, with dimension $d^0 = 1$. At the first level, $(\R^d)^{\otimes 1}$ is the set of all $\R$-valued vectors of length $d$, with dimension $d^1$. Likewise, at the second level, $(\R^d)^{\otimes 2}$ is the set of all $\R$-valued $d \times d$ matrices. Then the truncated tensor algebra at order 2, $T^{(2)}(\R^d)$, has dimension $1 + d + d^2$ and contains all $\R^d$-valued tensors of order $0,1,2$.
\end{remark}
\begin{definition}{(Dual Space of the Tensor Algebra).} 
    Let $\{ e_1, \dots , e_d \} \subset \R^d$ be a basis for $\R^d$, then it has a dual basis $\{ e_1^*, \dots , e_d^* \} \subset (\R^d)^*$ for $(\R^d)^*$, the dual space of $\R^d$. Recall that this dual space is the space of all linear functions $\R^d \to \R$. We may similarly define a basis for $T((\R^d))$ and its dual space $T((\R^d)^*)$. \par

    \noindent We identify this dual space of the tensor algebra, $T((\R^d)^*)$, with the space of all words. Consider the following alphabet $A_d:=\{\bluebf{1},\dots,\bluebf{d} \}$, which consists of $d$ letters. In order to ease notations, we make the following identification:
\begin{align}\label{eq:word_functional}
	e_{i_1}^* \otimes \dots \otimes e_{i_n}^* \in T((\R^d)^*) \leftrightarrow \bluebf{i_1} \dots \bluebf{i_n} \in \W(A_d),
\end{align}
where $\W(A_d)$ is the real vector space of all words with the alphabet $A_d$. The empty word will be denoted by \bluebf{$\emptyset$}. We then have the identification $T((\R^d)^*) = \W(A_d).$ That is, that any linear functional  $\ell:T((\R^d)) \to \R$ can be identified via elements in $\W(A_d)$. Hence we can think of $\textit{words}$ as linear functions on the tensor algebra.
\end{definition}
\begin{example}
    Let $\bX \in T((\R^d))$ be an element of the tensor algebra. We can view $\bX$ in terms of its elements within the tensor algebra and each multi-index corresponding to a word, e.g.
    \begin{align*}
        \bX = \left( \underbrace{
        \begin{matrix}
            \text{ } \vspace{0.2cm}  \\
            \bX^{\bluebf{\emptyset}} \\
            \text{ } \vspace{0.2cm}
        \end{matrix}}_{\textstyle \in (\R^d)^{\otimes 0}} , \text{ }
        \underbrace{\begin{pmatrix}
            \bX^{\bluebf{1}}   \\
            \vdots \\
            \bX^{\bluebf{d}}
        \end{pmatrix}}_{\textstyle  \in (\R^d)^{\otimes 1}}
        \text{ } , \text{ }
        \underbrace{\begin{pmatrix}
            \bX^{\bluebf{11}} & \dots &  \bX^{\bluebf{1d}}   \\
            \vdots & \ddots & \vdots \\
            \bX^{\bluebf{d1}} & \dots & \bX^{\bluebf{dd}}
        \end{pmatrix}}_{\textstyle \in (\R^d)^{\otimes 2}} \text{ } , \text{ } \dots
        \right).
    \end{align*}
   The space of all words is defined as 
    \begin{align*}
        \W(A_d) = \{ \bluebf{\emptyset}, \bluebf{1}, \dots, \bluebf{d}, \bluebf{11}, \dots, \bluebf{dd}, \dots \}.
    \end{align*}
\end{example}
\noindent Two algebraic operations on $\W(A_d)$ are the sum and concatenation. The sum of two words $\bluebf{w}$ and $\bluebf{v}$ is just the formal sum $\bluebf{w} + \bluebf{v} \in \W(A_d)$. The concatenation of $\bluebf{w} = \bluebf{i_1} \dots \bluebf{i_n}$, $\bluebf{v} = \bluebf{j_1} \dots \bluebf{j_m} \in \W(A_d)$ is defined by 
\begin{align*}
    \bluebf{wv} :=  \bluebf{i_1} \dots \bluebf{i_n}\bluebf{j_1} \dots \bluebf{j_m} \in \W(A_d).
\end{align*}
With some abuse of notation, we will then use the concatenation on $\W(A_d)$ and $T((\R^d)^*)$ interchangeably, in the sense that we will sometimes write  $\ell \bluebf{w}$ for  $\ell \in T((\R^d)^*), \bluebf{w} \in \W(A_d)$. 

\begin{definition}{(Shuffle Product).} \label{defn:shuffle_product}
    The shuffle product $\shuffle : \W(A_d) \times \W(A_d) \to \W(A_d)$ is defined inductively by
    \begin{center}
        $\bluebf{ua} \shuffle \bluebf{vb} = (\bluebf{u} \shuffle \bluebf{vb})\bluebf{a} + (\bluebf{ua} \shuffle \bluebf{v}) \bluebf{b}$ \par
        $\bluebf{w} \shuffle \bluebf{\emptyset} = \bluebf{\emptyset} \shuffle \bluebf{w} = \bluebf{w} $
    \end{center}
    for all words $\bluebf{u}, \bluebf{v}$ and letters $\bluebf{a}, \bluebf{b} \in \W(A_d)$.
\end{definition}
\begin{example} \label{ex:shuffle_ex}
        Let $\bluebf{w} = \bluebf{12}, \bluebf{v} = \bluebf{34}$, then $\bluebf{w} \shuffle \bluebf{v}$ is given by
        $$
        \bluebf{12} \shuffle \bluebf{34} =  \bluebf{1234} + \bluebf{1324} + \bluebf{1342} + \bluebf{3124} + \bluebf{3142} + \bluebf{3412}
        $$
\end{example}

\begin{remark}
We make extensive use of the fact that polynomials of linear functionals can be expressed as shuffle products of the linear functionals themselves.
\end{remark}

\subsection{Rough Paths \& The Signature }

\begin{definition}{($p$-variation).}
    Let $p \geq 1$ and $X:[t',T] \to \R^d$ be a $d$-dimensional continuous path. We say the $p$-variation of $X$ is denoted as the seminorm
    $$
    \norm{X}_p = \left( \sup_{\mathcal{P}} \sum_{[s,t] \in \mathcal{P}} \norm{X_t - X_s}^p \right)^{\frac{1}{p}}
    $$
    where $\norm{\cdot}$ is any norm on $\R^d$ and the supremum is taken over all partitions $\mathcal{P}$ of the interval $[t',T]$.
\end{definition}

\begin{definition}{(Space of $p$-variation paths).}
    We denote the space of all $\R$-valued $d$-dimensional paths of finite $p$-variation, to be $C^{p-var}([0,T];\R^d)$.
\end{definition}
\begin{remark}
    We refer to the paths of \textit{bounded variation} as elements of the space $C^{1-var}([0,T];\R^d)$. Note that all continuous piecewise smooth paths $X_{0,T} \in C^{1-var}([0,T];\R^d)$.
\end{remark}

\begin{definition}{(Geometric $p$-rough paths).} \label{defn:geom_rough_path}
    We define the space of \textit{geometric $p$-rough paths}, $G^{\lfloor p \rfloor}(\R^d)$, as the closure of the space of signatures of smooth paths, at order $\lfloor p \rfloor$, namely
    $$
    G^{\lfloor p \rfloor}(\R^d) \coloneqq \overline{ \left\{ \text{ } \bX^{\leq N} : \Delta_T \to T^{(N)}(\mathbb{R}^d) \quad \big\vert \quad N = \lfloor p \rfloor \text{ } \right\} }^{d_{p-\textup{var}}}
    $$
    where the closure is with respect to the $p$-variation metric (defined in \cite{Lyons2007DifferentialPaths}, Definition 1.5).
\end{definition}
\begin{theorem}{(Extension Theorem, \cite{Lyons2007DifferentialPaths}, Theorem 3.7).}
    Let $p>1$ be a real number and $\bX^{\leq N}_{s,t} \in G^{\lfloor p \rfloor}(\R^d)$ be the truncated signature at level $N \in \N$ of the path $X_{s,t}\in \X^d_{s,t}$. Then for every $n \geq \lfloor p \rfloor + 1$, there exists a unique $\bX^{n}_{s,t}$ such that
    $$
    (s,t) \mapsto \bX_{s,t} = (1, \bX^{1}_{s,t}, \dots, \bX^{\lfloor p \rfloor}_{s,t}, \dots, \bX^{n}_{s,t}, \dots) \in T((\R^d))
    $$
    has finite $p$-variation. We denote $\bX_{s,t}$ as the \textit{extension} of $\bX^{\leq N}_{s,t}$.
\end{theorem}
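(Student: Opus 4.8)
The statement is Lyons' extension theorem, and the plan is to follow his original inductive Riemann-sum construction, building the higher levels one at a time. Fix $n \ge \lfloor p\rfloor + 1$ and assume as inductive hypothesis that $\bX^{\le n-1}$ is already known to be multiplicative (it satisfies Chen's identity) and to have finite $p$-variation, witnessed by a control $\omega$ with $\|\bX^k_{s,t}\| \le \omega(s,t)^{k/p}/(k/p)!$ (in the normalisation of the neo-classical inequality) for all $1 \le k \le n-1$; the base case $n-1 = \lfloor p\rfloor$ is exactly the hypothesis $\bX^{\le\lfloor p\rfloor} \in G^{\lfloor p\rfloor}(\R^d)$. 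For a partition $D = \{s = u_0 < \cdots < u_m = t\}$ of $[s,t]$ I would set
\[
\bX^{n,D}_{s,t} := \pi_n\big( \bX_{u_0,u_1} \otimes \bX_{u_1,u_2} \otimes \cdots \otimes \bX_{u_{m-1},u_m} \big),
\]
where each factor is viewed in $T^{(n)}(\R^d)$ with vanishing degree-$n$ component and $\pi_n$ projects onto level $n$; this is the candidate for $\bX^n_{s,t}$.

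First I would show that $\bX^{n,D}_{s,t}$ converges as the mesh of $D$ shrinks. The engine is an almost-multiplicativity estimate: deleting one interior point $u_j$ from $D$ changes $\bX^{n,D}_{s,t}$ only through the mismatch between $\bX_{u_{j-1},u_j}\otimes\bX_{u_j,u_{j+1}}$ and a single factor on $[u_{j-1},u_{j+1}]$, and this mismatch is controlled at level $n$ by $\omega(u_{j-1},u_{j+1})^{n/p}$ up to a combinatorial constant. Summing over deleted points and using super-additivity of $\omega$ together with $n/p > 1$ shows that coarsening any partition changes the value by at most $C\,(\mathrm{mesh})^{n/p-1}\,\omega(s,t)$, so $(\bX^{n,D}_{s,t})_D$ is Cauchy and $\bX^n_{s,t}$ is defined as its limit. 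The delicate point, and the main obstacle, is controlling the combinatorial constants as $n$ grows: a crude binomial bound would blow up, and it is precisely the neo-classical inequality (a sharpened multinomial estimate) that keeps the factorial normalisation $1/(k/p)!$ stable under the step from level $n-1$ to level $n$, thereby delivering simultaneously the convergence and the bound $\|\bX^n_{s,t}\| \le \omega(s,t)^{n/p}/(n/p)!$, i.e. finite $p$-variation at level $n$.

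Next I would check that the extended functional $\bX^{\le n}$ is again multiplicative. For $u \in (s,t)$ and partitions $D_1, D_2$ of $[s,u]$ and $[u,t]$, associativity of $\otimes$ gives $\bX^{n, D_1 \cup D_2}_{s,t}$ equal to the level-$n$ part of $\bX^{\le n, D_1}_{s,u}\otimes \bX^{\le n, D_2}_{u,t}$; letting the meshes go to zero and using the convergence already established at all levels $\le n$ yields Chen's identity $\bX^n_{s,t} = \sum_{j=0}^{n} \bX^j_{s,u}\otimes\bX^{n-j}_{u,t}$. This closes the induction and produces $\bX_{s,t} = (1,\bX^1_{s,t},\dots) \in T((\R^d))$ of finite $p$-variation.

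For uniqueness, suppose $\bY_{s,t}$ is another extension and let $n \ge \lfloor p\rfloor+1$ be minimal with $\bX^n \ne \bY^n$, so the two agree below level $n$. Subtracting Chen's identity for $\bX$ from that for $\bY$, the lower-level terms cancel and $A_{s,t} := \bX^n_{s,t} - \bY^n_{s,t}$ is additive, $A_{s,t} = A_{s,u} + A_{u,t}$; but $A$ also inherits a bound $\|A_{s,t}\| \le C\,\omega(s,t)^{n/p}$ with $n/p > 1$, and refining an arbitrary partition of $[s,t]$ gives $\|A_{s,t}\| \le C\,(\mathrm{mesh})^{n/p-1}\,\omega(s,t) \to 0$, so $A \equiv 0$, a contradiction. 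Hence $\bX = \bY$, giving uniqueness of the extension. (One could instead obtain existence by approximating $\bX^{\le\lfloor p\rfloor}$ in the $d_{p\text{-var}}$ metric by signatures of smooth paths, whose full iterated integrals are the obvious extensions, and then passing to the limit — but this still needs the same neo-classical estimates to show the extensions converge, so the direct Riemann-sum construction is preferable as it yields the bounds directly.)
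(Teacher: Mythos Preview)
The paper does not actually prove this statement: it is stated in the appendix as a cited preliminary result from \cite{Lyons2007DifferentialPaths}, Theorem 3.7, with no proof supplied. Your proposal is a faithful outline of Lyons' original argument --- the inductive Riemann-sum (Young-type) construction of the higher levels, control of the constants via the neo-classical inequality, verification of Chen's identity by associativity, and uniqueness via the additivity-plus-superlinear-bound argument --- and as such is correct in approach and in the identification of the key technical point (the neo-classical inequality preventing blow-up of the factorial constants). Since the paper itself offers no proof, there is nothing to compare against beyond noting that your sketch matches the source it cites.
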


\begin{remark}
    The signature of a (piecewise) smooth path $X_{0,T} \in \mathcal{X}^d_{0,T}$ is itself a geometric rough path, namely $\bX_{0,T}^{<\infty} \in G^{\lfloor 1 \rfloor}(\R^d)$.
\end{remark}

\begin{lemma}{(Shuffle product property).}\label{eq:shuffle_product_property}
    Let $\bX^{< \infty}_{0,T} \in G^{[p]}(\R^d)$ be a geometric rough path and let $\ell_1, \ell_2 \in T((\R^d)^*)$ be elements of the dual space of the tensor algebra, then
    \begin{align*}
        \langle \ell_1, \bX^{< \infty}_{0,T} \rangle \langle \ell_2, \bX^{< \infty}_{0,T} \rangle = \langle \ell_1 \shuffle \ell_2 , \bX^{< \infty}_{0,T} \rangle
    \end{align*}
\end{lemma}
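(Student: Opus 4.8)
The plan is to reduce the identity to single words and then run an induction on total word length, at which point the claim becomes the ordinary integration-by-parts rule for Stieltjes integrals. Since $\langle\,\cdot\,,\bX^{<\infty}_{0,T}\rangle$ is linear in its first slot and the shuffle product is bilinear (Definition~\ref{defn:shuffle_product}), both sides of the asserted equality are bilinear in $(\ell_1,\ell_2)$; it therefore suffices to establish $\langle\bluebf{w},\bX^{<\infty}_{0,T}\rangle\,\langle\bluebf{v},\bX^{<\infty}_{0,T}\rangle=\langle\bluebf{w}\shuffle\bluebf{v},\bX^{<\infty}_{0,T}\rangle$ for all words $\bluebf{w},\bluebf{v}\in\W(A_d)$.

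First I would treat the bounded-variation case $X\in C^{1-var}([0,T];\R^d)$, where each pairing $\langle\bluebf{u},\bX_{0,t}\rangle$ is a genuine iterated Riemann--Stieltjes integral over the ordered simplex, the curve $t\mapsto\langle\bluebf{u},\bX_{0,t}\rangle$ is continuous of bounded variation, $d\langle\bluebf{u}\bluebf{a},\bX_{0,t}\rangle=\langle\bluebf{u},\bX_{0,t}\rangle\,dX^{\bluebf{a}}_t$ for any word $\bluebf{u}$ and letter $\bluebf{a}$, and dually $\int_0^T\langle\bluebf{u},\bX_{0,t}\rangle\,dX^{\bluebf{a}}_t=\langle\bluebf{u}\bluebf{a},\bX_{0,T}\rangle$ (extended by linearity to formal sums of words). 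Arguing by induction on $|\bluebf{w}|+|\bluebf{v}|$: the base case with $\bluebf{w}=\bluebf{\emptyset}$ or $\bluebf{v}=\bluebf{\emptyset}$ is immediate since $\langle\bluebf{\emptyset},\bX\rangle=1$; for the step, write $\bluebf{w}=\bluebf{u}\bluebf{a}$ and $\bluebf{v}=\bluebf{u}'\bluebf{b}$ with $\bluebf{a},\bluebf{b}$ letters, apply the product rule to $t\mapsto\langle\bluebf{w},\bX_{0,t}\rangle\langle\bluebf{v},\bX_{0,t}\rangle$ and integrate over $[0,T]$, the boundary term vanishing because $\bX_{0,0}$ is the unit. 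This produces $\int_0^T\langle\bluebf{u},\bX_{0,t}\rangle\langle\bluebf{v},\bX_{0,t}\rangle\,dX^{\bluebf{a}}_t+\int_0^T\langle\bluebf{w},\bX_{0,t}\rangle\langle\bluebf{u}',\bX_{0,t}\rangle\,dX^{\bluebf{b}}_t$; invoking the induction hypothesis on the strictly shorter pairs $(\bluebf{u},\bluebf{v})$ and $(\bluebf{w},\bluebf{u}')$ and then the concatenation identity above turns this into $\langle(\bluebf{u}\shuffle\bluebf{v})\bluebf{a}+(\bluebf{w}\shuffle\bluebf{u}')\bluebf{b},\bX_{0,T}\rangle$, which equals $\langle\bluebf{w}\shuffle\bluebf{v},\bX_{0,T}\rangle$ by the recursive definition of $\shuffle$.

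Finally, a general geometric $p$-rough path $\bX^{<\infty}_{0,T}\in G^{[p]}(\R^d)$ is, by Definition~\ref{defn:geom_rough_path}, a $d_{p-\textup{var}}$-limit of signatures of smooth paths; since for fixed $\bluebf{w},\bluebf{v}$ the maps $\bX\mapsto\langle\bluebf{w},\bX\rangle\langle\bluebf{v},\bX\rangle$ and $\bX\mapsto\langle\bluebf{w}\shuffle\bluebf{v},\bX\rangle$ depend only on finitely many signature levels and are continuous in the $p$-variation metric, the identity passes to the limit, giving the result for all $\bX^{<\infty}_{0,T}\in G^{[p]}(\R^d)$. I expect the inductive step to be the only delicate point: one must match the recursion generated by the Stieltjes product rule with the recursion defining the shuffle product. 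Once the decomposition $\bluebf{w}=\bluebf{u}\bluebf{a}$, $\bluebf{v}=\bluebf{u}'\bluebf{b}$ is fixed, these two recursions agree term by term, so no explicit enumeration of interleavings is ever required and the combinatorial content of the lemma is absorbed entirely into the induction.
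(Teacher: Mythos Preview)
Your argument is correct and is the standard proof of the shuffle identity for iterated integrals: reduce by bilinearity to single words, run induction on total length via the Stieltjes product rule, and then pass to geometric rough paths by the density clause in Definition~\ref{defn:geom_rough_path}. The paper does not actually supply a proof of this lemma; it is stated in the preliminaries as a known fact (with a remark that it is used in Section~\ref{sec:original}), so there is no in-paper argument to compare against, and your self-contained derivation is entirely appropriate.
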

\begin{remark}
    The \textit{shuffle product property} is heavily used in deriving an explicit representation of the variance of PnL in Section \ref{sec:original}.
\end{remark}
\begin{proposition}{(Signature is point-seperating, ( \cite{Hambly2010UniquenessGroup}, \cite{Boedihardjo2016TheUniqueness} )).}
    Let $X: [0,T] \to \R^d$, then its signature $\bX_{0,T}^{<\infty} \in G^{[p]}(\R^d)$ is unique up to tree like equivalence and translation.
\end{proposition}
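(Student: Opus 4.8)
This is the uniqueness-of-signature theorem, a genuinely deep result, so the plan is to reduce it to a statement about tree-like paths and then invoke the hyperbolic-development argument of Hambly--Lyons (refined to rough paths by Boedihardjo--Geng--Lyons--Yang). First I would dispose of translation invariance: since $\bX^{<\infty}_{0,T}$ depends on $X$ only through its increments $X_t - X_s$, one may assume $X_0 = 0$ throughout, and two paths differing by a translation have the same signature automatically. Next, using Chen's identity (the signature is a group-like element of $T((\R^d))$, multiplicative under concatenation), two paths $X, Y$ with $X_0 = Y_0 = 0$ satisfy $\bX^{<\infty}_{0,T} = \mathbf{Y}^{<\infty}_{0,T}$ if and only if the concatenation $X \ast \overleftarrow{Y}$ of $X$ with the time-reversal of $Y$ has trivial signature $(1, 0, 0, \dots)$. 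Thus the whole proposition reduces to the single assertion: \emph{a geometric rough path with trivial signature is tree-like}, i.e.\ contractible to a point by successively cancelling back-tracking excursions.

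To prove that reduced statement I would develop $X$ into hyperbolic space. For each parameter $\lambda > 0$, solve the Cartan development equation driven by $X$ in the model space $\HH^d$ of constant curvature $-\lambda^2$ — a linear ODE in the bounded-variation case, a rough differential equation in general — whose terminal point is an explicit analytic function of $\bX^{<\infty}_{0,T}$, obtained by substituting the generators of $\mathfrak{so}(d,1)$ into the signature series. Consequently, if $\bX^{<\infty}_{0,T}$ is trivial then this development is a \emph{loop} in $\HH^d$ for \emph{every} curvature $-\lambda^2$, while its length is controlled uniformly by the $1$-variation (resp.\ $p$-variation) of $X$.

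The geometric heart of the argument is then a quantitative ``thinness'' estimate in negative curvature: a loop of bounded length in a space of curvature $-\lambda^2$ must, for $\lambda$ large, stay within distance $O(1/\lambda)$ of a finite tree, because geodesic triangles in such a space are $O(1/\lambda)$-thin and a loop of bounded length can only close up in strongly negative curvature by near-exact back-tracking. Applying this to the developments of $X$ and letting $\lambda \to \infty$ forces $X$ itself to be tree-like; combined with the reduction of the first paragraph, this gives point-separation up to tree-like equivalence and translation. (An alternative, more elementary route handles piecewise-linear non-self-intersecting loops via holonomy of a suitable connection and then reduces the general case by tree-reduction, but the hyperbolic-development proof is cleaner and extends to rough paths.)

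The main obstacle — and the reason this is a theorem rather than a routine computation — is making the curvature estimate uniform in $\lambda$ and pushing the entire development argument down to geometric $p$-rough path regularity, where the development equation must be interpreted as an RDE and the length and thinness bounds must survive the $\lambda \to \infty$ limit. This is precisely the content of Boedihardjo--Geng--Lyons--Yang; in the bounded-variation case it is Hambly--Lyons, where the only substantial work is the hyperbolic comparison lemma, the Chen-identity reduction and the development being essentially formal.
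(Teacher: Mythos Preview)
Your sketch is a faithful outline of the Hambly--Lyons hyperbolic-development argument and its rough-path extension by Boedihardjo--Geng--Lyons--Yang, and there is no mathematical gap in it as a high-level roadmap. However, the paper does not supply \emph{any} proof of this proposition: it is stated in the appendix purely as a cited background result, with the two references in the heading serving as the proof, and the text then moves directly to the corollary about time-augmented paths. So there is nothing to compare your approach against --- the paper simply defers to the literature, and your proposal is precisely a summary of what that literature contains.
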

\begin{corollary} \label{eq:add_time}
    Let $\hat{X} : [0,T] \to \R^{d+1}$ be the associated add-time process of $X$. Then its signature $\hat{\bX}_{0,T}^{< \infty}$ uniquely determines $X$ up to translations.
\end{corollary}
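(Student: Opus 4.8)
The plan is to reduce the statement to the point-separating property of the signature (the Proposition just above), using the observation that the add-time path $\hat X = (t,X_t)_{t\in[0,T]}$ is \emph{injective}: distinct parameters give distinct points because their time channels differ. An injective bounded-variation path is its own tree-reduced representative, so its entire tree-like equivalence class consists of nothing but its time-reparameterisations.

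Concretely, first I would apply the Proposition to the $(d+1)$-dimensional path $\hat X$: any path $Y:[0,T]\to\R^{d+1}$ whose signature equals $\hat{\bX}^{<\infty}_{0,T}$ is tree-like equivalent to a translate of $\hat X$. It then remains to (i) upgrade ``tree-like equivalent'' to ``a time-reparameterisation of'' for this particular path, using injectivity, and (ii) check that recovering $X$ from $\hat X$ known up to reparameterisation and translation leaves only the translation ambiguity in $X$.

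For (i) I would invoke the structure theory of tree-like paths (Hambly--Lyons, together with its $p$-variation refinement): each equivalence class contains a unique loop-free reduced representative, every other member arising from it by time-reparameterisation and the insertion of tree-like excursions, and an injective path admits no such excursion to collapse, so it already \emph{is} the reduced representative. Hence $Y$ must be, up to translation, a time-reparameterisation $Y_r = (a+\varphi(r),\, b + X_{\varphi(r)})$ of $\hat X$, with $(a,b)\in\R\times\R^d$ constant and $\varphi:[0,T]\to[0,T]$ a non-decreasing surjection. For (ii) I would then read the last $d$ coordinates of $Y$ off as a function of the \emph{value} $u = a+\varphi(r)$ of the first coordinate, obtaining $u \mapsto b + X_{u-a}$ on $[a,a+T]$ --- exactly $X$ shifted in time by $a$ and in space by $b$, with $\varphi$ cancelling. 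So $\hat{\bX}^{<\infty}_{0,T}$ pins $X$ down up to translation.

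I expect step (i) to be the only real obstacle, since it requires the tree-reduction structure theory rather than merely point-separation; the injectivity of $\hat X$ is what makes that step painless, and step (ii) is then just bookkeeping. It is worth noting explicitly that the time channel is precisely what kills the reparameterisation freedom when passing back from $\hat X$ to $X$, so the remaining ambiguity is genuinely only the translation, as stated.
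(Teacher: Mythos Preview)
Your argument is correct and is exactly the standard justification for this fact. Note, however, that the paper does not actually give a proof of this corollary: it is stated immediately after the point-separating Proposition and left to the reader, the intended inference being precisely the one you spell out---that the monotone time channel makes $\hat X$ injective, hence already tree-reduced, so the tree-like ambiguity in the Proposition collapses and only translation remains. Your write-up is therefore more detailed than the paper's (which contains none), but follows the same line; the only superfluous generality is allowing a time shift $a$, since for two add-time processes on $[0,T]$ the first coordinate forces $a=0$ and $\varphi = \mathrm{id}$ directly.
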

\begin{definition}{(Probability measure on the space of paths).}
        Let $(\Omega, \F, (\F_t)_{t \in [0,T]}, \P)$ be a filtered probability space such that $\F=(\F_t)_{t \in [0,T]}$ is the filtration generated by the non-negative $\R^d$-valued stochastic process $X=(X_t)_{t \in [0,T]}$. We say its path trajectories $X_{0,T}$ are sampled under the probability measure $\P$ and we denote $\mathcal{P}(\X^d_{0,T})$ as the set of all such probability measures.
\end{definition}
\begin{definition}{(Expected Signature).}
    Let $X$ be defined as previous. Then for any such $\R^d$-valued random path $X_{0,T}$, we can see that taking its signature $\bX_{0,T}$ is also a random variable under $\P \in \mathcal{P}(\X^d_{0,T})$ and so hence we can define the notion of an \textit{expected signature} under $\P$ at each order as
    $$
    \Ex^{\P} \left[ \bX_{0,T}^n \right] \coloneqq \Ex^{\P} \left[ \text{   } \idotsint\displaylimits_{s < u_1 < \dots < u_k < t} dX_{u_1} \otimes \dots \otimes dX_{u_k} \right] \in (\R^d)^{\otimes n} 
    $$
    where we have
    $$
    \Ex^{\P} \left[ \bX^{< \infty}_{0,T} \right] \coloneqq (1, \Ex^{\P}[\bX_{0,T}^1] , \dots, \Ex ^{\P}[ \bX_{0,T}^n ], \dots ) \in T((\R^d)).
    $$
    
    \noindent The map $\P \mapsto \Ex^{\P} \left[ \bX_{0,T}^{< \infty} \right] \in T((\R^d))$, which maps the probability measure $\P$ to its expected truncated signature, is injective (\cite{Chevyrev2022SignatureProcesses}).
\end{definition}
\clearpage
\begin{remark}
    The \textit{expected signature} allows us to systematically characterize the empirical probability measure on the streams in a model-free sense. The expected signature of paths $X \sim \P$, i.e $\Ex^\P [ \bX^{< \infty}_{0,T} ]$ can be thought of as the moment generating function of a path-valued random variables $X$. 
    We must also note, however, that the assumption that the expected signature even exists at all is a strong assumption for all levels of the signature $N \in \N$. For example, the authors in \cite{Bayer2021OptimalSignatures} highlight that this rules out many stochastic volatility models, such as the Heston model.
\end{remark}

\begin{lemma}{(Factorial Decay).}\label{eq:factorial_decay} \par
The reason we are able to work with the truncated signature with sufficient confidence is due to the factorial decay of the terms of the signature. Let $ X \in \X^d_{0,T} \subset C^{1-var}([0,T],\mathbb{R}^m)$ and $[s,t] \subset [0,T].$ Then $ \forall n \geq 1$ 
\begin{align*}
    \lVert \bX^n_{s,t} \rVert = \left\lVert \quad \idotsint\displaylimits_{t_0 < u_1 < \dots < u_n < t} dX_{u_1} \otimes \dots \otimes dX_{u_n} \right\rVert \leq \frac{(\lVert X \rVert_{1,[s,t]})^n}{n!}
\end{align*}
\end{lemma}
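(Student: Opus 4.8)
The plan is to prove the estimate by induction on $n$, exploiting the recursive structure of the iterated integrals defining $\bX^n$ together with the admissibility of the tensor norm, i.e.\ $\lVert a \otimes b \rVert \le \lVert a \rVert\, \lVert b \rVert$ for $a \in (\R^d)^{\otimes j}$ and $b \in (\R^d)^{\otimes k}$. Write $\omega(s,u) := \lVert X \rVert_{1,[s,u]}$ for the $1$-variation of $X$ over $[s,u]$. Since $X \in C^{1-var}$, the function $u \mapsto \omega(s,u)$ is non-decreasing and continuous with $\omega(s,s) = 0$, and $\omega$ is super-additive, so it plays the role of a control function. The single analytic input I will use repeatedly is the Riemann--Stieltjes bound $\bigl\lVert \int_s^t g_u \, dX_u \bigr\rVert \le \int_s^t \lVert g_u \rVert \, d\omega(s,u)$, valid for continuous integrands $g$ of bounded variation and obtained by passing to the limit in Riemann sums.

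For the base case $n = 1$ we have $\bX^1_{s,t} = X_t - X_s$, so $\lVert \bX^1_{s,t} \rVert \le \omega(s,t) = \omega(s,t)^1/1!$. For the inductive step, suppose the bound holds at level $n - 1$ over every subinterval. Fubini on the simplex $\{s < u_1 < \dots < u_n < t\}$ gives the identity $\bX^n_{s,t} = \int_s^t \bX^{n-1}_{s,u} \otimes dX_u$. Combining the Riemann--Stieltjes estimate, admissibility of the norm, and the inductive hypothesis yields
\[
\lVert \bX^n_{s,t} \rVert \;\le\; \int_s^t \lVert \bX^{n-1}_{s,u} \rVert \, d\omega(s,u) \;\le\; \int_s^t \frac{\omega(s,u)^{n-1}}{(n-1)!} \, d\omega(s,u).
\]
Because $u \mapsto \omega(s,u)$ is continuous and non-decreasing with $\omega(s,s) = 0$, a change of variables (equivalently, the fundamental theorem of calculus applied to $u \mapsto \omega(s,u)^n / n$) gives $\int_s^t \omega(s,u)^{n-1} \, d\omega(s,u) = \omega(s,t)^n/n$, hence $\lVert \bX^n_{s,t} \rVert \le \omega(s,t)^n/n!$, which closes the induction.

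I expect the delicate points to be purely technical rather than conceptual: carefully justifying the Riemann--Stieltjes inequality and the change-of-variables identity for a general $1$-variation control $\omega(s,\cdot)$ (which need not be absolutely continuous), and pinning down admissibility for the specific tensor norm in use. For piecewise smooth or bounded-variation paths all of this reduces to elementary estimates on Riemann sums, so there is no genuine obstacle. An alternative route, if one wishes to sidestep Stieltjes technicalities, is to first prove the bound for piecewise linear paths by direct computation and then pass to the limit using the $p$-variation continuity of the signature afforded by the Extension Theorem; I would nonetheless favour the direct inductive argument sketched above for its transparency.
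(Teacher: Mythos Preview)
Your inductive argument is correct and is precisely the standard proof of factorial decay for bounded-variation paths. Note, however, that the paper does not actually supply a proof of this lemma: it is stated in the appendix as a background result without any accompanying argument or reference. So there is nothing in the paper to compare against; your write-up simply fills in the omitted details, and the route you take (induction via the recursion $\bX^n_{s,t} = \int_s^t \bX^{n-1}_{s,u} \otimes dX_u$, the Riemann--Stieltjes bound against the control $\omega(s,\cdot)=\lVert X\rVert_{1,[s,\cdot]}$, and the change-of-variables identity $\int_s^t \omega(s,u)^{n-1}\,d\omega(s,u)=\omega(s,t)^n/n$) is exactly the classical one found in, e.g., Lyons--Caruana--L\'evy.
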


\begin{theorem}{(Universal Approximation, \cite{Levin2013LearningSystem}, Theorem 3.1).} \label{eq:universal_approx} 
Let $K \subset C^{1-var}([0,T],\mathbb{R}^d)$\footnote{$K$ should be a subset of tree-reduced paths. However, as all of the paths we are concerned with are tree-reduced, this is not an issue.} be a compact subset of paths. For any $\phi \in C(K,\R)$, and for every $\epsilon>0$, there exists a linear functional  $\ell \in T((\R^d))^*$ such that
\begin{align*}
\sup_{X \in K} \lVert \phi(X) - \langle \ell, \bX^{< \infty}_{0,T} \rangle \rVert < \epsilon 
\end{align*}
where the choice of suitable candidate topology is discussed in \cite{Cass2022TopologiesSpace}.
\end{theorem}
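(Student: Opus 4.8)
The plan is to deduce this from the Stone--Weierstrass theorem applied to the algebra of ``signature polynomials''. Set
\[
\mathcal{A} := \left\{\, X \mapsto \langle \ell, \bX^{<\infty}_{0,T} \rangle \ : \ \ell \in T((\R^d)^*) \text{ with finitely many nonzero coordinates} \,\right\},
\]
a subset of $C(K,\R)$ whose elements are finite linear combinations of the coordinate functions $X \mapsto \bX^{\bluebf{w}}_{0,T}$, $\bluebf{w} \in \W(A_d)$. First I would check that $\mathcal{A} \subseteq C(K,\R)$: each such function is well defined because the pairing is then a finite sum (the factorial decay of Lemma~\ref{eq:factorial_decay} shows more generally that the coordinates are finite), and it is continuous because each coordinate map $X \mapsto \bX^{\bluebf{w}}_{0,T}$ is continuous on $K$ in the ambient path topology. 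This continuity, together with the choice of topology that makes $K$ a genuine compact set, is the only analytic input, and is where I would invoke the discussion in \cite{Levin2013LearningSystem} and \cite{Cass2022TopologiesSpace}.

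Next I would verify the three hypotheses of Stone--Weierstrass. (i) $\mathcal{A}$ is a subalgebra: it is visibly a linear subspace, and it is closed under pointwise products by the shuffle-product property (Lemma~\ref{eq:shuffle_product_property}), since $\langle \ell_1, \bX^{<\infty}_{0,T}\rangle \langle \ell_2, \bX^{<\infty}_{0,T}\rangle = \langle \ell_1 \shuffle \ell_2, \bX^{<\infty}_{0,T}\rangle$ and $\ell_1 \shuffle \ell_2$ is again finitely supported. (ii) $\mathcal{A}$ contains the constants: the empty word $\bluebf{\emptyset}$ gives $\langle \bluebf{\emptyset}, \bX^{<\infty}_{0,T}\rangle = \bX^{\bluebf{\emptyset}}_{0,T} = 1$ for every path, so $\mathcal{A}$ contains all constant functions. (iii) $\mathcal{A}$ separates points of $K$: given $X \neq Y$ in $K$, because $K$ consists of tree-reduced paths (the footnote; equivalently one may pass to the add-time lift, Corollary~\ref{eq:add_time}) the signature is point-separating (\cite{Hambly2010UniquenessGroup}, \cite{Boedihardjo2016TheUniqueness}), so $\bX^{<\infty}_{0,T} \neq \bY^{<\infty}_{0,T}$; hence some coordinate differs, $\bX^{\bluebf{w}}_{0,T} \neq \bY^{\bluebf{w}}_{0,T}$, and the word $\bluebf{w}$, viewed as a functional, separates $X$ and $Y$.

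Since $K$ is compact Hausdorff and $\mathcal{A}$ is a point-separating subalgebra of $C(K,\R)$ containing the constants, Stone--Weierstrass gives that $\mathcal{A}$ is uniformly dense in $C(K,\R)$. Thus for every $\phi \in C(K,\R)$ and $\epsilon > 0$ there is a finitely supported $\ell \in T((\R^d)^*)$ with $\sup_{X \in K}\lVert \phi(X) - \langle \ell, \bX^{<\infty}_{0,T}\rangle\rVert < \epsilon$, which is the assertion.

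The main obstacle is not the algebra --- the shuffle identity does the real work --- but the topological setup: in $1$-variation the natural path topology is strong and compact sets are correspondingly small, so one must specify carefully with respect to which topology $C(K,\R)$ and the continuity of the signature coordinates are taken. Reconciling these requirements (together with the tree-reducedness needed for point separation) is exactly the issue flagged by the footnote and by \cite{Cass2022TopologiesSpace}, and it is there that a fully rigorous proof would concentrate its effort.
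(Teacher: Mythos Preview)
Your Stone--Weierstrass argument is correct and is in fact the standard proof of this result; however, the paper itself does not prove this theorem at all --- it is stated as a quoted result from \cite{Levin2013LearningSystem} (Theorem~3.1) and used as a black box (for instance in the one-line proof of Lemma~\ref{thm:sig_approx}). So there is no ``paper's own proof'' to compare against. Your three verifications (subalgebra via the shuffle identity of Lemma~\ref{eq:shuffle_product_property}, constants via the empty word, point-separation via signature uniqueness on tree-reduced paths) are exactly the ingredients used in the cited literature, and your closing caveat about the topology on $K$ is precisely the subtlety the paper defers to \cite{Cass2022TopologiesSpace}.
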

\begin{remark}
    If we imagine that the optimal adapted dynamic trading strategy $\xi_t$ is simply a function of the past path, i.e $\xi_t: X_{0,t} \mapsto \phi(X_{0,t})$ for some non-linear $\phi$, then Theorem \ref{eq:universal_approx} allows us to approximate the non-linear $\phi$ by $\phi(X_{0,t}) \approx \langle \ell , \bX^{<\infty}_{0,t} \rangle$ where  $\ell$ is linear. This will be made more precise in Section~\ref{sec:market_model}.
\end{remark}
    \noindent \textbf{Notation:}  Throughout, we denote the time-augmented process by $\hat{X}_t=(t, X_t), t \in [0,T]$, the signature of $\hat{X}_{0,T}$ by $\hat{\bX}^{<\infty}_{0,T}$ and the signature of the lead-lag process $\hat{\bX}^{LL,<\infty}_{0,T}$.

\begin{example}{(Linear functional on the signature).} \label{eq:example_sigs} 
Let $X=(X^{1}, \dots, X^{d}) : [0,T] \to \R^d$ be a $d$-dimensional path. We previously made the identification that any linear functional on the tensor algebra can be identified via elements in the space of all words. Take for example:
\begin{enumerate}
    \item Consider any signature term at the second level of the $i$-th and $j$-th component of $X$. 
    \begin{align*}
        \bX^{\bluebf{ij}}_{0,T} = \iint\displaylimits_{0<u_i < u_j <T}  \circ  dX_{u_i}^i  \circ dX_{u_j}^j = \int^T_0 (X_t^i - X_0^i)  \circ  dX_t^j = \langle  \bluebf{ij}, \hat{\bX}^{<\infty}_{0,T} \rangle
    \end{align*}
    where we can think of $\bluebf{ij}$ as a word in the space $\W(A_d)$, which can be identified as a linear functional on the tensor algebra, as shown in \eqref{eq:word_functional}.
    \item Consider an arbitrary linear functional  $\ell \in T((\R^d)^*)$ on the signature, then its (Stratonovich) integral against the $i$-th component of the $d$-dimensional process $X$ is simply the linear functional concatenated with the letter $\bluebf{i}$, applied to the signature $\bX^{<\infty}_{0,T}$, such as:
    \vspace{-0.1cm}
    \begin{align*}
        \int^T_0 \langle \ell, \bX^{<\infty}_{0,t} \rangle  \circ  dX_t^i = \langle \ell \bluebf{i} , \bX^{<\infty}_{0,T} \rangle
    \end{align*}
    \vspace{0.05cm}
    \item Let
    $$
    \ell = \alpha_0 \bluebf{1} + \alpha_1 \bluebf{2} + \alpha_2 \bluebf{12}.
    $$
    When applied to the signature, we obtain
    $$
    \langle \ell, \bX^{<\infty}_{0,t} \rangle = \alpha_0 \bX^{\bluebf{1}}_{0,t} + \alpha_1 \bX^{\bluebf{2}}_{0,t} + \alpha_2 \bX^{\bluebf{12}}_{0,t}
    $$
    and if we consider the concatenation $\ell \bluebf{34}$, then we have
    $$
    \langle \ell \bluebf{34}, \bX^{<\infty}_{0,t} \rangle = \alpha_0 \bX^{\bluebf{134}}_{0,t} + \alpha_1 \bX^{\bluebf{234}}_{0,t} + \alpha_2 \bX^{\bluebf{1234}}_{0,t}.
    $$
\end{enumerate}
\end{example}

\section{Proofs}

\subsection{Proof of Theorem~\ref{thm:int_sig}} \label{sec:proof_of_pnl_thm}

\pnlthm*

\begin{proof}
    Let $\hat{\Z}^{\leq M}$ be the $M$-th order signature of the (time-augmented) market factor process $\hat{Z}$ and define $\hat{Y}_t = (\hat{Z}_t, \hat{Z}_t)$ and its $M$-th order truncated signature as $\hat{\mathbb{Y}}^{\leq M}_t$. In order to prove this theorem, we first state some important results that we shall use as tools throughout. We denote the quadratic co-variation of 2 processes at time $t$ as $[ \cdot , \cdot ]_t$.
    \begin{enumerate}
        \item $\int_0^T \langle l_m, \hat{\Z}^{\leq M}_{0,t} \rangle \circ dX_t^m = \langle l_m \bluebf{f}(m), \hat{\mathbb{Y}}^{\leq M+1}_{0,T} \rangle $
        \item $\int^T_0 Y dX = \int^T_0 Y \circ dX - \frac{1}{2} \left[ X,Y \right]$ for 2 stochastic processes $X, Y$.
        \item $\Z^{LL, \leq 2}_{0,T} = \hat{\mathbb{Y}}^{\leq 2}_{0,T} + \psi_{0,T}$ where
        $$
        \psi_{0,T} = 
        \begin{pmatrix}
            0 & - \frac{1}{2} [ \hat{Y} ]_{0,T}  \\
            \frac{1}{2} [ \hat{Y} ]_{0,T} & 0 \\
        \end{pmatrix}
        $$
        \item $\Z^{LL, \leq M}_{0,T} = \int^T_0 \hat{\Z}^{\leq M-1}_{0,t} \otimes d \hat{Z}_t$ 
        \item $\langle  \bluebf{m} ,  \hat{\Z}^{\leq M}_{0,t}  \rangle = \langle \bluebf{f}(m),  \hat{\mathbb{Y}}^{\leq M}_{0,t} \rangle$ 
        \item $[ \int^T_0 \xi dX, Y] = \int^T_0 \xi d[X,Y] $
    \end{enumerate}

    First, we observe that the trading strategy PnL can be decomposed asset-wise such that $V_T = \sum_{m=1}^d V_T^m$ where $V_T^m$ is the PnL of the trading strategy of the $m$-th asset. Hence, all that needs to be shown is that for arbitrary asset $m$, 
    \begin{align} \label{eq:pnl_by_asset}
    V_T^m = \int^T_0 \langle l_m, \hat{\Z}_{0,t}^{< \infty} \rangle dX_t^m = \langle l_m \bluebf{f}(m), \hat{\Z}^{LL,<\infty}_{0,T} \rangle
    \end{align}
    where the integral above is in the It\^{o} sense. We also wish to show that this result holds for any truncation level $M \geq 1$ and for any number of market factors, $N$. \par
    \noindent Let us fix truncation level $M \geq 1$. By (2) we can decompose the It\^{o} integral in \eqref{eq:pnl_by_asset} into a Stratonovich integral and a quadratic variation correction term, i.e 
    \begin{align}
    \notag \int^T_0 \langle l_m, \hat{\Z}_{0,t}^{\leq M} \rangle dX_t^m & = \int^T_0  \langle l_m, \hat{\Z}_{0,t}^{\leq M} \rangle \circ dX_t^m - \frac{1}{2} \left[  \langle l_m, \hat{\Z}_{0,\cdot}^{\leq M} \rangle , X^m \right]_T \\
     \notag & = \overbrace{\langle l_m \bluebf{f}(m), \hat{\mathbb{Y}}^{\leq M+1}_{0,T} \rangle }^{\textup{by } (1)} \quad \text{   } - \overbrace{\frac{1}{2} \left[ \left\langle l_m , \int^\cdot_0  \hat{\Z}_{0,t}^{\leq M-1} \otimes d \hat{Z}_t \right\rangle , X^m  \right]_T}^{\textup{by } (4)} \\ 
     \notag  & = \langle l_m \bluebf{f}(m), \hat{\mathbb{Y}}^{\leq M+1}_{0,T} \rangle  \quad \text{   } - \overbrace{ \frac{1}{2} 
     \left[ \left\langle l_m \bluebf{f}(m) , \int^\cdot_0  \hat{\mathbb{Y}}_{0,t}^{\leq M-1} \otimes d \hat{Y}_t \right\rangle , X^m \right]_T}^{\text{by } (5)} \\
     \notag  & = \langle l_m \bluebf{f}(m), \hat{\mathbb{Y}}^{\leq M+1}_{0,T} \rangle  \quad \text{   } - \frac{1}{2} 
     \left\langle l_m \bluebf{f}(m) , \left[  \int^\cdot_0  \hat{\mathbb{Y}}_{0,t}^{\leq M-1} \otimes d \hat{Y}_t , X^m \right]_T \right\rangle  \\
    \notag  & = \langle l_m \bluebf{f}(m), \hat{\mathbb{Y}}^{\leq M+1}_{0,T} \rangle  \quad \text{   } - \frac{1}{2} 
     \left\langle l_m \bluebf{f}(m) , \overbrace{\int^\cdot_0  \hat{\mathbb{Y}}_{0,t}^{\leq M-1}  d \left[ \hat{Y}_t , X^m \right]_T}^{\text{by } (6)} \right\rangle  \\
     \notag  & = \langle l_m \bluebf{f}(m), \hat{\mathbb{Y}}^{\leq M+1}_{0,T} \rangle  \quad \text{   } -  \frac{1}{2} \left\langle l_m \bluebf{f}(m), \int^T_0 \hat{\mathbb{Y}}_{0,T}^{\leq M-1} \otimes d [\hat{Y}]_t \right\rangle  \\
     & \label{eq:to_be_shown} = \left\langle l_m \bluebf{f}(m), \hat{\mathbb{Y}}^{\leq M+1}_{0,T} - \frac{1}{2} \int^T_0 \hat{\mathbb{Y}}_{0,t}^{\leq M-1} \otimes d [\hat{Y}]_t \right\rangle.
    \end{align}
    Hence, what remains to be shown is that the RHS of \eqref{eq:pnl_by_asset} is equal to \eqref{eq:to_be_shown}, i.e that
    \begin{align} \label{eq:to_be_shown2}
    \left\langle l_m \bluebf{f}(m), \hat{\Z}^{LL,\leq M+1}_{0,T} \right\rangle = \left\langle l_m \bluebf{f}(m) , \hat{\mathbb{Y}}^{\leq M+1}_{0,T} - \frac{1}{2} \int^T_0 \hat{\mathbb{Y}}_{0,t}^{\leq M-1} \otimes d [\hat{Y}]_t \right\rangle
    \end{align}
    for any truncation level $M \geq 1$. For the case when $M=1$, we have that
    $$
    \hat{\Z}^{LL,\leq 2}_{0,T} = \hat{\mathbb{Y}}^{\leq 2}_{0,T} + \psi_{0,T}
    $$
    as defined in (3), which is proven in \cite{Flint2016DiscretelyProcess}, Theorem 4.1. Therefore, we can clearly see that
    \begin{align*}
        \left\langle l_m \bluebf{f}(m), \hat{\Z}^{LL,\leq 2}_{0,T} \right\rangle = & \left\langle l_m \bluebf{f}(m),  \hat{\mathbb{Y}}^{\leq 2}_{0,T} + \psi_{0,T} \right\rangle \\
        = & \left\langle l_m \bluebf{f}(m),  \hat{\mathbb{Y}}^{\leq 2}_{0,T} - \frac{1}{2} [\hat{Y}]_T  \right\rangle \\
        = & \left\langle l_m \bluebf{f}(m),  \hat{\mathbb{Y}}^{\leq 2}_{0,T} - \frac{1}{2} \int^T_0 1 d [\hat{Y}]_t \right\rangle \\
        = & \left\langle l_m \bluebf{f}(m),  \hat{\mathbb{Y}}^{\leq 2}_{0,T} - \frac{1}{2} \int^T_0 \hat{\mathbb{Y}}^{\leq 0}_{0,T} \otimes  d [\hat{Y}]_t \right\rangle 
    \end{align*}
    Hence, the statement \eqref{eq:to_be_shown2} holds for $M=1$. By the same proof seen in Lemma 3.2.11 in \cite{PerezArribas2020SignaturesFinance}, the result follows for all $M \geq 1$ via induction, taking $I = \bluebf{i_1 i_2 \dots i_k} \in \{1 , \dots, d \}^k$ for the multi-dimensional result. \par
    
    \noindent We note here that by setting $Z = (t, X, f)$, we lose no strength in this argument, since we are still only integrating against $X^m$, for which the result then follows through $\bluebf{f}(m)$ for $m={1,\dots,d}$ and so any extra remaining exogenous factors embedded in $\Z$ do not change the result. Likewise, since out portfolio PnL is defined asset-wise, the result holds for all assets $m$ and so the overall trading strategy PnL is defined as
    \begin{align*}
     V_T = \sum_{m = 1}^d \int^T_0 \langle l_m, \hat{\Z}_{0,s}^{< \infty} \rangle dX_s^m = \sum_{m = 1}^d \langle l_m \bluebf{f}(m), \hat{\Z}^{LL,<\infty}_{0,T} \rangle
    \end{align*}
 \end{proof}


\end{appendices}
\printbibliography[title={References}]

\end{document}